\definecolor{darkgreen}{RGB}{0, 192, 0}
\newcommand*{\jrule}[1]{\text{\textsc{\MakeLowercase{#1}}}}
\newcommand*{\irule}[1]{#1\jrule{I}}
\newcommand*{\erule}[1]{#1\jrule{E}}
\newcommand*{\ms}[1]{\mathsf{#1}}
\newcommand*{\mb}[1]{\mathbf{#1}}
\newcommand*{\ctx}{\Gamma}
\NewDocumentCommand{\ctxe}{s}{\IfBooleanT{#1}{(} \cdot \IfBooleanT{#1}{)}}
\newcommand*{\tensor}{\mathbin{\otimes}}
\newcommand*{\one}{\mathord{\mathbf{1}}}
\NewDocumentCommand{\plus}{s}{\IfBooleanTF{#1}{\plusn}{\mathbin{\oplus}}}
\NewDocumentCommand{\plusn}{o m}{\mathopen{\oplus}\lbrace #2 \rbrace\IfValueT{#1}{\sb{#1}}}
\newcommand*{\dn}{\mathopen{\downarrow}}
\newcommand*{\imp}{\mathbin{\rightarrow}}
\NewDocumentCommand{\with}{s}{\IfBooleanTF{#1}{\withn}{\mathbin{\binampersand}}}
\NewDocumentCommand{\withn}{o m}{\mathopen{\binampersand}\lbrace #2 \rbrace\IfValueT{#1}{\sb{#1}}}
\newcommand*{\up}{\mathopen{\uparrow}}
\newcommand*{\p}[1]{#1^{+}}
\newcommand*{\n}[1]{#1^{-}}
\newcommand*{\pair}[1]{\langle#1\rangle}
\newcommand*{\inj}[2]{#1\cdot #2}
\newcommand*{\thunk}[1]{\dn #1}
\newcommand*{\lam}[2]{\lambda #1.\,#2}
\newcommand*{\app}[2]{#1\,#2}
\NewDocumentCommand{\record}{o m}{\lbrace #2 \rbrace\IfValueT{#1}{\sb{#1}}}
\newcommand*{\proj}[2]{#2.#1}
\newcommand*{\return}[1]{\up #1}
\newcommand*{\letup}[3]{\mathsf{let}\;\up #1=#2\;\mathsf{in}\;#3}
\newcommand*{\letpair}[3]{\mathsf{let}\:\pair{#1}=#2\;\mathsf{in}\;#3}
\NewDocumentCommand{\force}{s m}{\mathsf{force}\;\IfBooleanT{#1}{(} #2 \IfBooleanT{#1}{)}}
\RenewDocumentCommand{\case}{o m m}{\mathsf{case}\;#2\;(\casepat#3\relax)\IfValueT{#1}{\sb{#1}}}
\def\casepat#1#2=>#3\relax{\inj{#1}{#2} \Rightarrow #3}
\newcommand*{\metaall}[1]{\forall (#1)\colon}
\newcommand*{\subst}[2]{[#1]#2}
\newcommand*{\term}{\;\mathsf{term}}
\newcommand*{\val}{\;\mathsf{value}}
\newcommand*{\reduces}{\mapsto}
\newcommand*{\mimp}{\mathbin{\supset}}
\newcommand*{\sof}[2]{#1 \in #2}
\newcommand*{\defd}{\triangleq}
\newcommand*{\emp}{\;\mathsf{empty}}
\newcommand*{\full}{\;\mathsf{full}}
\newcommand*{\rsub}[1]{\mathord{#1}\jrule{SUB}}
\newcommand*{\remp}[1]{\mathord{#1}\jrule{Emp}}
\newcommand*{\rfull}[1]{\mathord{#1}\jrule{Full}}
\newcommand*{\sig}{\Sigma}
\NewDocumentCommand{\sige}{s}{\IfBooleanT{#1}{(}\cdot\IfBooleanT{#1}{)}}
  \NewDocumentCommand{#1}{s m}{
    \IfBooleanT{##1}{(}
      \tl_if_empty:nTF{##2}{\mathord{-}}{##2}
    \IfBooleanT{##1}{)}^{#2}
  }
\renewcommand*{\return}[1]{\mathsf{return}\; #1}
\renewcommand*{\thunk}[1]{\mathsf{thunk}\; #1}
\renewcommand*{\letup}[3]{\mathsf{let}\;\mathsf{return}\;#1=#2\;\mathsf{in}\;#3}
\renewcommand*{\letpair}[3]{\mathsf{match}\;#2\;(\pair{#1} \Rightarrow #3)}
\RenewDocumentCommand{\case}{o m m}{\mathsf{match}\;#2\;(\casepat#3\relax)\IfValueT{#1}{\sb{#1}}}
\def\casepat#1#2=>#3\relax{\inj{#1}{#2} \Rightarrow #3}
\RenewDocumentCommand{\sof}{s o m m}{#3 \IfBooleanTF{#1}{\inhat}{\in}\IfValueT{#2}{\sb{#2}} #4}
\NewDocumentCommand{\nsof}{s o m m}{#3 \IfBooleanTF{#1}{\not\inhat}{\not\in}\IfValueT{#2}{\sb{#2}} #4}
\renewcommand*{\term}{\;\mathsf{terminal}}
\newcommand{\inhat}{\mathrel{\hat\in}}
\newcommand*{\ntype}{\;\n{\mathsf{type}}}
\newcommand*{\ptype}{\;\p{\mathsf{type}}}
\newcommand*{\vsig}{\;\mathsf{sig}}
\newcommand*{\pctx}{\;\mathsf{ctx}}
\newcommand*{\structural}{\;\mathsf{struct}}
\newcommand*{\cycle}[1]{\jrule{Cycle}{\scriptstyle (#1)}}
\newcommand*{\foldcase}[3]{\mathsf{match}\;#2\;(\mathsf{fold_{\mu}}\;{#1} \Rightarrow #3)}
\newcommand*{\unfoldfold}[1]{\mathsf{\unfold{(\mathsf{fold_{\nu}}\;{#1})}}}
\newcommand*{\foldmu}{\mathsf{fold_{\mu}}}
\newcommand*{\foldnu}{\mathsf{fold_{\nu}}}
\newcommand*{\unfold}{\mathsf{unfold}}
\renewcommand*{\unfold}[1]{\mathsf{unfold}\; #1}
\renewcommand*{\foldmu}[1]{\mathsf{fold_{\mu}}\; #1}
\renewcommand*{\foldnu}[1]{\mathsf{fold_{\nu}}\; #1}
\definecolor{darkred}{rgb}{0.55, 0.0, 0.0}
\definecolor{darkblue}{rgb}{0.0, 0.0, 0.55}
\newcommand*{\synths}{\;{\color{darkred}\Rightarrow}\;}
\newcommand*{\checks}{\;{\color{darkblue}\Leftarrow}\;}
\begin{document}

\title{Polarized Subtyping}

\author{Zeeshan Lakhani\inst{1}(\Letter) \and
 Ankush Das\inst{3} \and
 Henry DeYoung\inst{1} \and
 Andreia Mordido\inst{2} \and
 Frank Pfenning\inst{1}}

\authorrunning{Z. Lakhani et al.}
%
\institute{Carnegie Mellon University, Pittsburgh, PA, USA \\
\email{\{zlakhani,hdeyoung,fp\}@cs.cmu.edu}
\and
LASIGE, Faculdade de Ci\^encias, Universidade de Lisboa, Lisbon, Portugal
\email{afmordido@fc.ul.pt}\\
\and
Amazon, Cupertino, CA, USA\footnote{work performed prior to joining Amazon}
\email{daankus@amazon.com}}

\maketitle

\begin{abstract}
  Polarization of types in call-by-push-value naturally leads to the separation of
  inductively defined observable values (classified by positive types), and coinductively
  defined computations (classified by negative types), with adjoint modalities mediating
  between them.  Taking this separation as a starting point, we develop a semantic
  characterization of typing with step indexing to capture observation depth of recursive
  computations.  This semantics justifies a rich set of subtyping rules for an
  equirecursive variant of call-by-push-value, including variant and lazy records.  We
  further present a bidirectional syntactic typing system for both values and computations
  that elegantly and pragmatically circumvents difficulties of type inference in the
  presence of width and depth subtyping for variant and lazy records.  We demonstrate the
  flexibility of our system by systematically deriving related systems of subtyping for
  (a) isorecursive types, (b) call-by-name, and (c) call-by-value, all using a structural
  rather than a nominal interpretation of types.

  \keywords{Call-by-push-value \and Semantic Typing \and Subtyping}
\end{abstract}

\section{Introduction}
\label{sec:intro}

Subtyping is an important concept in programming languages because it simultaneously
allows more programs to be typed and more precise properties of programs to be expressed
as types.  The interaction of subtyping with parametric polymorphism
and recursive types is complex and despite a lot of progress and research, not
yet fully understood.

In this paper we study the interaction of subtyping with equirecursive types in
\emph{call-by-push-value}~\cite{Levy01phd,Levy06hosc}, which separates the language of
types into \emph{positive} and \emph{negative} layers.  This polarization elegantly
captures that positive types classifying observable values are \emph{inductive}, while
negative types classifying (possibly recursive) computations are \emph{coinductive}.  It
lends itself to a particularly simple \emph{semantic definition of typing} using a mixed
induction/coinduction~\cite{Amadio93toplas,Brandt98fi,Danielsson10mpc}.  From this
definition, we can immediately derive a form of \emph{semantic
subtyping}~\cite{Castagna05ppdp, Frisch02lics, Frisch08acm}.  Concretely, we realize the mixed
induction/coinduction via step-indexing and carry out our metatheory in Brotherston and
Simpson's system \textbf{CLKID$^\omega$} of circular proofs~\cite{Brotherston11jlc}.  This
includes a novel proof that syntactic versions of typing and subtyping are sound with
respect to our semantic definitions.  While we also conjecture that subtyping is precise
(in the sense of~\cite{Ligatti17toplas}), we postpone this more syntactic property to
future work.

Because our foundation is call-by-push-value, a paradigm that synthesizes call-by-name and
call-by-value based on the logical principle of polarization, we obtain several additional
results in relatively straightforward ways.  For example, both width and depth subtyping
for variant and lazy records are naturally included.  Furthermore, following Levy's
interpretation of call-by-value and call-by-name functional languages into
call-by-push-value, we extract subtyping relations and algorithms for these languages and
prove them sound and complete.  We also note that we can directly interpret the
\emph{isorecursive} types in Levy's original formulation of
call-by-push-value~\cite{Levy01phd}.

We further provide a systematic notion of bidirectional typing that avoids some
complexities that arise in a structural type system with variant and lazy records. The
resulting decision procedure for typing is quite precise and suggests clear locations for
noting failure of typechecking.  The combination of equirecursive call-by-push-value with
bidirectional typing achieves some of the goals of refinement
types~\cite{Davies05phd,Freeman91}, which fit a structural system inside a generative type
language.  Here we have considerably more freedom and less redundancy.  However, we do not
yet treat intersection types or polymorphism.

We summarize our main contributions:
\begin{enumerate}
\item A simple semantics for types and subtyping in call-by-push-value, interpreting
  positive types inductively and negative types coinductively, realized via step indexing
  (Sections~\ref{sec:semantic-typing} and~\ref{sec:subtyping})
\item A new decidable system of equirecursive subtyping for call-by-push-value including
  width and depth subtyping for variant and lazy records (Section~\ref{sec:subtyping})
\item A novel application of Brotherston and Simpson's system
  \textbf{CLKID$^\omega$}~\cite{Brotherston11jlc} of circular proofs to give a
  particularly elegant and flexible soundness proof for subtyping
  (Section~\ref{sec:type-sound})
\item A system of bidirectional typing that captures a straightforward and precise typechecking
  algorithm (Section~\ref{sec:bidirectional})
\item A simple interpretation of Levy's original isorecursive types for
  call-by-push-value~\cite{Levy01phd} into our equirecursive setting
  (Section~\ref{sec:isorecursive})
\item Subtyping rules for call-by-name and call-by-value, derived via Levy's translations
  of such languages into call-by-push-value (Section~\ref{sec:call-by-name-and-value})
\end{enumerate}
These are followed by a discussion of related work and a conclusion.
Additional material and proofs are provided in an \hyperref[app:type-examples]{appendix}.

\section{Equirecursive Call-by-Push-Value}
\label{sec:syntax}

Call-by-push-value~\cite{Levy01phd,Levy06hosc} is characterized by a
separation of types in \emph{positive} $\p{\tau}$ and \emph{negative}
$\n{\sigma}$ layers, with shift modalities going back and forth
between them.  The intuition is that positive types classify
\emph{observable values} $v$ while negative types classify
\emph{computations} $e$.
\begin{align*}
 \p{\tau}, \p{\sigma} &\Coloneqq \p{\tau_1} \tensor \p{\tau_2} \mid \one \mid \plus*[\ell \in L]{\ell\colon \p{\tau_\ell}} \mid \dn \n{\sigma} \mid \p{t} \\
  \n{\sigma}, \n{\tau} &\Coloneqq \p{\tau} \imp \n{\sigma} \mid \with*[\ell \in L]{\ell\colon \n{\sigma_\ell}} \mid \up \p{\tau} \mid \n{s}
\end{align*}
The usual binary product $\tau \times \sigma$ splits into two:
$\p{\tau} \tensor \p{\sigma}$ for eager, observable products
inhabited by pairs of values, and
$\with*[\ell \in L]{\ell \colon \n{\sigma_\ell}}$ for
lazy, unobservable records with a finite set $L$ of fields we can
project out.
Binary sums are also generalized to
variant record types $\plus*[\ell \in L]{\ell\colon \p{\tau_\ell}}$.\footnote{We borrow the notation $\oplus$ from linear
  logic even though no linearity is implied.}  These
are not just a programming
convenience but allow for richer subtyping: lazy and variant record types support both width and depth subtyping, whereas the usual binary products and sums support only the latter.
For example, width subtyping means that $\plus*{\mb{false}\colon \one}$ is a subtype of
$\p{\ms{bool}} = \plus*{\mb{false}\colon \one, \mb{true}\colon \one}$,
while $\one$ would not be a subtype of the usual binary $\one + \one$.
Neither is $\one$ a subtype of $\p{\ms{bool}}$, demonstrating the utility of variant record types with one label, such as $\plus*{\mb{false}\colon \one}$.
Similar examples exist for lazy record types. This way, we recover some of the benefits of
refinement types without the syntactic burden of a distinct refinement layer.

The shift $\dn \n{\sigma}$ is inhabited by an unevaluated computation of type $\n{\sigma}$
(a ``thunk'').  Conversely, the shift $\up \p{\tau}$ includes a value as a trivial
computation (a ``return'').  Levy~\cite{Levy01phd} writes $U\, \underline{B}$ instead of
$\dn \n{\sigma}$ and $F\, A$ instead of $\up \p{\tau}$.

Finally, we model recursive types not by explicit constructors
$\mu \p{\alpha}.\, \p{\tau}$ and $\nu \n{\alpha}.\, \n{\sigma}$ but by \emph{type names}
$\p{t}$ and $\n{s}$ which are defined in a global signature $\Sigma$.  They may mutually
refer to each other.  We treat these as \emph{equirecursive} (see
Section~\ref{sec:semantic-typing}) and we require them to be \emph{contractive}, which
means the right-hand side of a type definition cannot itself be a type name.
Since we would like to directly observe the values of positive types, the definitions of
type names $\p{t} = \p{\tau}$ are \emph{inductive}.  This allows inductive reasoning about
values returned by computations.  On the other hand, negative type definitions
$\n{s} = \n{\sigma}$ are recursive rather than coinductive in the usual sense, which would
require, for example, stream computations to be productive.  Because we do not wish to
restrict recursive computations to those that are productive in this sense, they are
``productive'' only in the sense that they satisfy a standard progress theorem.

Next, we come to the syntax for values $v$ of a positive type and computations $e$ of
a negative type.  Variables $x$ always stand for values and therefore have a positive type.
We use $j$ to stand for labels, naming fields of variant records or lazy records, where
$\inj{j}{v}$ injects value $v$ into a sum with alternative labeled $j$ and $\proj{j}{e}$
projects field $e$ out of a lazy record.  When we quantify over a (always finite) set of
labels we usually write $\ell$ as a metavariable for the labels.
\begin{align*}
  v &\Coloneqq x \mid \pair{v_1,v_2} \mid \pair{} \mid \inj{j}{v} \mid \thunk{e} \\
  e &\Coloneqq \begin{array}[t]{@{}l@{}}
                 \lam{x}{e} \mid \app{e}{v} \mid \record[\ell \in L]{\ell = e_\ell} \mid \proj{j}{e} \mid \return{v} \mid \letup{x}{e_1}{e_2} \mid f \\
                 \mathllap{\mid {}} \letpair{x,y}{v}{e} \mid \letpair{}{v}{e} \mid \case[\ell \in L]{v}{{\ell}{x_\ell} => e_\ell} \mid \force{v}
               \end{array} \\
  \Sigma &\Coloneqq \cdot \mid \Sigma, t^+ = \p{\tau} \mid \Sigma, s^- = \n{\sigma} \mid \Sigma, f \colon
           \n{\sigma} = e
\end{align*}
In order to represent recursion, we use equations $f = e$ in the
signature where $f$ is a defined \emph{expression name},which we distinguish from variables, and all
equations can mutually reference each other.  An alternative would
have been explicit fixed point expressions $\mathsf{fix}\, f.\,e$, but
this mildly complicates both typing and mutual recursion.  Also, it
seems more elegant to represent all forms of recursion at the level of
types and expressions in the same manner.  We also choose to fix a type
for each expression name in a signature.  Otherwise, each occurrence
of $f$ in an expression could potentially be assigned a different type,
which strays into the domain of parametric polymorphism and intersection types.

Following Levy, we do not allow names for values because this
would add an undesirable notion of computation to values, and,
furthermore, circular values would violate the inductive
interpretation of positive types.  As discussed in~\cite[Chapter 4]{Levy01phd},
they could be added back conservatively under some conditions.

\subsection{Dynamics}

For the operational semantics, we use a judgment $e \reduces e'$
defined inductively by the following rules which may reference a
global signature $\Sigma$ to look up the definitions of expression
names $f$.  In contrast, values do not reduce.
The dynamics of call-by-push-value are defined as follows:
\begin{mathpar}
  \infer{\app{(\lam{x}{e})}{v} \reduces \subst{v/x}{e}}{}
  \and
  \infer{\app{e}{v} \reduces \app{e'}{v}}
  {e \reduces e'}
  \and
  \infer{\letup{x}{\return{v}}{e_2} \reduces \subst{v/x}{e_2}}{}
  \\
  \infer{\letup{x}{e_1}{e_2} \reduces \letup{x}{e'_1}{e_2}}
  {e_1 \reduces e'_1}
  \quad
  \infer{\proj{j}{\record[\ell \in L]{\ell = e_\ell}} \reduces e_j}
  {\text{($j \in L$)}}
  \quad
  \infer{\proj{j}{e} \reduces \proj{j}{e'}}
  {e \reduces e'}
  \\
  \infer{\letpair{x,y}{\pair{v_1,v_2}}{e} \reduces \subst{v_1/x}{\subst{v_2/y}{e}}}{}
  \and
  \infer{\letpair{}{\pair{}}{e} \reduces e}{}
  \and
  \infer{\case[\ell \in L]{(\inj{j}{v})}{{\ell}{x_\ell} => e_\ell} \reduces \subst{v/x_j}{e_j}}
  {\text{($j \in L$)}}
  \quad
  \infer{\force*{\thunk{e}} \reduces e}{}
  \quad
  \infer{f \reduces e}{f \colon \n{\sigma} = e \in \Sigma}
\end{mathpar}

Note that some computations, specifically $\lam{x}{e}$,
$\record[\ell \in L]{\ell = e_\ell}$, and $\return{v}$, do not reduce
and may be considered values in other formulations.  Here, we call
them \emph{terminal computations} and use the judgment $e \term$ to
identify them.
\begin{mathpar}
  \infer{\lam{x}{e} \term}{}
  \and
  \infer{\record[\ell \in L]{\ell = e_\ell} \term}{}
  \and
  \infer{\return{v} \term}{}
\end{mathpar}

We will silently use simple properties of computations in the remainder of the
paper which follow by straightforward induction.

\begin{lemma}[Computation]
  \label{lm:computation}
  \label{lm:determinism}
  \label{lm:terminal}
  \mbox{}
  \begin{enumerate}
  \item If $e \reduces e'$ and $e \reduces e''$ then $e' = e''$
  \item It is not possible that both $e \reduces e'$ and $e \term$.
  \end{enumerate}
\end{lemma}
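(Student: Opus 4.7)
The two parts are tightly coupled.  Determinism for an application $\app{e_1}{v}$, for instance, hinges on the observation that we cannot simultaneously have a beta reduction (which requires $e_1$ to be a $\lambda$-abstraction) and a congruence reduction (which requires $e_1 \reduces e_1'$); that mutual exclusion is exactly what part~(2) provides at the subterm $e_1$.  I therefore prove~(2) first and then invoke it throughout the proof of~(1).

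For part~(2), I case-split on the derivation of $e \term$, which leaves three possible shapes for $e$: $\lam{x}{e_0}$, $\record[\ell \in L]{\ell = e_\ell}$, and $\return{v}$.  A quick inspection of the reduction rules shows that none of these three shapes appears as the left-hand side of any rule: every rule's conclusion has an outermost constructor that is an application, a \textsf{let}-return, a projection, a \textsf{match} on a pair or on a sum, a \textsf{force}, or an expression name $f$.  Hence no reduction fires from a terminal computation.

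For part~(1), I induct on the derivation of $e \reduces e'$.  In each case the shape of $e$ admits only a small set of potentially-matching rules for $e \reduces e''$, and part~(2) rules out all but one.  In a beta case, e.g.\ $\app{(\lam{x}{e_0})}{v} \reduces \subst{v/x}{e_0}$, the only competing possibility is the congruence rule, whose premise would demand $\lam{x}{e_0} \reduces e^\star$, contradicting $\lam{x}{e_0} \term$ via~(2); so $e'' = e'$.  In a congruence case, e.g.\ $\app{e_1}{v} \reduces \app{e_1'}{v}$ from $e_1 \reduces e_1'$, the only competing possibility is the beta rule, which would force $e_1$ to be a $\lambda$-abstraction and hence terminal, again contradicting $e_1 \reduces e_1'$ by~(2); so $e \reduces e''$ arises by the same congruence rule, and the induction hypothesis on $e_1 \reduces e_1'$ yields $e_1' = e_1''$, whence $e' = e''$.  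The remaining elimination forms are handled identically.  For the expression-name rule $f \reduces e$, only one rule applies, and the signature assigns $f$ a unique defining equation $f \colon \n{\sigma} = e \in \Sigma$, so $e'' = e$.

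There is no real obstacle here: the only point requiring care is the lock-step interplay between beta and congruence rules at each elimination form, and that is exactly what part~(2) excludes.  Both parts thus go through by straightforward structural induction, consistent with the excerpt's remark.
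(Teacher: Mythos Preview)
Your proposal is correct and matches the paper's approach: the paper simply remarks that these properties ``follow by straightforward induction'' without giving any details, and what you have written is exactly that straightforward induction spelled out. Your observation that part~(2) is needed inside the proof of part~(1) to separate the beta and congruence cases at each elimination form is the standard argument and is implicit in the paper's one-line claim.
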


\subsection{Some Sample Programs}
\label{sec:sample-programs}

\begin{example}[Computing with Binary Numbers]
  \label{ex:bin}
  We show some example programs for binary numbers in ``little endian''
  representation (least significant bit first) and in standard form, that is, without leading zeros.
  \begin{tabbing}
    $\p{\ms{bin}} =$ \= $\plusn{\mb{e} : \one, \mb{b0} : \ms{bin}, \mb{b1} : \ms{bin}}$ \\
    $\p{\ms{std}} =$ \> $\plusn{\mb{e} : \one, \mb{b0} : \ms{pos}, \mb{b1} : \ms{std}}$ \\
    $\p{\ms{pos}} =$ \> $\plusn{\phantom{\mb{e} : \one,}\, \mb{b0} : \ms{pos}, \mb{b1} : \ms{std}}$
  \end{tabbing}
  We expect the subtyping relationships $\ms{pos} \leq \ms{std} \leq{\ms{bin}}$ to
  hold, because every positive standard number is a standard number, and every standard
  number is a binary number.  According to our definition and rules in
  Sections~\ref{sec:semantic-typing} and~\ref{sec:syntactic-typing} these will hold semantically as well as syntactically.

  We now show some simple definitions $f : \n{\sigma} = e$.
  \begin{tabbing}
    $\ms{six}$ \= $\null : \up \ms{pos}
    \null = \return{\inj{\mb{b0}}{\inj{\mb{b1}}{\inj{\mb{b1}}{\inj{\mb{e}}{\pair{}}}}}}$
  \end{tabbing}
  The increment function on binary numbers implements the carry with
  a recursive call, which has to be wrapped in a let/return.
  \begin{tabbing}
    $\ms{inc}$ \= $\null : \ms{std} \imp \up \ms{pos}$ \\
    \> $\null = \lambda x.\, \mathsf{match}\; x$ \= $(\,\inj{\mb{e}}{u} \Rightarrow \return{\inj{\mb{b1}}{\inj{\mb{e}}{u}}}$ \\
      \> \> $\mid \inj{\mb{b0}}{x'} \Rightarrow \return{\inj{\mb{b1}}{x'}}$ \\
      \> \> $\mid \inj{\mb{b1}}{x'} \Rightarrow \letup{y'}{\app{\ms{inc}}{x'}}{\return{\inj{\mb{b0}}{y'}}}\,)$
  \end{tabbing}
  By subtyping, we also have $\ms{inc} : \ms{std} \imp \up \ms{std}$, for example, but
  \emph{not} $\ms{inc} : \ms{bin} \imp \up \ms{bin}$ since $\ms{bin} \not\leq \ms{std}$.
  However, the definition could be separately checked against this type, which points
  towards an eventual need for intersection types.

  The following incorrect version of the decrement function does \emph{not} have
  the indicated desired type!
  \begin{tabbing}
    $\ms{dec}_0$ \= $\null : \ms{pos} \imp \up \ms{std}$ \qquad \textbf{\%\ incorrect!} \\
    \> $\null = \lambda x.\, \mathsf{match}\; x$ \= $(\, \inj{\mb{b0}}{x'} \Rightarrow
    \letup{y'}{\app{\ms{dec}_0}{x'}}{\return{\inj{\mb{b1}}{y'}}}$ \\
    \> \> $\mid \inj{\mb{b1}}{x'} \Rightarrow \return{\inj{\mb{b0}}{x'}}\,)$
  \end{tabbing}
  The error here is quite precisely located by the bidirectional type checker (see
  Section~\ref{sec:bidirectional}): When we inject $\inj{\mb{b0}}{x'}$ in the second
  branch it is not the case that $x' : \ms{pos}$ as required for standard numbers!  And,
  indeed,
  $\app{\ms{dec}_0}{\inj{\mb{b1}}{\inj{\mb{e}}{\pair{}}}} \reduces^*
  \return{\inj{\mb{b0}}{\inj{\mb{e}}{\pair{}}}}$ which is not in standard form.  On the other hand,
  the fact that a branch for $\inj{\mb{e}}{u}$ is missing is correct because the type
  $\ms{pos}$ does not have an alternative for this label.

  We can fix this problem by discriminating one more level of the input (which could
  be made slightly more appealing by a compound syntax for nested pattern matching).
  \begin{tabbing}
    $\ms{dec}$ \= $\null : \ms{pos} \imp \up \ms{std}$ \\
    \> $\null = \lambda x.\, \mathsf{match}\; x$ \= $(\, \inj{\mb{b0}}{x'} \Rightarrow
    \letup{y'}{\app{\ms{dec}}{x'}}{\return{\inj{\mb{b1}}{y'}}}$ \\
    \> \> $\mid \inj{\mb{b1}}{x'} \Rightarrow \mathsf{match}\; x'$ \= $(\, \inj{\mb{e}}{u} \Rightarrow \return{\inj{\mb{e}}{u}}$ \\
    \> \> \> $\mid \inj{\mb{b0}}{x''} \Rightarrow \return{\inj{\inj{\mb{b0}}{\mb{b0}}}{x''}}$ \\
    \> \> \> $\mid \inj{\mb{b1}}{x''} \Rightarrow \return{\inj{\inj{\mb{b0}}{\mb{b1}}}{x''}}\,)\,)$
  \end{tabbing}
\end{example}

\begin{example}[Computing with Streams]
  We present an example of a type with mixed polarities: a stream of standard numbers with
  a finite amount of padding between consecutive numbers.  Programmer's intent is for the
  stream to be lazy and infinite, i.e., no end-of-stream is provided.  But because we do not
  restrict recursion even a well-typed implementation may diverge and fail to produce another
  number.  On the other hand, the padding must always be finite because the meaning of
  positive types is inductive.  We present padded streams as two mutually dependent type
  definitions, one positive and one negative.  Because our type definitions are
  equirecursive this isn't strictly necessary, and we could just substitute out the
  definition of $\n{\ms{pstream}}$.

  For our example, we also define a subtype with zero padding, as forcing a
  single padding label $\mb{none}$ between any two elements could also be expressed.
  \begin{tabbing}
    $\n{\ms{pstream}} = \up (\ms{std} \tensor \ms{padding})$ \\
    $\p{\ms{padding}} = \plusn{\mb{none} : \ms{padding}, \mb{some} : \dn \ms{pstream}}$ \\[1ex]
    $\n{\ms{zstream}} = \up (\ms{std} \tensor \plusn{\mb{some} : \dn \ms{zstream}})$
  \end{tabbing}
  In $\ms{zstream}$, we see the significance of variant record types with just
  one label: $\mb{some}$.  We exploit this in Section~\ref{sec:isorecursive} to interpret
  isorecursive types into equirecursive ones.  We have that
  $\ms{zstream} \leq \ms{pstream}$, which means we can pass a stream with zero padding
  into any function expecting one with arbitrary padding.

  We now program two mutually recursive functions to create a stream with zero
  padding from a stream with arbitrary (but finite!) padding.
  \begin{tabbing}
    $\ms{compress} : (\dn \ms{pstream}) \imp \ms{zstream}$ \\
    $\ms{omit} : \ms{padding} \imp \ms{zstream}$ \\[1ex]
    $\ms{compress} = \lambda s.\,$ \= $\letup{np}{\force{s}}{}$ \\
    \> $\letpair{n,p}{np}{\return{\pair{n,\inj{\mb{some}}{\thunk{(\app{\ms{omit}}{p})}}}}}$ \\
    $\ms{omit} = \lambda p.\, \ms{match}\, p$ \= $(\, \inj{\mb{none}}{p'} \Rightarrow \app{\ms{omit}}{p'}$ \\
    \> $\mid \inj{\mb{some}}{s} \Rightarrow \app{\ms{compress}}{s} \,)$
  \end{tabbing}
\end{example}

\begin{example}[Omega]
  As a final example in this section we consider the embedding of the untyped
  $\lambda$-calculus.  The untyped term under consideration is
  $\app{(\lam{x}{\app{x}{x}})}{(\lam{x}{\app{x}{x}})}$.  The first thing to notice is
  that this term is not even \emph{syntactically} well-formed because $x$ stands for a
  value, but in $\app{x}{x}$ the function parts needs to be an expression.  Closely
  related is that the ``usual'' definition for the embedding of the untyped
  $\lambda$-calculus (see, for example,~\cite{Harper16book})
  $\ms{U} = \ms{U} \imp \ms{U}$ isn't properly polarized.
  So, we define it as $\n{\ms{U}} = (\dn \ms{U}) \imp \ms{U}$ instead:
    \begin{tabbing}
      $\omega : (\dn \ms{U}) \imp \ms{U}$ \qquad\qquad\= $\Omega : \ms{U}$ \\
      $\omega = \lam{x}{\app{(\force{x})}{x}}$ \> $\Omega = \app{\omega}{(\thunk{\omega})}$
    \end{tabbing}
  Because our type definitions are equirecursive, both of these definitions are
  well-typed.  Moreoever, we also have $\omega : \ms{U}$ and in fact the embedding of
  every untyped $\lambda$-term will have type $\ms{U}$.
  We also observe that
  $\app{\omega}{(\thunk{\omega})} \reduces^3 \app{\omega}{(\thunk{\omega})}$ and therefore
  represents a well-typed diverging term.  Of course, $f : \ms{U} = f$ is also well-typed
  and reduces to itself in one step.

  Remarkably, with our notion of \emph{semantic typing} we will see that $\Omega$
  will have \emph{every} type $\n{\sigma}$ and not just $\ms{U}$ (Appendix~\ref{app:semtypes}, Example~\ref{ex:omega-full})!
\end{example}

\section{Semantic Typing}
\label{sec:semantic-typing}

Our aim is to justify both typing and subtyping by semantic means.  We therefore start
with \emph{semantic typing} of closed values and computations, written $\sof{v}{\p{\tau}}$
and $\sof{e}{\n{\sigma}}$.  From this we can, for example, define semantic subtyping for
positive types $\p{\tau} \subseteq \p{\sigma}$ as
$\forall v.\, \sof{v}{\p{\tau}} \mimp \sof{v}{\p{\sigma}}$.

Conceptually, semantic typing is a mixed inductive/coinductive definition.  Values are
typed inductively, which yields the correct interpretation of purely positive types such
as natural numbers, lists, or trees, describing finite data structures.  Computations are
typed coinductively because they include the possibility of infinite computation by
unbounded recursion.  While we assume we can observe the structure of values, computations
$e$ cannot be observed directly.  Different notions of observation for computation would
yield different definitions of semantic typing.  For our purposes, since we want to allow
unfettered recursion, we posit we can (a) observe the fact that a computation \emph{steps}
according to our dynamics, even if we cannot examine the computation itself, and (b) when
a computation is \emph{terminal} we can observe its behavior by applying elimination forms
(for types $\p{\tau} \imp \n{\sigma}$ and
$\with*[\ell \in L]{\ell \colon \n{\sigma_\ell}}$) or by observing its returned value (for
the type $\up \p{\tau}$).

Besides capturing a certain notion of observability, our semantics incorporates the usual
concept of \emph{type soundness} which is important both for implementations and for
interpreting the results of computations.  These are:
\begin{description}
\item[Semantic Preservation] (Theorem~\ref{thm:sem-preservation}) If $\sof{e}{\n{\sigma}}$ and
  $e \reduces e'$ then $\sof{e'}{\n{\sigma}}$.
\item[Semantic Progress] (Theorem~\ref{thm:sem-progress}) If $\sof{e}{\n{\sigma}}$ then either
  $e \reduces e'$ for some $e'$ or $e$ is \emph{terminal} (but not both).  This captures
  the usual slogan that ``\textit{well-typed programs do not go wrong}''~\cite{Milner78}.
  An implementation will not accidentally treat a pair as a function or try to decompose a
  function as if it were a pair.
\item[Semantic Observation] If $\sof{v}{\p{\tau}}$ then the structure of the value $v$ is
  determined (inductively) by the type $\p{\tau}$.  Similarly, a \emph{terminal
  computation} $\sof{e}{\up \p{\tau}}$ must have the form $e = \return{v}$ with
  $\sof{v}{\p{\tau}}$.
\end{description}
These combine to the following: if we start a computation for $\sof{e}{\up \p{\tau}}$ then
either $e \reduces^* \return{v}$ for an observable value $\sof{v}{\p{\tau}}$ after a
finite number of steps, or $e$ does not terminate.

These are close to their usual syntactic analogues, but the fact that we do not rely on
any form of syntactic typing is methodologically significant.  For example, if we have a
program that does not obey a syntactic typing discipline but behaves correctly according
to our semantic typing, our results will apply and this program, in combination with
others that are well typed, will both be safe (semantic progress) and return meaningfully
observable results (semantic preservation and observation).
This point has been made passionately by Dreyer et al.~\cite{Dreyer19sigplan} and applied, for
example, to trusted libraries in Rust~\cite{Jung18popl}.  Another example can be found in
gradual typing~\cite{Garcia20wgt,New19popl}.  As long as we can \emph{prove} by any means
that the ``dynamically typed'' portion of the program is semantically
well-typed (even if not syntactically so), the combination is sound and can be executed
without worry, returning a correctly observable result.  A third example is provided by
\emph{session types} for message-passing concurrency~\cite{Hinrichsen21cpp}. While it is important to
have a syntactic type discipline, processes in a distributed system may be programmed in a
variety of languages some of which will have much weaker guarantees. Being able to
\emph{prove} their semantic soundness then guarantees the behavioral soundness of the
composed system.

Semantic typing in the context of call-by-push-value is
well-suited for encoding computational effects, such as input/output,
memory mutation, nontermination, etc.
Call-by-push-value was designed as a study for the $\lambda$-calculus with
effects~\cite[Sec.\ 2.4]{Levy01phd}, stratifying terms into values
(which have no side-effects) and computations (which might). Through the lens of
semantic typing, we can ensure behavioral soundness in the presence of
effects.

\subsection{Semantic Typing with Observation Depth}

Despite the extensive work on mixed inductive and coinductive
definitions~\cite{Abel07aplas,Barwise89csli,Cockett01cmcs,Cohen20ijcar,Danielsson10mpc,Hermida98ic,Komendantsky11tfp,Lepigre16corr,Nakata10sos,Park79ass,Raffa94phd},
there is no widely accepted style in presenting such definitions and reasoning with them concisely in
an mathematical language of discourse.  With some regret, we therefore present
our semantic definition by turning the coinductive part into an inductive one, following
the basic idea underlying \emph{step indexing}~\cite{Ahmed04phd,Ahmed06esop,Appel01toplas,Dreyer09lics}.
Since the coinduction has priority over the induction,
arguments proceed by nested induction, first over the step index and second over the
structure of the inductive definition.
This representation of mixed definitions implies that reasoning
over step indices has lexicographic priority over values.

An alternative point of view is provided by \emph{sized types}~\cite{Abel13icfp,Abel16jfp}.
Both sized types and step
indexing employ the same concept of observation depth; however, for sized types,
we would observe data constructors, whereas for step indexing we observe
computation steps. General recursion is supported in our system because
``productivity'' in the negative layer means that computations can step rather
than produce a data constructor.  The step index is actually the (universally
quantified) observation depth for a coinductively defined predicate. We do not index the
(existentially quantified) size of the inductive predicate but use its structure directly
since values are finite and become smaller.
All step indices $k$, $i$ and occasionally $j$
range over natural numbers.  We use three judgments,
\begin{enumerate}
\item\label{pt:def-comp} $\sof[k]{e}{\n{\sigma}}$ ($e$ has semantic type $\n{\sigma}$ at index $k$)
\item\label{pt:def-term} $\sof*[k+1]{e}{\n{\sigma}}$ (terminal $e$ has semantic type $\n{\sigma}$ at index $k+1$)
\item\label{pt:def-val} $\sof[k]{v}{\p{\tau}}$ ($v$ has semantic type $\p{\tau}$ at index $k$)
\end{enumerate}
They should be defined by nested induction, first on $k$ and second on the structure of
$v/e$, where part~\ref{pt:def-term} can rely on part~\ref{pt:def-comp} for a computation
that is not terminal.  We write $v < v'$ when $v$ is a strict subexpression of $v'$.  The
clauses of the definition
can be found in Figure~\ref{fig:semantic-typing}.

\begin{figure}
\begin{align*}
  \sof[k]{v}{t} &\defd \mbox{$\sof[k]{v}{\p{\tau}}$ for $t = \p{\tau} \in \Sigma$} \\
  \sof[k]{v}{\p{\tau_1} \tensor \p{\tau_2}} &\defd \mbox{$v = \pair{v_1,v_2}$, $\sof[k]{v_1}{\p{\tau_1}}$, and
                                          $\sof[k]{v_2}{\p{\tau_2}}$ for some $v_1$, $v_2$} \\
  \sof[k]{v}{\one} &\defd \mbox{$v = \pair{}$} \\
  \sof[k]{v}{\plus*[\ell \in L]{\ell \colon \p{\tau_\ell}}}
                &\defd \mbox{$v = \inj{j}{v_j}$ and $\sof[k]{v_j}{\p{\tau_j}}$ for some $j \in L$} \\
  \sof[k]{v}{\dn \n{\sigma}} &\defd \mbox{$v = \thunk{e}$ and $\sof[k]{e}{\n{\sigma}}$ for some $e$}\\[1em]
  \sof[0]{e}{\n{\sigma}} &\quad\mbox{always} \\
  \sof[k+1]{e}{\n{\sigma}} &\defd \mbox{($e \reduces e'$ and $\sof[k]{e'}{\n{\sigma}}$) or
                             ($e \term$ and $\sof*[k+1]{e}{\n{\sigma}}$)} \\
  \sof*[k+1]{e}{s} &\defd \mbox{$\sof*[k+1]{e}{\n{\sigma}}$ for $s = \n{\sigma} \in \Sigma$} \\
  \sof*[k+1]{e}{\p{\tau} \imp \n{\sigma}} &\defd \mbox{$\sof[k+1]{\app{e}{v}}{\n{\sigma}}$
                                          for all $i \leq k$ and $v$ with $\sof[i]{v}{\p{\tau}}$} \\
  \sof*[k+1]{e}{\with*[\ell \in L]{\ell \colon \n{\sigma_\ell}}}
                         &\defd \mbox{$\sof[k+1]{\proj{j}{e}}{\n{\sigma_j}}$ for all $j \in L$} \\
  \sof*[k+1]{e}{\up \p{\tau}} &\defd \mbox{$e = \return{v}$ for some $\sof[k]{v}{\p{\tau}}$}
  \\[1em]
  \sof{v}{\p\tau} & \defd \mbox{$\sof[k]{v}{\p\tau}$ for all $k$} \\
  \sof{e}{\n\sigma} & \defd \mbox{$\sof[k]{e}{\n\sigma}$ for all $k$}
\end{align*}
\caption{Definition of Semantic Typing}
\label{fig:semantic-typing}
\end{figure}

A few notes on these definitions.  When expanding type definitions $t = \p{\tau}$ and
$s = \n{\sigma}$ we rely on the assumption that type definitions are contractive, so one
of the immediately following cases will apply next.  This means that unlike many
definitions in this style the types do not necessarily get smaller.  For the inductive
part (typing of values), the values do get smaller and for the coinductive part (typing of
computations) the step index will get smaller because in the case of functions and records
the constructed expression is not terminal.

A number of variations on this definition are possible.  A particularly interesting one
avoids decreasing the step index unless recursion is
unrolled~\cite{Ahmed06esop,Dreyer09lics,New19popl} so sources of nontermination can be
characterized more precisely.  It may also be possible to keep the step index constant
when analyzing a terminal computation of type $\up \p{\tau}$.  Stripping the $\mathsf{return}$
constructor constitutes a form of observation and therefore decreasing the index seems
both appropriate and simplest.

The quantification over $i \leq k$ in the case of terminal computations of function type
seems necessary because we need the relation to be \emph{downward closed} so that it
defines a \emph{deflationary fixed point}~\cite{Abel12fics,Gradel03lics}.
Values and computations are then semantically well-typed if they are well-typed for
\emph{all} step indices.

\begin{lemma}[Downward Closure]
  \label{lm:downward-closure}\mbox{}
  \begin{enumerate}
  \item\label{pt:dc-comp} $\sof[k]{e}{\n{\sigma}}$ implies $\sof[i]{e}{\n{\sigma}}$ for all $i \leq k$
  \item\label{pt:dc-term} $\sof*[k+1]{e}{\n{\sigma}}$ implies $\sof*[i+1]{e}{\n{\sigma}}$ for all $i \leq k$
  \item\label{pt:dc-val} $\sof[k]{v}{\p{\tau}}$ implies $\sof[i]{v}{\p{\tau}}$ for all $i \leq k$
  \end{enumerate}
\end{lemma}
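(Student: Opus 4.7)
The plan is to prove all three parts simultaneously by a nested induction mirroring the one used to define the judgments of Figure~\ref{fig:semantic-typing}: outer induction on the step index $k$, with inner case analysis on $\n\sigma$ for Part 2, on the form of $e$ for Part 1, and on the structure of $v$ for Part 3. Within each level of the outer induction I arrange the dependencies one-directionally, establishing Part 2 at index $k+1$ first, then Part 1 at index $k+1$, and finally Part 3 at index $k$. The outer base case $k = 0$ is immediate since the only $i \leq 0$ is $0$, and both $\sof[0]{e}{\n\sigma}$ and $\sof[0]{v}{\p\tau}$ reduce to trivialities.

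For Part 2 at index $k+1$, I case on $\n\sigma$. The defined-name case $s = \n\sigma \in \Sigma$ unfolds and recurses. The return case $\up\p\tau$ forces $e = \return{v}$ with $\sof[k]{v}{\p\tau}$, and the conclusion for $i \leq k$ reduces to $\sof[i]{v}{\p\tau}$, which is supplied by the outer induction hypothesis on Part 3 at $k$. The interesting cases are $\with*[\ell \in L]{\ell\colon\n\sigma_\ell}$ and $\p\tau \imp \n\sigma$, whose hypotheses provide $\sof[k+1]{\proj{j}{e}}{\n\sigma_j}$ or $\sof[k+1]{\app{e}{v}}{\n\sigma}$. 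The crucial observation is that $\proj{j}{e}$ and $\app{e}{v}$ are never terminal (neither matches a $\lambda$-abstraction, a record, or a $\ms{return}$), so the reduction clause of the definition of $\sof[k+1]{\cdot}{\cdot}$ must apply, yielding a step to some $e''$ with $\sof[k]{e''}{\n\sigma}$. The outer induction hypothesis on Part 1 at $k$ then weakens this to $\sof[i]{e''}{\n\sigma}$ for $i \leq k$, which repackages as $\sof[i+1]{\proj{j}{e}}{\n\sigma_j}$ or $\sof[i+1]{\app{e}{v}}{\n\sigma}$.

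For Part 1 at index $k+1$, I split on the disjunction in the definition. If $e \reduces e'$ with $\sof[k]{e'}{\n\sigma}$, the outer induction hypothesis (Part 1 at $k$) gives $\sof[i']{e'}{\n\sigma}$ for any $i' \leq k$, hence $\sof[i'+1]{e}{\n\sigma}$. If $e \term$ with $\sof*[k+1]{e}{\n\sigma}$, then Part 2 at $k+1$, just established, delivers $\sof*[i+1]{e}{\n\sigma}$ for $i \leq k$, and Lemma~\ref{lm:terminal} together with repackaging yields the conclusion. For Part 3 at index $k$, structural induction on $v$ handles the tensor, unit, sum, and defined-name cases directly via the inner induction hypothesis on subvalues, while the $\dn\n\sigma$ case invokes Part 1 at index $k$, now available.

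The main obstacle is the apparent circularity between Parts 1 and 2 at the same level of the outer induction: Part 2's arrow and lazy-record cases seem to demand a weakening of a judgment at index $k+1$, which is exactly Part 1 at $k+1$, while Part 1 at $k+1$ routes its terminal case through Part 2. The circularity is broken by the elementary syntactic observation that $\app{e}{v}$ and $\proj{j}{e}$ cannot be terminal, which restricts Part 2's appeal to Part 1 to the reduction clause and thereby drops the step index to $k$, where the outer induction hypothesis applies without issue.
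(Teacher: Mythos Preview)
Your approach matches the paper's one-line proof (nested induction on $k$ and then on the structure of $v/e$, with Part~2 appealing to Part~1 only on non-terminal computations), and your identification of the key point---that $\app{e}{v}$ and $\proj{j}{e}$ are never terminal, so the index drops before Part~1 is invoked---is exactly right.

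There is one small bookkeeping slip in your ordering. In the $\up\p\tau$ case of Part~2 you write that $\sof[i]{v}{\p\tau}$ is ``supplied by the outer induction hypothesis on Part~3 at $k$'', but in your own setup Part~3 at index $k$ is what you prove \emph{last} at the current level, so it is not yet available as an induction hypothesis when you are establishing Part~2 at $k{+}1$. The fix is immediate: prove Part~3 at $k$ \emph{before} Part~2 at $k{+}1$ (its thunk case needs only Part~1 at $k$, which genuinely is the outer IH in your scheme), or re-index so that all three parts carry the same step index $n$ and proceed in the order Part~2, Part~1, Part~3. With either adjustment the argument goes through and coincides with the paper's.
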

\begin{proof}
  By a routine nested induction on $k$ and the structure of $v/e$ where part~\ref{pt:dc-term} can
  appeal to part~\ref{pt:dc-comp} when $e$ is not terminal.
\end{proof}

Here are some semantic types that can easily be verified (see
Appendix~\ref{app:semtypes}).

\begin{example}[Semantic Typing]
  \label{ex:semantic-typing}
  \mbox{}
  \begin{enumerate}
  \item $\sof{\lam{x}{\return{x}}}{\p{\tau} \imp \up \p{\tau}}$ for all $\p\tau$.
  \item Define $s_0 = \one \imp s_0$ and $e_0 = \lam{x}{e_0}$.  Then $\sof{e_0}{s_0}$.
  \item Define $\omega = \lam{x}{\app{(\force{x})}{x}}$ and $\Omega = \app{\omega}{(\thunk{\omega})}$.
    Then $\sof{\Omega}{\n{\sigma}}$ for every $\n{\sigma}$.
  \item Define $t_0 = \one \tensor t_0$.  Then there is no $v$ such that $\sof{v}{t_0}$.
  \item Assume $\sof{e}{\n{\rho}}$ for some $\n{\rho}$.  Then
    $\sof{e}{t_0 \imp \n{\sigma}}$ for every $\n{\sigma}$.
  \end{enumerate}
\end{example}

\subsection{Properties of Semantic Typing}

The properties of semantic preservation and progress follow immediately just by applying
the definitions and Lemma~\ref{lm:terminal}, so we elide their proofs.

\begin{theorem}[Semantic Preservation]
  \label{thm:sem-preservation}
  If $\sof{e}{\n{\sigma}}$ and $e \reduces e'$ then $\sof{e'}{\n{\sigma}}$.
\end{theorem}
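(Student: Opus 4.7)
The plan is to show that $\sof[k]{e'}{\n{\sigma}}$ holds for every natural number $k$, since by definition $\sof{e'}{\n{\sigma}}$ means exactly that. So I would fix an arbitrary $k$ and aim to derive $\sof[k]{e'}{\n{\sigma}}$ from the hypotheses $\sof[k+1]{e}{\n{\sigma}}$ (which follows from $\sof{e}{\n{\sigma}}$) and $e \reduces e'$.

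Next, I would unfold the definition of $\sof[k+1]{e}{\n{\sigma}}$ from Figure~\ref{fig:semantic-typing}: it gives a disjunction, namely either (a) there exists some $e''$ with $e \reduces e''$ and $\sof[k]{e''}{\n{\sigma}}$, or (b) $e \term$ and $\sof*[k+1]{e}{\n{\sigma}}$. Case (b) is immediately ruled out by Lemma~\ref{lm:terminal}(2), because our hypothesis $e \reduces e'$ means $e$ cannot also be terminal. So case (a) must hold, yielding some $e''$ with $e \reduces e''$ and $\sof[k]{e''}{\n{\sigma}}$.

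Finally, I would apply determinism (Lemma~\ref{lm:determinism}(1)) to the two reductions $e \reduces e'$ and $e \reduces e''$ to conclude $e' = e''$, and thus $\sof[k]{e'}{\n{\sigma}}$. Since $k$ was arbitrary, this establishes $\sof{e'}{\n{\sigma}}$.

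There is no real obstacle here: the proof is essentially a direct unfolding of the definition of $\sof[k+1]{e}{\n{\sigma}}$ plus one invocation each of the determinism and terminality lemmas. The only subtle design choice is to index the hypothesis at $k+1$ rather than at $k$ so that the reduction clause in the definition becomes available; this is precisely why the authors mention in the text that the proof is immediate by applying the definitions together with Lemma~\ref{lm:terminal}.
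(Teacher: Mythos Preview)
Your proposal is correct and matches the paper's approach: the authors explicitly say the result follows immediately by applying the definitions together with Lemma~\ref{lm:terminal}, which is precisely the unfolding-plus-determinism-plus-terminality argument you spell out. Indexing the hypothesis at $k+1$ is the right move and your reasoning is complete.
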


\begin{theorem}[Semantic Progress]
  \label{thm:sem-progress}
  If $\sof{e}{\n{\sigma}}$ then either $e \reduces e'$ or $e$ is terminal, but not both.
\end{theorem}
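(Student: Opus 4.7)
The plan is to simply unfold the definition of semantic typing at a small step index. From the hypothesis $\sof{e}{\n\sigma}$ I extract $\sof[1]{e}{\n\sigma}$ by instantiating the universal quantifier over $k$ at $k = 1$. The clause for $\sof[k+1]{e}{\n\sigma}$ in Figure~\ref{fig:semantic-typing} expands this into the disjunction: either $e \reduces e'$ for some $e'$ (together with the auxiliary information $\sof[0]{e'}{\n\sigma}$, which is vacuously true), or $e \term$ (together with $\sof*[1]{e}{\n\sigma}$, which is not needed here). Either disjunct directly supplies the desired conclusion that $e$ steps or is terminal.

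For the "but not both" clause I would appeal to Lemma~\ref{lm:terminal} part~2, which already establishes that $e \reduces e'$ and $e \term$ cannot hold simultaneously. This is precisely the companion fact that the text calls out as the missing ingredient for deriving progress from the definitions.

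There is essentially no obstacle: the step-indexed definition was crafted so that the very first nontrivial unfolding, at $k = 1$, witnesses exactly the progress dichotomy, and the "not both" half was already discharged syntactically in Lemma~\ref{lm:terminal}. Choosing $k = 1$ is the crispest instantiation; $k = 0$ would be vacuous, and larger indices would expose more information than the statement requires. This is consistent with the paper's own remark that both semantic preservation and semantic progress "follow immediately just by applying the definitions and Lemma~\ref{lm:terminal}".
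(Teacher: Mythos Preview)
Your proposal is correct and matches the paper's approach exactly: the paper elides the proof, stating only that it follows ``immediately just by applying the definitions and Lemma~\ref{lm:terminal}'', and your unfolding at $k=1$ together with the appeal to Lemma~\ref{lm:terminal} part~2 is precisely that argument spelled out.
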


\section{Subtyping}
\label{sec:subtyping}

The semantics of subtyping is quite easy to express using semantic typing.
\begin{definition}[Semantic Subtyping]
  \label{def:sem-subtyping}
  \mbox{}
  \begin{enumerate}
  \item $\p{\tau} \subseteq \p{\sigma}$ iff $\sof{v}{\p{\tau}}$ implies $\sof{v}{\p{\sigma}}$
    for all $v$.
  \item $\n{\tau} \subseteq \n{\sigma}$ iff $\sof{e}{\n{\tau}}$ implies $\sof{e}{\n{\sigma}}$
    for all $e$.
  \end{enumerate}
\end{definition}
We would now like to give a syntactic definition of subtyping that expresses an algorithm
and show it both sound and complete with respect to the given semantic definition.  The
intuitive rules for subtyping shouldn't be surprising, although to our knowledge our
formulation is original.

\subsection{Empty and Full Types}
\label{sec:empty-full}

A first observation is that $\p{\tau} \subseteq \p{\sigma}$ whenever $\p{\tau}$ is an empty
type, regardless of $\p{\sigma}$, because the necessary implication holds vacuously.  So
we need an algorithm to determine \emph{emptiness of a positive type}.  For the most
streamlined presentation (which is also most suitable for an implementation) we first put
the signature into a normal form that alternates between structural types and type
names.
\begin{align*}
  \p{\tau} &\Coloneqq t_1 \tensor t_2 \mid \one \mid \plus*[\ell \in L]{\ell \colon t_\ell} \mid \dn s \\
  \n{\sigma} &\Coloneqq t \imp s \mid \with*[\ell \in L]{\ell \colon s_\ell} \mid \up t \\
  \Sigma & \Coloneqq \cdot \mid \Sigma, t = \p{\tau} \mid \Sigma, s = \n{\sigma} \mid \Sigma, f : \n{\sigma} = e
\end{align*}

A usual presentation of emptiness maintains a collection of recursive types in a context
in order to do a kind of loop detection.  For example, the type $t = \one \tensor t$ is
empty because we may assume that $t$ is empty while testing $\one \tensor t$.  Instead, we
express this and similar kinds of arguments using valid circular reasoning.  If one were
to formalize it, it would be in \textbf{CLKID$^\omega$}~\cite{Brotherston11jlc}, although
the succedent of any sequent is either empty or a singleton (as in
\textbf{CLJID$^\omega$}~\cite{Berardi18cmcs}).

We construct circular derivations for $t \emp$ where $t$ is a positive type name.  Note
that negative types are never empty.  We can form a valid cycle when we
encounter a goal $t \emp$ as a proper subgoal of $t \emp$. Since we fix a
signature $\Sigma$ once and for all before defining each judgment such as emptiness or
subtyping, we omit the index $\Sigma$ since it never changes.  The rules can be found
in Figure~\ref{fig:empty}.
\begin{figure}
\begin{mathpar}
  \infer[\remp{\oplus}]
  {t \emp}
  {t = \plus*[\ell \in L]{\ell : t_\ell} \in \Sigma &
    t_j \emp \; (\forall j \in L)}
  \and
  \mbox{(no rules for $t = \one$ or $t = \dn s$)}
  \and
  \infer[\remp{\tensor}_1]
  {t \emp}
  {t = t_1 \tensor t_2 \in \Sigma & t_1 \emp}
  \and
  \infer[\remp{\tensor}_2]
  {t \emp}
  {t = t_1 \tensor t_2 \in \Sigma & t_2 \emp}
\end{mathpar}
\caption{Circular Derivation Rules for Emptiness}
\label{fig:empty}
\end{figure}
\begin{example}
  We continue Example~\ref{ex:semantic-typing}, part (4), building a formal circular
  derivation.  We first bring the signature into normal form, $\Sigma =\{
    u_0 = \one,\enspace
    t_0 = u_0 \tensor t_0
  \}$,
  and then construct
  \[
    \infer[\remp{\tensor}_2]
    {t_0 \emp}
    {t_0 = u_0 \tensor t_0
      & \deduce{t_0 \emp}
      {\cycle{}}}
  \]
\end{example}

\begin{theorem}[Emptiness]
  \label{thm:empty}
  If $t \emp$ then for all $k$ and $v$, $\nsof[k]{v}{t}$.
\end{theorem}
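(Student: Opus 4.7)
The plan is to prove the contrapositive of the statement pointwise: by strong induction on the structure of the value $v$, show that for every $k$, every positive type name $t$ with $t \emp$, and every $v$, $\nsof[k]{v}{t}$. The crucial observation is that although derivations of $t\emp$ are circular, the induction is not on the derivation but on the value $v$, which is guaranteed to be finite because positive types are interpreted inductively. Thus the back-edges in the \textbf{CLKID$^\omega$} derivation cause no trouble: when we pass through a cycle we simultaneously strictly decrease $v$, so the inductive hypothesis on the smaller subvalue cuts the circular reasoning short.

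More concretely, I would fix $v$ and assume the IH: for every $v' < v$, every $k'$, and every $t'$ with $t' \emp$, $\nsof[k']{v'}{t'}$. Given a circular derivation $\mathcal{D}$ of $t\emp$ and an (alleged) witness $\sof[k]{v}{t}$, I would case-analyze on the root rule of $\mathcal{D}$, each case giving a structurally smaller subvalue of $v$ inhabiting a premise that still admits a (circular) derivation of emptiness:
\begin{itemize}
\item Rule $\remp{\oplus}$: $t = \plus*[\ell \in L]{\ell : t_\ell}$ with each $t_\ell \emp$. Unfolding $\sof[k]{v}{t}$, we get $v = \inj{j}{v_j}$ with $\sof[k]{v_j}{t_j}$; since $v_j < v$ and $t_j \emp$, the IH gives $\nsof[k]{v_j}{t_j}$, contradicting the premise.
\item Rule $\remp{\tensor}_1$: $t = t_1 \tensor t_2$ with $t_1 \emp$. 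Then $v = \pair{v_1,v_2}$ with $\sof[k]{v_1}{t_1}$, so the IH applied to $v_1 < v$ delivers the contradiction.
\item Rule $\remp{\tensor}_2$: symmetric, using $v_2 < v$.
\end{itemize}
There are no rules whose conclusion is $t \emp$ when $t$ unfolds to $\one$ or $\dn s$, so these cases cannot arise. Contractivity guarantees that the right-hand side of $t = \p\tau$ is itself structural, so the case analysis above exhausts the possibilities.

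The main subtlety I expect is justifying that, when the premise $t_j \emp$ in $\mathcal{D}$ is reached via a back-edge in the circular proof, it is still legitimate to treat it as a valid emptiness judgment to which our theorem applies. This is fine because the \textbf{CLKID$^\omega$} proof $\mathcal{D}$ remains a valid circular derivation when we view any of its nodes as a new root (the back-edges are preserved within the subtree); so for each premise $t_j \emp$ we do have a circular derivation witnessing it, and that is all our induction on $v$ requires. Apart from this conceptual point, the argument is routine case analysis plus one appeal to the IH per case, with no step-index bookkeeping needed since only the value side of the mixed definition is involved.
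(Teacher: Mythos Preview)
Your proof is correct and rests on the same key observation as the paper's: the value $v$ strictly decreases at every step, so the circularity of the emptiness derivation is harmless. The difference is purely in packaging. The paper gives a \emph{compositional translation} of the circular derivation of $t\emp$ into a circular proof of $\sof[k]{v}{t}\vdash\cdot$ in the metalogic \textbf{CLKID$^\omega$}: each emptiness rule becomes a derived rule in the metalogic, and cycles in the emptiness proof map to cycles in the metalogic proof, which are valid precisely because $(k,v)$ decreases (with $k$ fixed and $v$ shrinking). You instead run an ordinary strong induction on $v$ and invoke the IH on the smaller subvalue together with the emptiness of the corresponding premise type. This forces you to argue the point you flagged as subtle---that any premise $t_j\emp$ appearing in a valid circular derivation is itself derivable---whereas the paper's compositional translation sidesteps that question entirely by never needing a standalone derivation of the premise. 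Your route is more elementary and avoids asking the reader to reason inside \textbf{CLKID$^\omega$}; the paper's route is more uniform with the methodology used for the later soundness proofs (fullness, subtyping, typing), where the same translate-rules-and-map-cycles pattern recurs.
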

\begin{proof}
  We interpret the judgment $t \emp$ semantically as $\sof[k]{v}{t} \vdash \cdot$ (which
  expresses $\nsof[k]{v}{t}$ in a sequent), where $t$ is given and $k$ and $v$ are
  parameters and therefore implicitly universally quantified.  The proof of this judgment is
  carried out in a circular metalogic.  We translate each inference rule for $t \emp$
  into a derivation for $\sof[k]{v}{t} \vdash \cdot$, where each unproven subgoal
  corresponds to a premise of the rule.  When the derivation of $t \emp$ is closed
  by a cycle, the corresponding derivation of $\sof[k]{v}{t} \vdash \cdot$ is
  closed by a corresponding cycle in the metalogic.  The cases can be found
  in Appendix~\ref{app:emptiness}.
\end{proof}

Next we symmetrically define what it means for a computation type $\n{\sigma}$ to be
\emph{full}, namely that it is inhabited by \emph{every (semantically well-typed)
  computation}.  A simple example is the type $\with*{\,}$, that is, the lazy record
without any fields.  It contains every well-typed expression because \emph{all}
projections (of which there are none) are well-typed.  It turns out the fullness is
directly defined from emptiness.

We may construct a derivation using the following rules.  It could be circular,
since the judgment $t \emp$ allows circular derivations.
\begin{mathpar}
  \infer[\rfull{\imp}]
  {s \full}
  {s = t_1 \imp s_2 \in \Sigma & t_1 \emp}
  \and
  \infer[\rfull{\with}]{s \full}{s = \with*{\,} \in \Sigma}
  \and
  \mbox{(no rule for $s = \up t$)}
\end{mathpar}

We interpret $s \full$ as the entailment $\sof[k]{e}{r} \vdash \sof[k]{e}{s}$.  In other
words, we are assuming that $e$ is semantically well-typed at some $r$ and use that to
show that it then will also be well-typed at the unrelated $s$.

\begin{theorem}[Fullness]
  \label{thm:full}
  If $s \full$ then $\sof[k]{e}{r}$ implies $\sof[k]{e}{s}$ for all $k$, $e$, and $r$.
\end{theorem}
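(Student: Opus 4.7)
The plan is to mirror the structure of the emptiness proof, but note that the fullness rules themselves have no recursive premise on fullness, so only a straightforward induction on the step index $k$ is needed---any reliance on circular reasoning enters indirectly through the appeal to Theorem~\ref{thm:empty}. Concretely, I would interpret $s \full$ as the sequent $\sof[k]{e}{r} \vdash \sof[k]{e}{s}$ and proceed by induction on $k$, with case analysis on the $s \full$ derivation postponed to the point where it is actually needed.

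For the base case $k = 0$, the conclusion $\sof[0]{e}{s}$ holds by definition regardless of the hypothesis. For the inductive step at $k+1$, I would unfold the hypothesis $\sof[k+1]{e}{r}$ into its two disjuncts. If $e \reduces e'$ with $\sof[k]{e'}{r}$, the inductive hypothesis yields $\sof[k]{e'}{s}$, from which $\sof[k+1]{e}{s}$ follows by the same stepping rule. If instead $e \term$ with $\sof*[k+1]{e}{r}$, I need to produce $\sof*[k+1]{e}{s}$, which I obtain by case analysis on the derivation of $s \full$: in the $\rfull{\imp}$ case, with $s = t_1 \imp s_2 \in \Sigma$ and $t_1 \emp$, Theorem~\ref{thm:empty} rules out any $v$ with $\sof[i]{v}{t_1}$, so the requirement that $\sof[k+1]{\app{e}{v}}{s_2}$ for all $i \leq k$ and such $v$ is vacuous; in the $\rfull{\with}$ case, with $s = \with*{\,} \in \Sigma$, the requirement quantifies over an empty set of fields and is likewise vacuous.

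The mild subtlety worth flagging---and arguably the only thing to watch---is that the terminal expression $e$ is not required to have the canonical shape dictated by $s$; it could be any $\lam{x}{e'}$, $\record{\ldots}$, or $\return{v}$, irrespective of the head constructor of $s$. The proof goes through precisely because both fullness rules yield terminal conditions whose quantifiers range over an empty set, so vacuity obviates any canonical-form reasoning. This also retrospectively confirms why there is no fullness rule for $s = \up t$: the terminal condition for $\up t$ pins $e$ to the specific form $\return{v}$, and that shape cannot be guaranteed for an arbitrary terminal computation coming from the hypothesis at an unrelated type $r$.
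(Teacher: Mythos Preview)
Your proposal is correct and follows essentially the same approach as the paper: induction on $k$ (which the paper phrases as a local cycle in a circular proof), with the reduction case handled by the inductive hypothesis and the terminal case dispatched by case analysis on the fullness derivation, invoking Theorem~\ref{thm:empty} for $\rfull{\imp}$ and vacuity for $\rfull{\with}$. Your explicit remark that the terminal $e$ need not have the canonical form for $s$, and that vacuity of the quantifiers is what rescues the argument, matches the paper's implicit treatment and is a helpful clarification.
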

\begin{proof}
  (see Appendix~\ref{app:fullness})
\end{proof}

Note that there is no rule that would allow us to conclude that $s = t_1 \imp s_2$ is full
if $s_2$ is full.  Such a rule would be unsound: consider $\sof{\record{\,}}{\with*{\,}}$.
It is not the case that $\sof{\record{\,}}{\one \imp \with*{\,}}$, so
$\one \imp \with*{\,}$ is not full, even though $\with*{\,}$ is.  Similarly,
$\sof{\lam{x}{\record{\,}}}{\one \imp \with*{\,}}$ but
$\nsof{\lam{x}{\record{\,}}}{\with*{l : \with*{\,}}}$, so $\with*{l : \with*{\,}}$ is not
full.

\subsection{Syntactic Subtyping}
\label{sec:syntactic-sub}

The rules for syntactic subtyping build a \emph{circular derivation} of $\p{t} \leq \p{u}$
and $\n{s} \leq \n{r}$.  A circularity arises when a goal $t \leq u$ or $s \leq r$ arises
as a subgoal strictly above a goal  that is of one of these two forms.
In general, we use $t$ and $u$ to stand for
positive type names and $s$ and $r$ for negative type names without annotating those
names.  The polarity will also be clear from the context.  Moreover, in the interest of
saving space, we write $t = \p{\tau}$ and $s = \n{\sigma}$ when these definitions are in
the fixed global signature $\Sigma$.  The rules can be found in Figure~\ref{fig:subtyping}.
In particular, we would like to highlight the $\p{\rsub{\bot}}$,
$\n{\rsub{\bot}}$, and $\rsub{\top}$ rules, which incorporate emptiness and fullness into
syntactic subtyping.  For example, among other subtypings, the $\p{\rsub{\bot}}$ rule
establishes $t \leq u$ whenever $t = t_1 \tensor t_2$ and either $t_1 \emp$ or $t_2 \emp$.

\begin{figure}
\begin{mathpar}
  \infer[\rsub{\tensor}]
  {t \leq u}
  {t = t_1 \tensor t_2 & u = u_1 \tensor u_2
    & t_1 \leq u_1 & t_2 \leq u_2}
  \and
  \infer[\rsub{\one}]
  {t \leq u}
  {t = \one & u = \one}
  \and
  \infer[\rsub{\oplus}]
  {t \leq u}
  {t = \plus*[\ell \in L]{\ell : t_\ell}
    & u = \plus*[k \in K]{k : u_k}
    & \forall \ell \in L.\, t_\ell \emp \lor (\ell \in K \land t_\ell \leq u_\ell)}
  \and
  \infer[\rsub{\dn}]
  {t \leq u}
  {t = \dn s & u = \dn r & s \leq r}
  \quad
  \infer[\rsub{\imp}]
  {s \leq r}
  {s = t_1 \imp s_2 & r = u_1 \imp r_2
    & u_1 \leq t_1 & s_2 \leq r_2}
  \and
  \infer[\rsub{\up}]
  {s \leq r}
  {s = \up t & r = \up u
    & t \leq u}
  \and
  \infer[\rsub{\with}]
  {s \leq r}
  {s = \with*[\ell \in L]{\ell : s_\ell}
    & r = \with*[j \in K]{j : r_j}
    & \mbox{$\forall j \in K.\, j \in L \land s_j \leq r_j$}}
  \\
  \infer[\p{\rsub{\bot}}]{t \leq u}{t \emp & u = \p{\tau}}
  \quad
  \infer[\n{\rsub{\bot}}]{s \leq r}{s = \up t & t \emp & r = \n{\sigma}}
  \quad
  \infer[\rsub{\top}]{s \leq r}{s = \n{\sigma} & r \full}
\end{mathpar}
\caption{Circular Derivation Rules for Subtyping}
\label{fig:subtyping}
\end{figure}

\begin{example}
  We revisit Example~\ref{ex:bin} to show that $\ms{pos} \leq \ms{std}$.  We have
  annotated each subgoal from the $\rsub{\oplus}$ rule with the corresponding label;
  we have elided the reference to the $\rsub{\oplus}$ rule in the derivation for
  lack of space.
  Again, we normalize the signature before running the algorithm.
  \begin{tabbing}
    $\p{\ms{u}} = \one$ \\
    $\p{\ms{std}} =$ \;\= $\plusn{\mb{e} : \ms{u}, \mb{b0} : \ms{pos}, \mb{b1} : \ms{std}}$ \\
    $\p{\ms{pos}} =$ \> $\plusn{\phantom{\mb{e} : \ms{u},}\, \mb{b0} : \ms{pos}, \mb{b1} : \ms{std}}$
  \end{tabbing}
  \begin{mathpar}
    \infer[]
    {\ms{pos} \leq \ms{std}}
    {\infer[]
      {[\mb{b0}]\; \ms{pos} \leq \ms{pos}\; (*)}
      {[\mb{b0}]\; \deduce{\ms{pos} \leq \ms{pos}}{\cycle{*}}
        & \infer[]
        {[\mb{b1}]\; \ms{std} \leq \ms{std}\; (\dagger)}
        {\infer[\rsub{\one}]{[\mb{e}]\; \ms{u} \leq \ms{u}}{}
          & \deduce{[\mb{b0}]\; \ms{pos} \leq \ms{pos}}{\cycle{*}}
          & \!\deduce{[\mb{b1}]\; \ms{std} \leq \ms{std}}{\cycle{\dagger}}}}
      \deduce{[\mb{b1}]\; \ms{std} \leq \ms{std}}{\vdots}}
  \end{mathpar}
\end{example}

From a circular derivation we now construct a valid circular proof in an intuitionistic
metalogic~\cite{Berardi18cmcs}.  For example, $t \leq u$ is interpreted as
$t \subseteq u$, that is, every value in $t$ is also a value in $u$.  We actually prove a
slightly stronger theorem, namely that for the step index on both sides
can remain the same.

\begin{theorem}[Soundness of Subtyping]
  \label{thm:sd-subtyping}
  \mbox{}
  \begin{enumerate}
  \item If $t \leq u$ then $\sof[k]{v}{t} \vdash \sof[k]{v}{u}$ for all $k$ and $v$ (and so, $t \subseteq u$).
  \item If $s \leq r$ then $\sof[k]{e}{s} \vdash \sof[k]{e}{r}$ for all $k$ and $e$ (and so, $s \subseteq r$).
\end{enumerate}
\end{theorem}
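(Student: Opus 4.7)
The plan is to mirror the strategy used for the Emptiness and Fullness theorems: translate each syntactic subtyping judgment into the metalogical sequent $\sof[k]{v}{t} \vdash \sof[k]{v}{u}$ (resp.\ $\sof[k]{e}{s} \vdash \sof[k]{e}{r}$), with $k$ and $v$ (resp.\ $e$) treated as implicitly universally quantified parameters, and then translate the circular derivation of $t \leq u$ (resp.\ $s \leq r$) rule by rule into a circular derivation of the corresponding sequent. Each cycle in the syntactic derivation becomes a cycle at the corresponding sequent in the metaproof, and we must verify that the resulting circular derivation satisfies a valid trace condition in the sense of \textbf{CLKID$^\omega$}.

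For the positive cases I would unfold the definition of $\sof[k]{v}{t}$ on the antecedent side according to the structural form of $t$ provided by the rule. For $\rsub{\tensor}$, the hypothesis forces $v = \pair{v_1,v_2}$ with $\sof[k]{v_i}{t_i}$, and the translated premises $\sof[k]{v_i}{t_i} \vdash \sof[k]{v_i}{u_i}$ let us rebuild $\sof[k]{v}{u}$. The rule $\rsub{\oplus}$ is slightly subtler: the antecedent forces $v = \inj{j}{v_j}$ for some $j \in L$ with $\sof[k]{v_j}{t_j}$, and we case-split on the disjunct in the premise --- if $t_j \emp$, Theorem~\ref{thm:empty} makes the hypothesis contradictory; otherwise $j \in K$ and the translated premise for $t_j \leq u_j$ gives the conclusion. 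The rule $\rsub{\dn}$ forces $v = \thunk{e}$ and hands off to a negative sequent $\sof[k]{e}{s} \vdash \sof[k]{e}{r}$.

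For the negative cases the argument pivots on the step index. At $k = 0$, both $\sof[0]{e}{s}$ and $\sof[0]{e}{r}$ hold unconditionally. At $k+1$, we split on whether $e$ steps: if $e \reduces e'$, we close the step with the cycle at index $k$; if $e \term$, we unfold $\sof*[k+1]{e}{\cdot}$ on both sides. For $\rsub{\up}$ the terminal case yields $e = \return{v}$ with $\sof[k]{v}{t}$, and the translated positive premise gives $\sof[k]{v}{u}$; for $\rsub{\with}$ we appeal to the premises at each shared label; for $\rsub{\imp}$ contravariance appears when, given $\sof[i]{v}{u_1}$ with $i \leq k$, we use the translated premise for $u_1 \leq t_1$ to obtain $\sof[i]{v}{t_1}$, feed it to the hypothesis, and close with the translated premise for $s_2 \leq r_2$. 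The $\rsub{\bot}$ and $\rsub{\top}$ rules reduce directly to Theorems~\ref{thm:empty} and~\ref{thm:full}.

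The main obstacle is verifying the trace condition that legitimizes the cycles in the constructed metaproof. A cycle in the positive layer reconnects the judgment at the same index $k$, but only after unfolding a constructor, so the value is a proper subexpression of the original. A cycle in the negative layer, by contrast, passes through either a reduction step or a shift modality, both of which decrement the step index. Contractivity of type names guarantees that every cycle must traverse at least one constructor or shift, so along any infinite path through the metaproof either the step index strictly decreases or, with $k$ fixed, the positive value strictly decreases. This lexicographic measure, mirroring the nested induction structure implicit in the definition of semantic typing, supplies precisely the trace that \textbf{CLKID$^\omega$} demands.
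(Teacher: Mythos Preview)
Your proposal is correct and follows essentially the same approach as the paper: compositionally translate the circular subtyping derivation into a circular metaproof, justified by the lexicographic decrease of $(k, v/e)$ along every cycle, with the $\rsub{\bot}$ and $\rsub{\top}$ cases discharged by the Emptiness and Fullness theorems. The paper makes explicit one point you leave implicit---in the terminal branch for $\rsub{\imp}$ and $\rsub{\with}$, the step-index decrease for the covariant premise comes not from a shift modality but from the fact that $\app{e}{v}$ and $\proj{j}{e}$ are themselves non-terminal (since $e$ is terminal) and therefore reduce, dropping the index to $k-1$---but your trace-condition paragraph is consistent with this once ``passes through a reduction step'' is read to include these induced reductions.
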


\begin{proof}
  We proceed by a compositional translation of the circular derivation of subtyping into a
  circular derivation in the metalogic.  For each rule we construct a derived rule
  on the semantic side with corresponding premises and conclusion.

  When the subtyping proof is closed due to a cycle, we close the proof in the metalogic
  with a corresponding cycle.  In order for this cycle to be valid, it is critical that
  the judgments in the premises of the derived rule are \emph{strictly smaller} than the
  judgments in the conclusion.  Since our mixed logical relation is defined by nested
  induction, first on the step index $k$ and second on the structure of the value $v$
  or expression $e$, the lexicographic measure $(k,v/e)$ should strictly decrease.
  Some sample cases can be found in Appendix~\ref{app:subtyping}.
\end{proof}

Besides soundness, reflexivity and transitivity of syntactic
subtyping are two other properties that we prove for assurance that the syntactic
subtyping rules are sensible and have no obvious gaps.
These proofs can be found in Appendix~\ref{app:refl-trans}.
Ligatti et~al.~\cite{Ligatti17toplas} also consider a notion of \emph{preciseness}
as a syntactic means for judging the correctness of their syntactic subtyping rules.
As they mention in~\cite[Sec.\ 6.2]{Ligatti17toplas}, this property is highly
language-sensitive, depending on the choice of evaluation strategy (strict vs.\ nonstrict),
where nonstrict subtyping relies on ``which primitives are present in the language,
sometimes in nonorthogonal ways.'' Moreover, preciseness requires syntactically
well-typed counterexamples, whereas we also consider ill-typed terms.
We can straightforwardly prove that syntactic subtyping for purely
positive types (in relation to strict evaluation) is complete with respect to
semantic subtyping.  We leave the preciseness of syntactic subtyping of negative
types for future consideration.

\section{Syntactic Typing and Soundness}
\label{sec:syntactic-typing}
\label{sec:type-sound}

We now introduce a syntactic typing judgment, at the moment without regard to
decidability.  Such a judgment is often called \emph{declarative typing} in contrast with
what is \emph{algorithmic typing} in Section~\ref{sec:bidirectional} (Figure~\ref{fig:bidirectional}).
We prove that all
syntactically well-typed terms are also semantically well-typed.  Conceptually, a
declarative system is \emph{unnecessary} because the bidirectional system is very closely
related, and there are no problems in justifying the soundness of the the bidirectional
system directly with respect to our semantics.  Besides the fact that there is a small
amount of additional bureaucracy (the rules are divided between four judgments instead of
two, and there are two additional rules), it is also the case that the standard versions
of call-by-name and call-by-value use a similar form of declarative typing and are
therefore easier to relate to our system in
Section~\ref{sec:call-by-name-and-value}.

Because all declarations in a signature can be mutually recursive, each declaration
$f : \n{\sigma} = e$ is checked assuming all other declarations are valid.  The soundness
proof below justifies this.  The complete set of judgments and rules with their
corresponding presuppositions can be found in Appendix~\ref{app:syntactic-typing},
Figures~\ref{fig:typing-judgments} and~\ref{fig:declarative-typing}.  For these rules, we need contexts $\Gamma$,
defined as usual with the presupposition that all variables declared in a context
are distinct.
\begin{align*}
  \ctx &\Coloneqq \ctxe \mid \ctx , x{:}\p{\tau}
\end{align*}
The rules for key judgments $\Gamma \vdash v : \p{\tau}$ and
$\Gamma \vdash e : \n{\sigma}$ can be obtained from the bidirectional rules in
Section~\ref{sec:bidirectional} by replacing both $v \checks \p{\tau}$ and
$v \synths \p{\tau}$ with $v : \p{\tau}$ and, similarly, $e \checks \n{\sigma}$ and
$e \synths \n{\sigma}$ with $e : \n{\sigma}$.  Moreover, one should drop the two
annotation rules $\p{\jrule{ANNO}}$ and $\n{\jrule{ANNO}}$ because these are
not in the source language for declarative typing.

We would like to show that the syntactic typing rules are sound with respect to their
semantic interpretation.  For that, we first define simultaneous substitutions $\theta$ of
closed values for variables and $\sof[k]{\theta}{\Gamma}$ for the semantic interpretation
of contexts as sets of substitutions at step index $k$.
\begin{align*}
  \theta &\Coloneqq \cdot \mid \theta, v/x \\
  \sof[k]{(\cdot)}{(\cdot)} &\quad\mbox{always} \\
  \sof[k]{(\theta, v/x)}{(\Gamma, x:\p{\tau})} &\defd \mbox{$\sof[k]{\theta}{\Gamma}$ and $\sof[k]{v}{\p{\tau}}$}
\end{align*}
On the semantic side, we define
\begin{enumerate}
\item $\Gamma \models \sof[k]{v}{\p{\tau}}$ iff for all
  $\sof[k]{\theta}{\Gamma}$ we have $\sof[k]{v[\theta]}{\p{\tau}}$
\item $\Gamma \models \sof[k]{e}{\n{\sigma}}$ iff for all
  $\sof[k]{\theta}{\Gamma}$ we have $\sof[k]{e[\theta]}{\n{\sigma}}$
\end{enumerate}

We now can prove a number of lemmas, one for each syntactic typing rule.  A representative
selection of the lemmas, each written as an admissible rule for semantic typing, can be given by:
\begin{mathpar}
    \infer-{\sof[k]{\app{e}{v}}{\n{\sigma}}}
    {\sof[k]{e}{\p{\tau} \imp \n{\sigma}}
      & \sof[k]{v}{\p{\tau}}}
    \and
    \infer-{\sof[k]{\lam{x}{e}}{\p{\tau} \imp \n{\sigma}}}
    {x : \p{\tau} \models \sof[k]{e}{\n{\sigma}}}
    \and
    \infer-{\sof[k]{\pair{v_1,v_2}}{\p{\tau_1} \tensor \p{\tau_2}}}
    {\sof[k]{v_1}{\p{\tau_1}} & \sof[k]{v_2}{\p{\tau_2}}}
    \and
    \infer-{\sof[k]{\letpair{x,y}{v}{e}}{\n{\sigma}}}
    {\sof[k]{v}{\p{\tau_1} \tensor \p{\tau_2}}
      & x : \p{\tau_1}, y : \p{\tau_2} \models \sof[k]{e}{\n{\sigma}}}
    \and
    \infer-{\sof[k]{\return{v}}{\up \p{\tau}}}
    {\sof[k]{v}{\p{\tau}}}
    \and
    \infer-{\sof[k]{\force{v}}{\n{\sigma}}}
    {\sof[k]{v}{\dn \n{\sigma}}}
    \and
    \infer-{\sof[k]{\letup{x}{e_1}{e_2}}{\n{\sigma}}}
    {\sof[k]{e_1}{\up \p{\tau}}
      & x : \p{\tau} \models \sof[k]{e_2}{\n{\sigma}}}
    \and
    \infer-{\sof[k]{\thunk{e}}{\dn \n{\sigma}}}
    {\sof[k]{e}{\n{\sigma}}}
    \and
    \infer-{\sof[k]{v}{\p{\sigma}}}
    {\sof[k]{v}{\p{\tau}} & \p{\tau} \leq \p{\sigma}}
\end{mathpar}

The proofs are somewhat interesting: some require
induction on $k$, others follow more directly by definition.  Due to a lack of space, the
proofs can be found in Appendix~\ref{app:sd-typing}, each admissible rule formulated as a
separate lemma.

\begin{theorem}[Soundness of Syntactic Typing]\label{thm:type-sound}
  Assume $\sof[k]{\theta}{\Gamma}$.
\begin{enumerate}
\item If $\Gamma \vdash v : \p{\tau}$ then $\sof[k]{v[\theta]}{\p{\tau}}$
\item If $\Gamma \vdash e : \n{\sigma}$ then $\sof[k]{e[\theta]}{\n{\sigma}}$
\end{enumerate}
\end{theorem}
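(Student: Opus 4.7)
The plan is a nested induction: an outer strong induction on the step index $k$ and an inner mutual induction on the two syntactic typing derivations. For each typing rule, the strategy is to invoke the corresponding admissible rule listed just before the theorem at the same index $k$, discharging each of its premises by the inner inductive hypothesis applied at the unchanged $k$ to an immediate subderivation. Because the syntactic rules correspond one-for-one with admissible semantic rules (supplemented by the one for expression names, discussed below), this reduces the theorem to case analysis together with two supporting observations about substitutions and contexts.

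Most cases are mechanical. For a rule whose premises mention only subterms, such as application, pair introduction, thunking, or return, the inner inductive hypothesis directly yields $\sof[k]{v_i[\theta]}{\p{\tau_i}}$ or $\sof[k]{e_i[\theta]}{\n{\sigma_i}}$ for each subterm, and the matching admissible rule assembles the conclusion. For a binding rule such as $\lam{x}{e}$, the admissible rule asks us to establish $x{:}\p{\tau} \models \sof[k]{e[\theta]}{\n{\sigma}}$; I would pick an arbitrary $\sof[k]{v}{\p{\tau}}$, observe that then $\sof[k]{(\theta, v/x)}{(\Gamma, x{:}\p{\tau})}$ by definition of the context interpretation, and apply the inner IH to the subderivation $\Gamma, x{:}\p{\tau} \vdash e : \n{\sigma}$ to obtain $\sof[k]{e[\theta, v/x]}{\n{\sigma}}$. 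The let-, match-, and case-binding rules follow the same pattern, possibly extending $\theta$ by two values. Subsumption is discharged by the admissible subtyping rule together with Theorem~\ref{thm:sd-subtyping}.

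The main obstacle, and the reason the outer induction on $k$ is needed at all, is the rule typing a defined expression name $f$ at its declared type $\n{\sigma}$, with $f \colon \n{\sigma} = e \in \Sigma$. Because the signature is mutually recursive, the typing derivation of the body $e$ is not a subderivation we can recurse on inside the inner induction. Instead, I would exploit the reduction rule $f \reduces e$ from Section~\ref{sec:syntax}: at index $0$ the goal $\sof[0]{f}{\n{\sigma}}$ is vacuous, while at index $k+1$ the definition of $\sof[k+1]{f}{\n{\sigma}}$ unfolds, since $f$ is not terminal, to the requirement $\sof[k]{e}{\n{\sigma}}$. The body $e$ is well-typed in the empty context by well-formedness of the signature, so the \emph{outer} inductive hypothesis at $k < k+1$, applied with the empty substitution, supplies $\sof[k]{e}{\n{\sigma}}$. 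This application of the outer IH is precisely where step indexing tames the recursion inherent in the signature; a naive induction on typing derivations alone would be circular here.
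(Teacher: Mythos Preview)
Your argument is correct, and its logical content matches the paper's: the compositional lemmas keep the step index fixed for every rule except \jrule{NAME}, and at \jrule{NAME} the reduction $f \reduces e$ buys one step so that the body can be handled at a strictly smaller index. Where you differ is in packaging. The paper does not set up an explicit nested induction on $(k,\text{derivation})$; instead it constructs a \emph{circular} derivation in the metalogic \textbf{CLKID$^\omega$}: each syntactic rule is translated compositionally via the admissible lemmas, and when a name $f$ is encountered a second time one closes a cycle, the cycle being valid precisely because $k$ has dropped between the two occurrences. Finiteness of the circular proof is argued from the finiteness of $\Sigma$ and the fact that each $f$ has a single declared type. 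Your presentation is the unwound, elementary version of this: what the paper treats as ``form a valid cycle back to the earlier occurrence of $f$'' you treat as ``appeal to the outer induction hypothesis at $k-1$ on the body's typing derivation from the signature.'' The payoff of your route is that it requires no circular-proof machinery and no bookkeeping of which names have already been expanded; the payoff of the paper's route is uniformity with their proofs of emptiness, fullness, and subtyping soundness, all of which are cast in the same circular framework.
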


\begin{proof}
  We construct a circular proof based on the typing derivation, and the typing derivations
  for all definitions $f : \n{\sigma} = e \in \Sigma$.  There are three kinds of cases
  (see Appendix~\ref{app:sd-typing} for samples of each):
  \begin{enumerate}
  \item The case of variables $x$ follows by assumption on $\theta$.
  \item In the case of names $f : \n{\sigma} = e \in \Sigma$ we either expand to $e$ or
    close the proof with a cycle if we have expanded $f$ already.
    \item All other rules follow by the lemmas presented above.

    In all these lemmas the step index remains constant for the premises, which is important
    so we can form a circular proof in the case of names.
  \end{enumerate}
\end{proof}

Because soundness is stated for all $\theta$, $\Gamma$, and $k$, we can immediately obtain
corollaries such as that $\cdot \vdash v : \p\tau$ implies that $\sof{v}{\p\tau}$, and
that $\cdot \vdash e : \n\sigma$ implies that $\sof{e}{\n\sigma}$.

\section{Bidirectional Typing}
\label{sec:bidirectional}

We now shift from our declarative typing system into an algorithmic one that describes a
practical decision procedure. We choose to express it as a bidirectional typechecking
algorithm, particularly to avoid inference issues regarding
subsumption~\cite{Jafery17popl} and our extensive use of type names and variant records,
as well as the approach's deep integration with polarized logics~\cite[Section 8.3]{Dunfield19corr}.
Moreover, bidirectional typing is quite robust with respect to
language extensions where various inference procedures are not.

Bidirectional typechecking~\cite{Pierce98popl} has been a popular choice for presenting
algorithmic typing, especially when concerned with subtyping~\cite{Dunfield19popl},
and is decidable for a wide range of rich type systems. This approach splits each of the
typing judgments, $\ctx \vdash v : \p{\tau}$ and $\ctx \vdash e : \n{\sigma}$, into
\emph{checking} (\!\!$\checks$\!\!) and \emph{synthesis} (\!\!$\synths$\!\!) judgments
for values and expressions, respectively:
$\Gamma \vdash v \checks \p{\tau}$, $\Gamma \vdash v \synths \p{\tau}$ and
$\Gamma \vdash e \checks \n{\sigma}$, $\Gamma \vdash e \synths \n{\sigma}$.

\begin{figure}[t]
\begin{mathpar}
  \infer[\irule{\tensor}]{\ctx \vdash \pair{v_1,v_2} \checks \p{\tau_1} \tensor \p{\tau_2}}
  {\ctx \vdash v_1 \checks \p{\tau_1} & \ctx \vdash v_2 \checks \p{\tau_2}}
  \quad
  \infer[\erule{\tensor}]{\ctx \vdash \letpair{x,y}{v}{e} \checks \n{\sigma}}
  {\ctx \vdash v \synths \p{\tau_1} \tensor \p{\tau_2} &
    \ctx , x{:}\p{\tau_1} , y{:}\p{\tau_2} \vdash e \checks \n{\sigma}}
  \\
  \infer[\jrule{VAR}]{\ctx \vdash x \synths \p{\tau} }{x : \p{\tau} \in \ctx}
  \and
  \infer[\irule{\one}]{\ctx \vdash \pair{} \checks \one}{}
  \and
  \infer[\erule{\one}]{\ctx \vdash \letpair{}{v}{e} \checks \n{\sigma}}
  {\ctx \vdash v \synths \one & \ctx \vdash e \checks \n{\sigma}}
  \\
  \infer[\irule{\dn}]{\ctx \vdash \thunk{e} \checks \dn \n{\sigma}}
  {\ctx \vdash e \checks \n{\sigma}}
  \and
  \infer[\erule{\dn}]{\ctx \vdash \force{v} \synths \n{\sigma}}
  {\ctx \vdash v \synths \dn \n{\sigma}}\and
  \infer[\irule{\plus}]{\ctx \vdash \inj{j}{v} \checks \plus*[\ell \in L]{\ell\colon \p{\tau_\ell}}}
  {\text{($j \in L$)} & \ctx \vdash v \checks \p{\tau_j}}
  \and
  \infer[\erule{\plus}]{\ctx \vdash \case[\ell \in L]{v}{{\ell}{x_\ell} => e_\ell} \checks \n{\sigma}}
  {\ctx \vdash v \synths \plus*[\ell \in L]{\ell\colon \p{\tau_\ell}} &
    \metaall{\ell \in L} \ctx , x_\ell{:}\p{\tau_\ell} \vdash e_\ell \checks \n{\sigma}}
  \quad
  \infer[\irule{\imp}]{\ctx \vdash \lam{x}{e} \checks \p{\tau} \imp \n{\sigma}}
  {\ctx , x{:}\p{\tau} \vdash e \checks \n{\sigma}}
  \\
  \infer[\erule{\imp}]{\ctx \vdash \app{e}{v} \synths \n{\sigma}}
  {\ctx \vdash e \synths \p{\tau} \imp \n{\sigma} & \ctx \vdash v \checks \p{\tau}}
  \and
  \infer[\irule{\with}]{\ctx \vdash \record[\ell \in L]{\ell = e_\ell} \checks \with*[\ell \in L]{\ell\colon \n{\sigma_\ell}}}
  {\metaall{\ell \in L} \ctx \vdash e_\ell \checks \n{\sigma_\ell}}
  \and
  \infer[\erule{\with}_k]{\ctx \vdash \proj{j}{e} \synths \n{\sigma_j}}
  {\ctx \vdash e \synths \with*[\ell \in L]{\ell\colon \n{\sigma_\ell}} &
    \text{($j \in L$)}}
  \quad
  \infer[\jrule{NAME}]
  {\Gamma \vdash f \synths \n{\sigma}}
  {f : \n{\sigma} =  e \in \Sigma}
  \quad
  \infer[\irule{\up}]{\ctx \vdash \return{v} \checks \up \p{\tau}}
  {\ctx \vdash v \checks \p{\tau}}
  \and
  \infer[\erule{\up}]{\ctx \vdash \letup{x}{e_1}{e_2} \checks \n{\sigma}}
  {\ctx \vdash e_1 \synths \up \p{\tau} & \ctx , x{:}\p{\tau} \vdash e_2 \checks \n{\sigma}}
  \quad
  \infer[\p{\jrule{SUB}}]
  {\ctx \vdash v \checks \p{\sigma} }
  {\ctx \vdash v \synths \p{\tau} & \p{\tau} \leq \p{\sigma}}
  \\
  \infer[\n{\jrule{SUB}}]
  {\ctx \vdash e \checks \n{\sigma} }
  {\ctx \vdash e \synths \n{\tau} \enspace \n{\tau} \leq \n{\sigma}}
  \quad
  \infer[\!\p{\jrule{ANNO}}]
  {\ctx \vdash (v : \p{\tau}) \synths \p{\tau}}
  {\ctx \vdash v \checks \p{\tau}}
  \quad
  \infer[\!\n{\jrule{ANNO}}]
  {\ctx \vdash (e : \n{\sigma}) \synths \n{\sigma}}
  {\ctx \vdash e \checks \n{\sigma}}
\end{mathpar}
\caption{Bidirectional Typing}
\label{fig:bidirectional}
\end{figure}

We follow the recipe laid out
by~\cite{Davies00icfp,Dunfield04popl}: introduction rules check and elimination
rules synthesize.  More precisely, the \emph{principal judgment}, premise or conclusion,
has the connective being introduced by checking or eliminated by synthesis.

We introduce two new forms of syntactic values $(v : \p{\tau})$ and computations $(e : \n{\sigma})$ which exist
purely for typechecking purposes and are erased before evaluation.
This is not actually used on any of our examples because definitions in
the signature already require annotations.

Applying the recipe, we can easily convert our declarative rules into bidirectional ones,
as laid out in Section~\ref{sec:syntactic-typing}. The only rules we add to the system are
$\p{\jrule{ANNO}}$ and $\n{\jrule{ANNO}}$, which allow us to prove completeness.
All the examples in Section~\ref{sec:sample-programs}
check with these rules and only require type annotations at the top level of the
declarations in the signature.

Due to our use of equirecursive types, the implementation of this system can closely
follow the structure of the rules in Figures~\ref{fig:empty},~\ref{fig:subtyping},
and~\ref{fig:bidirectional}.  First, as mentioned in Section~\ref{sec:empty-full}, we
convert the signature into a normal form that alternates structural types and type names.
Then, we determine all the empty type names using a memoization table for $\p{t} \emp$ to
easily construct circular derivations of emptiness (bottom-up) using the rules in
Figure~\ref{fig:empty}. If constructing such a derivation fails then $\p{t}$ is
nonempty. Fullness is derived from emptiness non-recursively.
From there, we build a memoization table for $\p{t} \leq \p{u}$ and $\n{s} \leq \n{r}$,
for positive and negative type names, so we can construct circular derivations
of subtyping between names (also bottom-up). This happens lazily, only
computing $\p{t} \leq \p{u}$ or $\n{s} \leq \n{r}$ if typechecking
requires this information.

Bidirectional typing, given subtyping, follows the rules
in Figure~\ref{fig:bidirectional}, including the rules for positive and negative subsumption, but
it requires that the types in annotations are also translated to normal form, possibly
introducing new (user-invisible) definitions in the signature.

The theorems (with straightforward proofs) for soundness and completeness of bidirectional
typechecking can be found in Appendix~\ref{app:bidir} (Theorems~\ref{thm:bidir-sound}
and~\ref{thm:bidir-complete}).

\section{Interpretation of Isorecursive Types}
\label{sec:isorecursive}

Our system uses equirecursive types, which allow many subtyping relations since there are no
term constructors for folding recursive types.  Moreover, equirecursive types support the
normal form where constructors are always applied to type names (see
Section~\ref{sec:empty-full}), simplifying our algorithms, their description and
implementations.  Most importantly, perhaps, equirecursive types are more general because
we can directly interpret isorecursive types, which are
embodied by \emph{fold} and \emph{unfold} operators, into our equirecursive
setting and apply our results.

We give a short sketch here; details can be found in Appendix~\ref{app:isorecursive}.  For
every recursive type $\mu \p\alpha.\, \p\tau$ we introduce a definition
$\p{t} = \plus*[]{\foldmu \colon [t/\alpha]\tau}$.  Similarly, for every corecursive type
$\nu \n\alpha.\, \n\sigma$ we introduce a definition
$s^- = \with*[]{\foldnu \colon [s/\alpha]\sigma}$.  Now, the labels $\mathsf{fold_{\mu}}$
and $\mathsf{fold_{\nu}}$ tagging the sole choice of a unary variant or lazy record,
respectively, play exactly the role that the $\mathsf{fold}$ constructor plays for
recursive types.  This entirely straightforward translation is enabled by our
generalization of the binary sum and lazy pairs to variant and lazy records, respectively,
so we can use them in their unary form.

\section{Call-by-Name and Call-by-Value}
\label{sec:call-by-name-and-value}

More familiar than call-by-push-value (CBPV) are the lazy, call-by-name (CBN) and eager, call-by-value (CBV)
operational semantics that underlie the Haskell and ML families of functional programming
languages.  Levy~\cite{Levy06hosc} has shown that both CBN and CBV exist as
fragments of CBPV, exhibiting translations from CBN and
CBV types and terms into the CBPV language.  In this section,
we derive systems of subtyping for CBN and CBV from these translations
into ours and prove them sound and complete.  We discover that they are minor variants
of existing systems for CBN~\cite{Gay05acta} and CBV~\cite{Ligatti17toplas} subtyping.

Because polarized subtyping is able to connect Levy's translations with existing systems
for CBN and CBV subtyping, it serves as further evidence that those
prior translations and our subtyping rules are, in some sense, canonical.  Moreover, it is
yet one more piece of evidence that CBPV is an effective synthesis of
evaluation orders in which to study the theory of functional programming.

\subsection{Call-by-name}
\label{sec:call-by-name}

Consider a CBN language with the following types.
The language of terms and the standard statics and dynamics
can be found in Appendix~\ref{app:cbn}.
\begin{align*}
  \tau, \sigma &\Coloneqq \tau \imp \sigma \mid \tau_1 \tensor \tau_2 \mid \one \mid \plus*[\ell \in L]{\ell\colon \tau_\ell} \mid \with*[\ell \in L]{\ell\colon \tau_\ell}
\end{align*}
In this section, we will focus on function types $\tau \imp \sigma$ and variant record
types $\plus*[\ell \in L]{\ell\colon \tau_\ell}$ and their corresponding terms.

Levy~\cite{Levy06hosc} presents translations, $\cbn*{}$, from CBN types and terms to
CBPV \emph{negative} types and \emph{expressions}, respectively.  An
auxiliary translation, $\dn \cbn*{}$, on contexts is also used.  Here, we elide the
translation of terms other than variables and the terms for function and variant record
types; the full translation on terms can be found in~\cite{Levy06hosc}.
\begin{equation*}
  \begin{gathered}[t]
    \text{\emph{Types}} \\
  \begin{aligned}
    \cbn*{\tau \imp \sigma} &= \dn \cbn{\tau} \imp \cbn{\sigma} \\
    \cbn*{\tau_1 \tensor \tau_2} &= \up (\dn \cbn{\tau_1} \tensor \dn \cbn{\tau_2}) \\
    \cbn*{\one} &= \up \one \\
    \cbn*{\plus*[\ell \in L]{\ell\colon \tau_\ell}} &= \up \plus*[\ell \in L]{\ell\colon \dn \cbn{\tau_\ell}} \\
    \cbn*{\with*[\ell \in L]{\ell\colon \sigma_\ell}} &= \with*[\ell \in L]{\ell\colon \cbn{\sigma_\ell}}
  \end{aligned}
  \end{gathered}
  \qquad
  \begin{gathered}[t]
    \text{\emph{Terms}} \\
  \begin{aligned}
    \cbn*{x} &= \return{x} \\
    \cbn*{\lam{x}{e}} &= \lam{x}{\cbn{e}} \\
    \cbn*{\app{e_1}{e_2}} &= \app{\cbn{e_1}}{(\thunk{\cbn{e_2}})}
  \end{aligned}
  \end{gathered}
\end{equation*}
We also translate type names $t$ to fresh type names $\cbn{t}$, translating the body of $t$'s definition and inserting additional type names as required for the normal form that alternates between structural types and type names.
Levy~\cite{Levy06hosc} proves that well-typed terms are well-typed after the translation to CBPV is applied.
Our syntactic typing rules are the same, so the theorem carries over to our setting.

We adapt the subtyping system of Gay and Hole~\cite{Gay05acta} to a $\lambda$-calculus
from the $\pi$-calculus, which reverses the direction of subtyping from their classical
system and adds empty records, obtaining the CBN syntactic subtyping rules shown
in Figure~\ref{fig:cbn-subtyping}.

These rules introduce a CBN syntactic subtyping judgment $t \leq u$.  To
distinguish it from CBPV syntactic subtyping, we will take care in this
section to always include superscript pluses and minuses for CBPV type
names, with CBN type names being unmarked.  As for CBPV syntactic
subtyping, the rules for CBN subtyping shown in Figure~\ref{fig:cbn-subtyping}
build a \emph{circular derivation}.  Just as before, a circularity arises when a goal
$t \leq u$ arises as a proper subgoal of itself.

\begin{figure}
\begin{mathpar}
  \infer[\rsub{\imp}_{\jrule{N}}]{t \leq u}
  {t = t_1 \imp t_2 & u = u_1 \imp u_2 & u_1 \leq t_1 & t_2 \leq u_2}
  \\
  \infer[\rsub{\tensor}_{\jrule{N}}]{t \leq u}
  {t = t_1 \tensor t_2 & u = u_1 \tensor u_2 & t_1 \leq u_1 & t_2 \leq u_2}
  \and
  \infer[\rsub{\one}_{\jrule{N}}]{t \leq u}
  {t = \one & u = \one}
  \\
  \infer[\rsub{\plus}_{\jrule{N}}]{t \leq u}
  {t = \plus*[\ell \in L]{\ell\colon t_\ell} & u = \plus*[j \in J]{j\colon u_j} &
    \text{($L \subseteq J$)} & \metaall{\ell \in L} t_\ell \leq u_\ell}
  \and
  \infer[\rsub{\with}_{\jrule{N}}]{t \leq u}
  {t = \with*[\ell \in L]{\ell\colon t_\ell} & u = \with*[j \in J]{j\colon u_j} &
   \text{($L \supseteq J$)} & \metaall{j \in J} t_j \leq u_j}
  \\
  \infer[\rsub{\bot}_{\jrule{N}}]{t \leq u}
  {t = \plus*{\,} & u = \sigma}
  \and
  \infer[\rsub{\top}_{\jrule{N}}]{t \leq u}
  {t = \tau & u \full}
  \and
  \infer[\rfull{\with}_{\jrule{N}}]{t \full}
  {t = \with*{\,}}
\end{mathpar}
\caption{Circular Derivation Rules for Call-by-Name Subtyping}
\label{fig:cbn-subtyping}
\end{figure}

These rules are exact analogues of those of Gay and Hole~\cite{Gay05acta}, with one exception.
The three rules involving empty variants and records,
namely $\rsub{\bot}_{\jrule{N}}$, $\rsub{\top}_{\jrule{N}}$, and $\rfull{\with}_{\jrule{N}}$,
have no analogues in~\cite{Gay05acta} only because their language did not
include the corresponding empty internal and external  choice types.

As we will prove below, the CBN subtyping rules in Figure~\ref{fig:cbn-subtyping} are exactly those for which
$t \leq u$ in the CBN language if and only if $\cbn{t} \leq \cbn{u}$ in the
CBPV metalanguage.
We thereby show that our polarized subtyping on the image of Levy's CBN translation is sound and complete with respect to Gay and Hole's CBN subtyping.

Before proceeding to those proofs, it is worth pointing out that many of these
CBN subtyping rules exactly follow CBPV, with a few notable
differences.  First, the $\rsub{\plus}_{\jrule{N}}$ rule does not permit empty branches that do
not occur in the supertype.
This is because the $\dn$ shifts that appear in $\cbn*{\plus*[\ell \in L]{\ell\colon \tau_\ell}}$ prevent each branch from being empty\textemdash there is no emptiness rule for $\dn$ shifts in the CBPV subtyping.
Second, for this CBN language, only types $t = \with*{\,}$ are full.  In
particular, a CBN function type $t = t_1 \imp t_2$ is never full, even though a
CBPV function type $\n{s} = \p{t_1} \imp \n{s_2}$ is full if the argument
type $\p{t_1}$ is empty.  This stems from the $\dn$ shift that appears in the argument type in $\cbn*{\tau \imp \sigma} = \dn \cbn{\tau} \imp \cbn{\sigma}$.
Third, the reader may be
surprised by the omission of an emptiness judgment for CBN types. The
$\rsub{\bot}_{\jrule{N}}$ rule mentions the CBN type $t = \plus*{\,}$, which
looks like it ought to be an empty type\textemdash the CBPV type
$\p{t_0} = \plus*{\,}$ is empty, after all.  Yes, but the CBN translation of
$t = \plus*{\,}$ is in fact the negative type $\cbn{t} = \up \plus*{\,}$, and negative types are
never empty.
Nevertheless, $\cbn{t} = \up \plus*{\,} \leq \cbn{u}$ in this case.

Now we prove that polarized subtyping on the image of Levy's CBN embedding, $\cbn*{}$, is sound and complete with respect to the CBN subtyping rules of Figure~\ref{fig:cbn-subtyping}.
The proofs can be found in Appendix~\ref{app:cbn}.
\begin{theorem}[Soundness of Polarized Subtyping, Call-by-Name]\leavevmode
  \label{thm:cbn-soundness}
  \begin{enumerate}
  \item\label{item:cbn-full-sound} If $\cbn{t} \full$, then $t \full$.
  \item\label{item:cbn-leq-sound} If $\cbn{t} \leq \cbn{u}$, then $t \leq u$.
  \end{enumerate}
\end{theorem}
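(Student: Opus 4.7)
The plan is to translate a circular derivation of $\cbn{t} \leq \cbn{u}$ in the CBPV subtyping calculus into a circular derivation of $t \leq u$ in the CBN calculus, exploiting that the image of Levy's translation has a very restricted shape.

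For part~\ref{item:cbn-full-sound}, I first catalogue the image of $\cbn*{\cdot}$: it consists of $\dn \cbn{\tau} \imp \cbn{\sigma}$, $\with*[\ell \in L]{\ell \colon \cbn{\sigma_\ell}}$, and the $\up\cdots$ shapes arising from $\tensor$, $\one$, and $\plus$. An $\up$ type is never full, and $\dn \cbn{\tau} \imp \cbn{\sigma}$ is full only if $\dn \cbn{\tau}$ is empty, which never holds since $\dn$ types always admit thunks. Hence $\cbn{t} \full$ forces $\cbn{t} = \with*{\,}$, so $t = \with*{\,}$, which is full by $\rfull{\with}_{\jrule{N}}$.

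For part~\ref{item:cbn-leq-sound}, I proceed by case analysis on the CBPV rule at the root of the circular derivation, building the CBN derivation compositionally. The $\rsub{\with}$ and $\rsub{\imp}$ cases mirror $\rsub{\with}_{\jrule{N}}$ and $\rsub{\imp}_{\jrule{N}}$ directly, the latter after peeling one $\rsub{\dn}$ off the contravariant argument. The $\rsub{\top}$ case reduces to $\rsub{\top}_{\jrule{N}}$ via part~\ref{item:cbn-full-sound}. The $\n{\rsub{\bot}}$ case fires only when $\cbn{t} = \up t'$ with $t' \emp$; using the same catalogue, the only way an image positive type is empty is when all branches of a $\plus$ vanish, so $t = \plus*{\,}$ and $\rsub{\bot}_{\jrule{N}}$ applies. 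The remaining cases ($\tensor$, $\one$, $\plus$) produce $\up$ at the top of both $\cbn{t}$ and $\cbn{u}$: the CBPV derivation begins with $\rsub{\up}$, followed by $\rsub{\tensor}$/$\rsub{\one}$/$\rsub{\oplus}$ on the positive content, and then $\rsub{\dn}$ on each $\dn \cbn{\cdot}$ component, producing subgoals $\cbn{t_i} \leq \cbn{u_i}$ to which the induction hypothesis applies. These are exactly the premises needed for $\rsub{\tensor}_{\jrule{N}}$/$\rsub{\one}_{\jrule{N}}$/$\rsub{\plus}_{\jrule{N}}$; in the $\plus$ case, the condition $L \subseteq J$ follows because no branch of the image is empty, so the CBPV premise must supply $\ell \in K$ for every $\ell \in L$.

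Cycles transfer cleanly because the translation assigns a fresh CBPV name $\cbn{t}$ to each CBN name $t$ bijectively, so whenever a CBPV cycle closes on a goal $\cbn{t} \leq \cbn{u}$ we obtain a CBN cycle on $t \leq u$. The main obstacle is the bookkeeping in the $\tensor$ and $\plus$ cases, where a single CBN derivation step corresponds to a fixed short sequence of CBPV steps; the argument hinges on ruling out intervening $\p{\rsub{\bot}}$, $\n{\rsub{\bot}}$, or $\rsub{\top}$ applications within that sequence, which is precisely where the ``no $\dn$ type is empty'' and ``no $\up$ type is full'' observations become essential, ensuring the inductive structure lines up so each CBPV cycle lands exactly on a CBN-nameable goal.
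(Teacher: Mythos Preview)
Your proposal is correct and follows essentially the same approach as the paper: case analysis on the last CBPV rule, using that image $\dn$-types are never empty and image $\up$-types are never full, then translating each case to the corresponding CBN rule. The paper packages two of your inlined observations as an explicit inversion lemma for $\dn$ and a separate auxiliary statement for the $\up$ layer (handling the positive subgoal $\p{t_0} \leq \p{u_0}$ simultaneously with the main claim), whereas you fold these directly into the case analysis; the only place to be slightly more careful than your write-up suggests is that $\p{\rsub{\bot}}$ cannot be ``ruled out'' immediately after $\rsub{\up}$---it genuinely applies when $t = \plus*{\,}$---but your own $\n{\rsub{\bot}}$ argument already shows this is the only possibility and that $\rsub{\bot}_{\jrule{N}}$ then finishes the job.
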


\begin{theorem}[Completeness of Polarized Subtyping, Call-by-Name]\leavevmode
  \label{thm:cbn-completeness}
  \begin{enumerate}
  \item\label{item:cbn-full-complete} If $t \full$, then $\cbn{t} \full$.
  \item\label{item:cbn-leq-complete} If $t \leq u$, then $\cbn{t} \leq \cbn{u}$.
  \end{enumerate}
\end{theorem}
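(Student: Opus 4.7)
The plan is to translate the CBN circular derivation into a CBPV circular derivation, rule by rule, in such a way that every cycle in the source derivation maps to a matching cycle in the target derivation. Part~\ref{item:cbn-full-complete} is immediate: the only CBN rule producing fullness is $\rfull{\with}_{\jrule{N}}$, which forces $t = \with*{\,}$. Under the translation, $\cbn{t}$ unfolds to $\with*{\,}$ (possibly through an auxiliary type name introduced by the normal-form rewrite), and hence is full in CBPV by $\rfull{\with}$.

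For part~\ref{item:cbn-leq-complete}, I would proceed by traversing the CBN derivation and, for each rule, exhibit a CBPV derivation fragment whose open leaves are exactly the translations of the CBN premises. The translation prepends $\up$ to positive forms ($\tensor$, $\one$, $\plus$) and inserts $\dn$ in the argument positions of $\imp$, so each CBPV fragment first peels off a $\rsub{\up}$, then applies the corresponding CBPV structural rule, and finally peels off $\rsub{\dn}$ at the appropriate negative-to-positive transitions, exposing translated subgoals. For example, $\rsub{\plus}_{\jrule{N}}$ with $L \subseteq J$ translates to $\up \plus*[\ell \in L]{\ell : \dn \cbn{t_\ell}} \leq \up \plus*[j \in J]{j : \dn \cbn{u_j}}$, which is derived by $\rsub{\up}$ followed by $\rsub{\oplus}$: since every $\dn$-thunk type is nonempty, the disjunction in the premise of $\rsub{\oplus}$ reduces to the branch $\ell \in J$ (supplied by $L \subseteq J$) together with $\dn \cbn{t_\ell} \leq \dn \cbn{u_\ell}$, closed by $\rsub{\dn}$ and the inductive call. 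The $\imp$, $\tensor$, $\with$, and $\one$ cases follow exactly the same recipe.

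The two special rules are handled separately. In $\rsub{\bot}_{\jrule{N}}$ we have $t = \plus*{\,}$, so $\cbn{t}$ unfolds to $\up \p{t_0}$ with $\p{t_0} = \plus*{\,}$; the name $\p{t_0}$ is empty vacuously via $\remp{\oplus}$ (with an empty index set of premises), so $\n{\rsub{\bot}}$ immediately concludes $\cbn{t} \leq \cbn{u}$ for any $\cbn{u}$. The rule $\rsub{\top}_{\jrule{N}}$ is discharged by invoking part~\ref{item:cbn-full-complete} to transport $u \full$ across the translation, followed by $\rsub{\top}$.

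The principal technical obstacle is cycle-preservation. A CBN cycle closes a subgoal $t \leq u$ against a proper ancestor of the same form; I need the constructed CBPV derivation to close $\cbn{t} \leq \cbn{u}$ against a proper ancestor of the same form. Because each CBN rule is translated into a finite, acyclic fragment whose leaves are exactly the translated CBN subgoals, cycles at the type-name level lift compositionally: the auxiliary normal-form names introduced by the translation are purely internal plumbing within each fragment and never participate in a cycle themselves, so the global cycle structure of the CBN derivation is faithfully reflected in the CBPV derivation and the resulting circular proof is valid.
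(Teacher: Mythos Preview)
Your proposal is correct and follows essentially the same approach as the paper: part~\ref{item:cbn-full-complete} is dispatched by the single case $t = \with*{\,}$, and part~\ref{item:cbn-leq-complete} is proved by translating each CBN rule into a small CBPV derivation fragment (the paper spells out the $\rsub{\imp}_{\jrule{N}}$ and $\rsub{\bot}_{\jrule{N}}$ cases, matching your $\rsub{\plus}_{\jrule{N}}$ and $\rsub{\bot}_{\jrule{N}}$ examples), with cycles lifting compositionally. One minor imprecision: your phrase ``each CBPV fragment first peels off a $\rsub{\up}$'' only applies to the positive connectives $\tensor$, $\one$, $\plus$; for $\imp$ and $\with$ the translation stays negative and no $\rsub{\up}$ step is needed, but your worked cases show you understand this.
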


\subsection{Call-by-Value}
\label{sec:call-by-value}

We can play through a similar procedure for Levy's CBV translation.
Consider a CBV language with the following types.  The language
of terms, typing rules, and standard dynamics can be found in Appendix~\ref{app:cbv}.
\begin{align*}
  \tau, \sigma &\Coloneqq \tau \imp \sigma \mid \tau_1 \tensor \tau_2 \mid \one \mid \plus*[\ell \in L]{\ell\colon \tau_\ell} \mid \with*[\ell \in L]{\ell\colon \sigma_\ell}
\end{align*}

The translations that Levy~\cite{Levy06hosc} presents from CBV types and terms to CBPV \emph{positive types} and \emph{expressions} are as follows.
We only present the translation of variables, function abstractions, and function applications; the full translation on terms can be found in~\cite{Levy06hosc}.
\begin{equation*}
  \begin{gathered}[t]
    \text{\emph{Types}} \\
    \begin{aligned}
      \cbv*{\tau \imp \sigma} &= \dn (\cbv{\tau} \imp \up \cbv{\sigma}) \\
      \cbv*{\tau_1 \tensor \tau_2} &= \cbv{\tau_1} \tensor \cbv{\tau_2} \\
      \cbv*{\one} &= \one \\
      \cbv*{\plus*[\ell \in L]{\ell\colon \tau_\ell}}
        &= \plus*[\ell \in L]{\ell\colon \cbv{\tau_\ell}} \\
      \cbv*{\with*[\ell \in L]{\ell\colon \sigma_\ell}}
        &= \dn \with*[\ell \in L]{\ell\colon \up \cbv{\sigma_\ell}}
    \end{aligned}
  \end{gathered}
  \qquad
  \begin{gathered}[t]
    \text{\emph{Terms}} \\
    \begin{aligned}
      \cbv*{x} &= \return{x} \\
      \cbv*{f} &= \force{f} \text{\ for $f : \tau = e \in \sig$} \\
      \cbv*{\lam{x}{e}} &= \return{(\thunk{(\lam{x}{\cbv{e}})})} \\
      \cbv*{\app{e_1}{e_2}} &= \begin{array}[t]{@{}l@{}}
                                 \letup{x}{\cbv{e_2}}{ \\
                                 \letup{f}{\cbv{e_1}}{ \\ \quad
                                 \app{(\force{f})}{x}}}
                               \end{array} 
    \end{aligned}
  \end{gathered}
\end{equation*}
We also translate type names $t$ to fresh type names $\cbv{t}$, translating the
body of $t$'s definition and inserting additional type names as required for the
normal form that alternates between structural types and type names.

Levy proves that well-typed terms translate to well-typed expressions.
Because our syntactic typing rules are the same as his, his theorem carries over.

We adapt the CBV subtyping system of Ligatti et
al.~\cite{Ligatti17toplas} to our setting, which means that we include variants and lazy records with
width and depth subtyping and replace isorecursive with equirecursive types.  We obtain
the syntactic subtyping rules shown in Figure~\ref{fig:call-by-value}.  Once again, we
will take care to distinguish the CBV syntactic subtyping judgment, $t \leq u$,
from CBPV syntactic subtyping by marking CBPV type names with
pluses and minuses.  The rules shown in Figure~\ref{fig:call-by-value} build
\emph{circular derivations}.
\begin{figure}[ht]
\begin{mathpar}
  \infer[\rsub{\imp}_{\jrule{V}}]{t \leq u}
  {t = t_1 \imp t_2 & u = u_1 \imp u_2 &
    u_1 \leq t_1 & t_2 \leq u_2}
  \and
  \infer[\rsub{\tensor}_{\jrule{V}}]{t \leq u}
  {t = t_1 \tensor t_2 & u = u_1 \tensor u_2 &
    t_1 \leq u_1 & t_2 \leq u_2}
  \and
  \infer[\rsub{\one}_{\jrule{V}}]{t \leq u}
  {t = \one & u = \one}
  \\
  \infer[\rsub{\plus}_{\jrule{V}}]{t \leq u}
  {\begin{array}[b]{@{}l@{}}
     t = \plus*[\ell \in L]{\ell\colon t_\ell} \\[0.5ex]
     u = \plus*[j \in J]{j\colon u_j}
   \end{array} &
    \metaall{\ell \in L \setminus J} t_\ell \emp &
    \metaall{\ell \in L \cap J} t_\ell \leq u_\ell}
  \and
  \infer[\rsub{\with}_{\jrule{V}}]{t \leq u}
  {t = \with*[\ell \in L]{\ell\colon t_\ell} & u = \with*[j \in J]{j\colon u_j} &
    \text{($L \supseteq J$)} &
    \metaall{j \in J} t_j \leq u_j}
  \and
  \infer[\rsub{\bot}_{\jrule{V}}]{t \leq u}
  {t \emp & u = \sigma}
  \and
  \infer[\rsub{\top}^{\imp\imp}_{\jrule{V}}]{t \leq u}
  {t = t_1 \imp t_2 &
   u = u_1 \imp u_2 &
   u_1 \emp}
  \and
  \infer[\rsub{\top}^{\with\imp}_{\jrule{V}}]{t \leq u}
  {t = \with*[\ell \in L]{\ell\colon t_\ell} &
   u = u_1 \imp u_2 &
   u_1 \emp}
  \quad
  \infer[\rsub{\top}^{\imp\with}_{\jrule{V}}]{t \leq u}
  {t = t_1 \imp t_2 &
   u = \with*{\,}}
  \\
  \infer[\remp{\tensor}_{\jrule{V}}{}_i]{t \emp}
  {t = t_1 \tensor t_2 & t_i \emp}
  \and
  \infer[\remp{\plus}_{\jrule{V}}]{t \emp}
  {t = \plus*[\ell \in L]{\ell\colon t_\ell} &
    \metaall{\ell \in L} t_\ell \emp}
  \and
  \text{(no emptiness rules for $\one$, $\imp$, and $\with$)}
\end{mathpar}
\caption{Circular Derivation Rules for Call-by-Value Subtyping}
\label{fig:call-by-value}
\end{figure}

These rules match those of Ligatti et al., with one minor exception that we will
detail below.  As we will prove, these rules are exactly those for which $t \leq u$
in the CBV language if and only if $\cbv{t} \leq \cbv{u}$ in the
CBPV metalanguage.

Before proceeding to the proofs, a few remarks about these rules.  First,
unlike the CBN $\rsub{\plus}_{\jrule{N}}$ rule, the $\rsub{\plus}_{\jrule{V}}$ rule here includes the
possibility that some components of a variant record type may be empty.
More generally, the differences between CBN and
CBV subtyping arise from the differences in emptiness and fullness between the two calculi.
Emptiness and fullness are quite sensitive to the eager/lazy distinction between
the two evaluation strategies.
Because this distinction manifests in almost every layer of a complex type,
the two subtyping systems diverge more than one might expect.

Second, besides the adaptions mentioned above, the rules of Figure~\ref{fig:call-by-value}
diverge from those of Ligatti et al.\ in only one way.  Ligatti et
al.~\cite{Ligatti17toplas} have the rule ``$t \leq u$ if $u = u_1 \imp u_2$ and
$u_1 \emp$'' that generalizes the $\rsub{\top}^{\imp\imp}_{\jrule{V}}$,
$\rsub{\top}^{\with\imp}_{\jrule{V}}$, and $\rsub{\top}^{\imp\with}_{\jrule{V}}$ rules of
Figure~\ref{fig:call-by-value} (assuming that Ligatti et al.\ would also have ``$t \leq u$
if $u = \with*{\,}$'' if they had included lazy records in their language).

Somewhat unexpectedly, polarized subtyping on the image of Levy's CBV
translation would be incomplete with respect to this more general rule.  This is because the
$\dn$ shift inserted by Levy's translation acts as a barrier to fullness: ``$t \leq u$ if
$u = \dn r$ and $r \full$'' would be unsound in polarized subtyping.  For example, Ligatti et al.\ have
$\one \leq \mathbf{0} \imp \one$ for an empty type $\mathbf{0}$, but we do not have
$\cbv{\one} = \one \leq \dn (\mathbf{0} \imp \up \one) = \cbv*{\mathbf{0} \imp \one}$
because the unit value $\pair{}$ does not have type $\dn (\mathbf{0} \imp \up \one)$.
This phenomenon is primarily of theoretical interest since it is confined to functions
that can never be applied to any arguments and empty records (and only when they are
compared against CBV types $t_1 \tensor t_2$, $\one$, and
$\plus*[\ell \in L]{\ell\colon t_\ell}$).  Nevertheless, we conjecture a more
differentiated translation of types and terms could restore completeness.

These observations notwithstanding, we can prove that the CBV subtyping rules of
Figure~\ref{fig:call-by-value} are sound and complete with respect to the
subtyping rules for CBPV under Levy's translation.  The proofs can be
found in Appendix~\ref{app:cbv}.

\begin{theorem}[Soundness of Polarized Subtyping, Call-by-Value]\leavevmode
  \label{thm:cbv-soundness}
  \begin{enumerate}
  \item\label{item:cbv-empty-sound} If $\cbv{t} \emp$, then $t \emp$.
  \item\label{item:cbv-leq-sound} If $\cbv{t} \leq \cbv{u}$, then $t \leq u$.
  \end{enumerate}
\end{theorem}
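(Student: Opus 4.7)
The plan is to mirror the CBN soundness proof and the proof of Theorem~\ref{thm:sd-subtyping}: compositionally translate a circular CBPV derivation on translated types $\cbv{t}$, $\cbv{u}$ into a circular CBV derivation using the rules of Figure~\ref{fig:call-by-value}. Because $\cbv*{}$ maps each CBV type constructor to a CBPV type of a fixed, known shape, inversion on the conclusion of a CBPV rule already determines which CBV rule must fire, so the translation is essentially rule-by-rule. Cycles in the source derivation become cycles in the target, because the translation introduces one fresh CBPV name per CBV name and the CBV subgoals on type names are in bijection with the corresponding CBPV subgoals on their translations.

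For part~\ref{item:cbv-empty-sound}, analyze a circular derivation of $\cbv{t} \emp$. Since $\cbv*{\tau \imp \sigma}$ and $\cbv*{\with*[\ell \in L]{\ell\colon \sigma_\ell}}$ both have top shape $\dn(\cdots)$, $\cbv*{\one} = \one$, and the emptiness system has no rule for $\dn$ or $\one$, the root rule can only be $\remp{\oplus}$ applied to $\cbv*{\plus*[\ell \in L]{\ell\colon \tau_\ell}} = \plus*[\ell \in L]{\ell\colon \cbv{\tau_\ell}}$, or $\remp{\tensor}_i$ applied to $\cbv*{\tau_1 \tensor \tau_2} = \cbv{\tau_1} \tensor \cbv{\tau_2}$. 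Each subgoal is again an emptiness judgment on the translation of a CBV type, so coinductively it translates to a CBV emptiness judgment on the corresponding component, discharged via $\remp{\oplus}_{\jrule{V}}$ or $\remp{\tensor}_{\jrule{V}}{}_i$.

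For part~\ref{item:cbv-leq-sound}, case-split on the root of the CBPV subtyping derivation. Each structural rule in Figure~\ref{fig:subtyping} forces $\cbv{t}$ and $\cbv{u}$ to agree on a top-level constructor, pinning down the shape of $t$ and $u$, and its subgoals are translations of CBV subgoals that we translate recursively. The shift rules are the subtle part: $\rsub{\dn}$ applied at the top of a translated function or lazy record exposes a negative arrow or lazy record on which either a structural rule or $\rsub{\top}$ must fire next. A structural continuation $\rsub{\imp}$ or $\rsub{\with}$ followed by $\rsub{\up}$ on each result position yields, after stripping the shifts, the expected CBV rule $\rsub{\imp}_{\jrule{V}}$ or $\rsub{\with}_{\jrule{V}}$. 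A $\rsub{\top}$ continuation instead requires the right-hand negative type to be full, and the only applicable fullness rules, $\rfull{\imp}$ with an empty argument or $\rfull{\with}$ on an empty record, let us translate to one of $\rsub{\top}^{\imp\imp}_{\jrule{V}}$, $\rsub{\top}^{\with\imp}_{\jrule{V}}$, or $\rsub{\top}^{\imp\with}_{\jrule{V}}$, with the required emptiness premise supplied by part~\ref{item:cbv-empty-sound}. Finally, $\p{\rsub{\bot}}$ at the root translates directly, via part~\ref{item:cbv-empty-sound}, to $\rsub{\bot}_{\jrule{V}}$; the rule $\n{\rsub{\bot}}$ cannot appear at the root of a derivation on positive translations, and when it arises inside a $\up$ shift it collapses to the same $\rsub{\bot}_{\jrule{V}}$ case after one $\rsub{\up}$ step.

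The main obstacle is disciplining the $\dn$/$\up$ bookkeeping that Levy's CBV translation inserts, so that each CBV rule is recovered from a small, fixed block of CBPV rule applications. Concretely, one must verify case by case that no other CBPV rule can validly intrude on translated types, in particular that fullness of a translated negative type is witnessed only by $\rfull{\imp}$ or $\rfull{\with}$ and never by anything lacking a CBV analogue, and that the intermediate normal-form names introduced by translation do not create spurious proof branches. Once this bookkeeping is pinned down, cycle-preservation is automatic from the name-level bijection, and the soundness of both judgments follows by simultaneous induction on the source circular derivation.
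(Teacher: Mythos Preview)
Your approach is essentially the paper's: both translate a circular CBPV derivation on $\cbv{t}$, $\cbv{u}$ rule-by-rule into a circular CBV derivation, using inversion on the shape of translated types to determine which CBV rule fires. The paper packages the negative layer as an auxiliary claim (``if $\cbv{t} = \dn \n{t_0}$ and $\cbv{u} = \dn \n{u_0}$ with $\n{t_0} \leq \n{u_0}$ then $t \leq u$'') and isolates an inversion lemma for $\up$ (if $\n{s} \leq \n{r}$ with $\n{s} = \up \p{t}$ then either $\p{t} \emp$ or $\n{r} = \up \p{u}$ with $\p{t} \leq \p{u}$), whereas you fold this bookkeeping directly into the case split. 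Both organizations work.

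There is one small gap in your $\rsub{\top}$ analysis. After $\rsub{\dn}$ exposes $\n{t_0} \leq \n{u_0}$ and $\rsub{\top}$ fires because $\n{u_0} \full$, both $t$ and $u$ must be function or lazy-record types, giving four combinations. You cover three via $\rsub{\top}^{\imp\imp}_{\jrule{V}}$, $\rsub{\top}^{\with\imp}_{\jrule{V}}$, $\rsub{\top}^{\imp\with}_{\jrule{V}}$, but you omit the case $t = \with*[\ell \in L]{\ell\colon t_\ell}$ and $u = \with*{\,}$. No $\rsub{\top}^{\with\with}_{\jrule{V}}$ rule exists in Figure~\ref{fig:call-by-value}; instead this case is discharged by $\rsub{\with}_{\jrule{V}}$ itself, since $L \supseteq \emptyset$ vacuously. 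The paper handles this explicitly. With that one addition your argument goes through.
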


\begin{theorem}[Completeness of Polarized Subtyping, Call-by-Value]\leavevmode
  \label{thm:cbv-completeness}
  \begin{enumerate}
  \item\label{item:cbv-empty} If $t \emp$, then $\cbv{t} \emp$.
  \item\label{item:cbv-leq} If $t \leq u$, then $\cbv{t} \leq \cbv{u}$.
  \end{enumerate}
\end{theorem}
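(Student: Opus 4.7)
The plan is to prove both parts simultaneously by translating each circular derivation in the call-by-value system into a circular derivation in the call-by-push-value system, preserving cycles. Because the translation $\cbv{-}$ maps type names to type names (with intermediate fresh names introduced solely to reach the normal form required by CBPV), a cyclic back-edge $t \leq u$ (respectively $t \emp$) on the CBV side will correspond to a back-edge $\cbv{t} \leq \cbv{u}$ (resp.\ $\cbv{t} \emp$) on the CBPV side, so we only need to verify that each individual CBV rule can be mimicked by a finite chunk of CBPV rules whose leaves are the appropriately translated premises.

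For part~\ref{item:cbv-empty} I would traverse the rules in Figure~\ref{fig:call-by-value}: $\remp{\tensor}_{\jrule{V}}{}_i$ maps directly to $\remp{\tensor}_i$ because $\cbv{t_1 \tensor t_2} = \cbv{t_1} \tensor \cbv{t_2}$, and $\remp{\plus}_{\jrule{V}}$ maps directly to $\remp{\oplus}$ because $\cbv{\plus*[\ell \in L]{\ell\colon t_\ell}} = \plus*[\ell \in L]{\ell\colon \cbv{t_\ell}}$. The absence of emptiness rules for $\one$, $\imp$, and $\with$ in CBV is consistent with CBPV, since the translations of $\imp$ and $\with$ are guarded by a $\dn$ shift and thus are never empty on the CBPV side either.

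For part~\ref{item:cbv-leq} I would again do a case analysis on the last rule of the CBV derivation. The product, unit, variant, and lazy record cases use $\rsub{\tensor}$, $\rsub{\one}$, $\rsub{\oplus}$, and $\rsub{\with}$, respectively, invoking part~\ref{item:cbv-empty} for the empty branches allowed in $\rsub{\plus}_{\jrule{V}}$ and interposing $\rsub{\dn}$ at the head for the lazy record case (and $\rsub{\up}$ on each field). The function case $\rsub{\imp}_{\jrule{V}}$ produces $\dn(\cbv{t_1} \imp \up \cbv{t_2}) \leq \dn(\cbv{u_1} \imp \up \cbv{u_2})$ via $\rsub{\dn}$, then $\rsub{\imp}$, with the contravariant argument premise coming from the induction hypothesis and the covariant result premise going through $\rsub{\up}$. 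The rule $\rsub{\bot}_{\jrule{V}}$ follows from part~\ref{item:cbv-empty} together with $\p{\rsub{\bot}}$. The three $\rsub{\top}$-flavored rules all end up needing to show that the translated supertype is full: for $\rsub{\top}^{\imp\imp}_{\jrule{V}}$ and $\rsub{\top}^{\with\imp}_{\jrule{V}}$, part~\ref{item:cbv-empty} gives $\cbv{u_1} \emp$, so $\cbv{u_1} \imp \up \cbv{u_2}$ is full by $\rfull{\imp}$, which after prefixing $\rsub{\dn}$ and applying $\rsub{\top}$ finishes the case; for $\rsub{\top}^{\imp\with}_{\jrule{V}}$, the supertype translates to $\dn \with*{\,}$, and $\with*{\,}$ is full by $\rfull{\with}$.

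The main obstacle will be bookkeeping around the $\dn$ and $\up$ shifts introduced by the translation of CBV function and lazy record types: a single CBV inference must be matched by a small stack of CBPV inferences (e.g., $\rsub{\dn}$ followed by $\rsub{\imp}$ followed by $\rsub{\up}$), and I must confirm that the unique CBV back-edge to a name-level goal $t \leq u$ still lands on a name-level goal $\cbv{t} \leq \cbv{u}$ on the CBPV side rather than somewhere in the middle of one of these stacks. Since the CBV normal form already alternates structural types with type names and the translation preserves this alternation (adding fresh names when needed, as noted in the translation of type names), every recursive subgoal invoked by the induction hypothesis is indeed at a name-to-name level, so the translated cycle is valid in the sense of $\mathbf{CLKID}^{\omega}$, and the circular proof closes.
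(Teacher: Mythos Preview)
Your proposal is correct and takes essentially the same approach as the paper: both arguments translate each CBV rule into a small CBPV derivation (e.g., $\rsub{\imp}_{\jrule{V}}$ via $\rsub{\dn}$--$\rsub{\imp}$--$\rsub{\up}$, the $\rsub{\top}^{-}_{\jrule{V}}$ rules via $\rsub{\dn}$ followed by $\rsub{\top}$ with the appropriate fullness, and $\rsub{\bot}_{\jrule{V}}$ via part~\ref{item:cbv-empty} and $\p{\rsub{\bot}}$), and then close cycles at name-level goals. Your discussion of why back-edges land at $\cbv{t} \leq \cbv{u}$ rather than at an intermediate auxiliary name is a point the paper leaves implicit, so your treatment is if anything slightly more careful there.
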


\section{Related Work and Discussion}
\label{sec:related}

We now dive deeper into research related to
our underlying theme on how polarization affects the interaction and definition
of subtyping with recursive types across varying interpretations.

\paragraph*{Subtyping Recursive Types.}

The groundwork for coinductive interpretations of subtyping equirecursive types has been
laid by Amadio and Cardelli~\cite{Amadio93toplas}, subsequently refined by others~\cite{Brandt98fi,Gapeyev00icfp}.
Danielsson and Altenkirch~\cite{Danielsson10mpc} also provided significant
inspiration since they formally clarify that subtyping recursive types relies on a mixed
induction/coinduction. In using an
equirecursive presentation within different calculi, our work has been influenced by its
predominant use in session types~\cite{Chen14ppdp,Das21esop,Gay10jfp}
and, in particular, Gay and Hole's coinductive subtyping algorithm~\cite{Gay05acta}, which we
take as a template for call-by-name typing.

Another important influence has been the work on \emph{refinement
types}~\cite{Davies05phd,Freeman91} which are also recursive but exist within predefined
universes of generative types.  As such, subtyping relations are \emph{simpler} in their
interactions, but face many of the same issues such as emptiness checking.  One can see
this paper as an attempt to free refinement types from some of its restrictions
while retaining some of its good properties.  The key ingredients are (1) explicitly
separating values from computations via polarization, (2) the introduction of variant and
lazy records and their width and depth subtyping rules (owing much to~\cite{Reynolds96}),
and (3) simple bidirectional typechecking.  What is still missing is the use of
\emph{intersections} and \emph{unions} that allow subtyping to propagate more richly to
higher-order types~\cite{Dunfield03fossacs}.

Our treatment of empty\textemdash \emph{value-uninhabited}\textemdash and full
types in Section~\ref{sec:empty-full}, as well
as our call-by-value interpretation in Section~\ref{sec:call-by-value} builds
on Ligatti et al.'s work~\cite{Ligatti17toplas} on precise subtyping with
isorecursive types.

Our direct interpretation of isorecursive types and translation into an equirecursive
setting furthers numerous works either comparing or relating both
formulations~\cite{Pierce02book,Urzyczyn95mfcs,Vanderwaart03tldi}. In
particular, Abadi and Fiore~\cite{Abadi96lics} and more recently Patrigniani et al.~\cite{Patrignani21popl} prove
that terms in one equirecursive setting can be typed in the other (and vice versa)
with varying approaches. The former treats type equality inductively and is
focused on syntactic considerations. The latter treats type equality
coinductively and analyzes types semantically.
Neither of these handle subtyping or mixed coinductive/inductive types like in our study.

Finally, Zhou et al.~\cite{Zhou20oopsla} serves as a helpful overview paper on subtyping
recursive types at large and discusses how Ligatti et al.'s
complete set of rules requires very specific environments for subtyping, as well
as non-standard subtyping rules. This observation demonstrates why our
semantic typing/subtyping approach can offer a more flexible abstraction for
reasoning about expressive type systems while maintaining type safety.

\paragraph*{Semantic Typing and Subtyping.}

Semantic typing goes back to Milner's \emph{semantic soundness theorem}~\cite{Milner78}, which
defined a well-typed program being semantically free of a type violation.
Whereas syntactic typing specifies
a fixed set of syntactic rules that safe terms can be constructed from, semantic typing
here combines two requirements: positive types circumscribe observable values, \emph{exposing
their structure}, and computations of negative types are only required to \emph{behave}
in a safe way. As we demonstrate throughout section~\ref{sec:type-sound}, we can prove our
semantic definitions compatible with our syntactic type rules, leaving syntactic type
soundness to fall out easily (Theorem~\ref{thm:type-sound}).

Milner's initial model didn't scale well to richer types,
like recursive types. With a lens toward more expressive
systems, \emph{step indexing} has become a prominent approach~\cite{Ahmed04phd,Ahmed06esop,Appel01toplas,Dreyer09lics},
which we use to observe that a computation in our model \emph{steps} according
to our dynamics.

As with syntactic/semantic typing, syntactic subtyping is the more typical approach in
modeling subtyping relations over its semantic counterpart. Nonetheless, in what's
operated almost parallel to the research on semantic types, research on semantic
subtyping has also made strides~\cite{Frisch02lics, Castagna05ppdp, Petrucciani18types}.
Mainly, these exploit semantic subtyping for developing type systems based on
set-theoretic subtyping relations and properties, particularly in the context of
handling richer types, including polymorphic functions~\cite{Castagna14popl, Castagna15popl, Petrucciani19phd}
and variants~\cite{Castagna16icfp}, recursive types (interpreted coinductively),
and union, intersection, and negation connectives~\cite{Frisch08acm}.
A major theme in this line of work is excising ``circularity''~\cite{Castagna05ppdp, Frisch08acm}
by means of an involved bootstraping technique,
as issues arise when the denotation of a type is defined simply as the set of
values having that type.

We depart from this line of research in the treatment of functions (defined
computationally rather than set-theoretically), recursive types
(equirecursive setting; inductive for the positive layer and coinductive for the
negative layer), both variant and lazy record types, and the commitment to
explicit polarization (including our incorporation of emptiness/fullness). The
latter of which eliminates circularity and ties together multiple threads
defined in this study.

With this combination of semantic typing and subtyping, our work provides a metatheory for
a more interesting set of typed expressions while also providing a stronger and more
flexible basis for type soundness~\cite{Dreyer19sigplan}, as semantic typing can reason
about syntactically ill-typed expressions as long as those expressions are semantically
well-typed. This combination scales well to our polarized, mixed
setting and focus on subtyping in the presence of recursive types.

\paragraph*{Polarized Type Theory and Call-by-Push-Value.}

At the core of this work has been the
call-by-push-value~\cite{Levy01phd,Levy06hosc} (CBPV) calculus with its notions of
values, computations, and the shifts between them.  Beyond Levy's work, this subsuming
paradigm has formed the foundation of much recent research, ranging from probabilistic
domains~\cite{Ehrhard19lmcs} to those reasoning about effects~\cite{McDermott19esop} and
dependent types~\cite{Pedrot20popl}.  New et al.'s~\cite{New19popl} gradual typing extension to
the calculus shares similarities with our use of step indexing, but its relations (binary
rather than unary), dynamics, and step-counting are treated differently, and its goals are
very different as well, including no coverage on subtyping.

To our knowledge, there are no direct treatments of subtyping recursive types in a CBPV system or
applying a full semantic typing approach in this context with subtyping.  It is, as we've
shown, a fruitful setting for our investigation since the explicit polarization of the
language mirrors the mixed reasoning required to analyze the subtyping.

Though CBPV and polarized type theory typically go hand-in-hand, there are investigations
that look at polarization (\emph{focusing}) and algebraic typing and subtyping from alternate
perspectives.  Steffen~\cite{Steffen99phd} predates Levy's research and presents polarity
as a kinding system for exploiting monotone and antimonotone operators in subtyping
function application.  Abel~\cite{Abel06phd} built upon this and extended it with sized
types. The inherent connection between types and evaluation strategy has also been studied
in the setting of program synthesis~\cite{Rioux20icfp} and
proof theory~\cite{MunchMaccagnoni13phd}, but these do not share our specific semantic concerns.

Polarization as an organizing principle for subtyping is present in Zeilberger's
thesis~\cite{Zeilberger09phd}, but addresses a problem that is
fundamentally different in multiple ways, e.g.\ using ``classical'' types and
continuations, and no width and depth subtyping. The biggest
difference, however, is that its setting considers refinement types, while we
do not have a refinement relation and show that some of the
advantages of refinement types can be achieved without the additional layer.

Two studies on a global approach to algebraic subtyping~\cite{Dolan17phd,Parreaux20icfp}
define subtyping relationships with generative datatype constructors while discussing
polarity (here with a different meaning) and discarding semantic interpretations.
However, the generative nature of datatype constructors in this work makes its quite
different from ours.

\paragraph*{Mixed Coinductive/Inductive Reasoning for Recursive Types.}

The natural separation of positive and negative layers in CBPV led us through
the literature on mixed coinductive/inductive definitions for recursive types.
Related to our work in this paper, Danielsson and Altenkirch~\cite{Danielsson10mpc} and
Jones and Pearce~\cite{Jones16ftfjp}
provide definitions for equirecursive subtyping relations in a mixed setting
while using a suspension monad for non-terminating computations, which
shares an affinity with force/return CBPV computations.  Danielsson and Altenkirch,
however, do not try to justify the structural typing rules themselves via
semantic typing of values or expressions\textemdash only the subtyping rules.
Jones and Pearce are closer to our approach since they also use
a semantic interpretation of types for expressions. While not polarized, they do
consider inductive/coinductive types separately, but do not lift them
to cover function types, instead studying other constructs such as unions.

Komendantsky~\cite{Komendantsky11tfp} manages infinitary subtyping (for only function and
recursive types) via a semantic encoding by folding an inductive relation into a
coinductive one.  We work in the opposite direction, turning the coinductive
portion into an inductive one by step indexing.  Lepigre and
Raffali~\cite{Lepigre19toplas} mix induction and coinduction in a syntax-directed
framework, focusing on circular proof derivations and sized types~\cite{Abel16jfp}; also
managing inductive types coinductively. Cohen and Rowe~\cite{Cohen20ijcar} provide a
proposal for circular reasoning in a mixed setting, but the focus is on a
transitive closure logic built around least and greatest fixed point operators. It seems
quite plausible that we could use such systems to formalize our investigation, although we
found some merit in using step-indexing and Brotherston and Simpson's circular proof
system for induction~\cite{Brotherston11jlc}.

\section{Conclusion}\label{sec:conclusion}

We introduced a rich system of subtyping for an equirecusive variant of call-by-push-value
and proved its soundness via semantic means.  We also provided a bidirectional type
checking algorithm and illustrated its expressiveness through several different kinds of
examples.  We showed the fundamental nature of the results by deriving systems of
subtyping for isorecursive types and languages with call-by-name and call-by-value
dynamics. The limitations of the present systems lie primarily in the lack of intersection
and union types and parametric polymorphism which are the subject of ongoing work.

\paragraph{Acknowledgements.}
\begin{sloppypar}
  We wish to express our gratitude to the anonymous reviewers of this paper for their comments.
  Support for this research was provided by the NSF under Grant No. 1718276 and
  by FCT through
  the CMU Portugal Program,
  the LASIGE Research Unit (UIDB/00408/2020 and UIDP/00408/2020), and
  the project SafeSessions (PTDC/CCI-COM/6453/2020).
\end{sloppypar}
\newpage

\bibliographystyle{splncs04}
\bibliography{esop}

\newpage
\appendix\label{app}

\section{Type Examples}
\label{app:type-examples}
\[
  \begin{array}{rcll}
    \p{\ms{bool}} & = & \plusn{\mb{false} : \one, \mb{true} : \one} & \mbox{Booleans} \\[1ex]
    \p{\ms{nat}} & = & \plusn{\mb{z} : \one, \mb{s} : \ms{nat}} & \mbox{Unary numbers} \\[1ex]
    \p{\ms{bin}} & = & \plusn{\mb{e} : \one, \mb{b0} : \ms{bin}, \mb{b1} : \ms{bin}}
    & \mbox{Binary numbers (least significant bit }\\
    &&&\mbox{first)} \\
    \p{\ms{std}} & = & \plusn{\mb{e} : \one, \mb{b0} : \ms{pos}, \mb{b1} : \ms{std}}
    & \mbox{Standard binary numbers (no trailing}\\
    &&&\mbox{$\mb{b0}$s)} \\
    \p{\ms{pos}} & = & \plusn{\phantom{\mb{e} : \one,}\, \mb{b0} : \ms{pos}, \mb{b1} : \ms{std}}
    & \mbox{Positive standard binary numbers} \\[1em]
    \p{\ms{list}} & = & \plusn{\mb{nil} : \one, \mb{cons} : \ms{std} \tensor \ms{list}}
    & \mbox{Lists of standard binary numbers} \\[1ex]
    \n{\ms{stream}} & = & \withn{\mb{hd} : \up \ms{std}, \mb{tl} : \ms{stream}}
    & \mbox{Streams of standard binary numbers} \\[1ex]
    \n{\ms{pstream}} & = & \up (\ms{std} \tensor \ms{padding})
    & \mbox{Streams with finite padding} \\
    \p{\ms{padding}} & = & \oplus\{{
    \begin{gathered}[t]
    	\mb{pad} : \ms{padding}, \\
    	\mb{next} : \dn \ms{pstream} \}
       \end{gathered}}\\[1ex]
    \n{\ms{zstream}} & = & \up (\ms{std} \tensor \plusn{\mb{next} : \dn \ms{zstream}})
    & \mbox{Streams with zero padding} \\[1em]
    \n{\ms{U}} & = & (\dn \ms{U}) \imp \ms{U}
    & \mbox{Embedding untyped $\lambda$-calculus}
  \end{array}
\]

\section{Examples of Semantic Typing}
\label{app:semtypes}

\begin{example}[Identity Function]
  \mbox{}
  $\sof{\lam{x}{\return{x}}}{\p{\tau} \imp \up \p{\tau}}$ \quad \mbox{for all $\p\tau$}
  \begin{tabbing}
    Reason for $k \geq 2$: \\
    \hspace{2em} \= $\sof[k]{\lam{x}{\return{x}}}{\p{\tau} \imp \up \p{\tau}}$ \\
    if \> $\sof*[k]{\lam{x}{\return{x}}}{\p{\tau} \imp \up \p{\tau}}$ \` Since $\lam{x}{\return{x}}$ is terminal \\
    if \> $\sof[k]{\app{(\lam{x}{\return{x}})}{v}}{\up \p{\tau}}$ for all $i < k$, $\sof[i]{v}{\p{\tau}}$
    \` By definition \\
    if \> $\sof[k-1]{\return{v}}{\up \p{\tau}}$ \\
    if \> $\sof[k-2]{v}{\p{\tau}}$ \` Since $k-2 < k$ \\[1em]
       \> $\sof{\lam{x}{\return{x}}}{\p{\tau} \imp \up \p{\tau}}$ \` By downward closure
  \end{tabbing}
\end{example}

\begin{example}[Right Recursion]
  Define:
  $\qquad
    s_0 = \one \imp s_0
    \qquad\qquad
    e_0 = \lam{x}{e_0}
  $\\
  Claim: $\sof{e_0}{s_0}$.
  \begin{tabbing}
    Prove $\sof[k]{e_0}{s_0}$ for all $k$ by induction on $k$. \\
    Reason for $k \geq 2$: \\
    \qquad \= $\sof[k]{e_0}{s_0}$ \\
    if \> $\sof[k-1]{\lam{x}{e_0}}{s_0}$ \\
    if \> $\sof*[k-1]{\lam{x}{e_0}}{\one \imp s_0}$ \\
    if \> $\sof[k-1]{\app{(\lam{x}{e_0})}{v}}{s_0}$ \quad \mbox{for all $i < k-1$, $\sof[i]{v}{\one}$} \\
    if \> $\sof[k-2]{e_0}{s_0}$ \` By ind.\ hyp \\[1em]
    $\sof[k]{e_0}{s_0}$ for all $k$ \` By downward closure \\
    $\sof{e_0}{s_0}$ \` By definition
  \end{tabbing}
\end{example}

\begin{example}[$\Omega$]
  \label{ex:omega-full}
  Define:
  $\qquad
    \omega = \lam{x}{\app{(\force{x})}{x}}
    \qquad\qquad
    \Omega = \app{\omega}{(\thunk{\omega})}
  $\\
  Claim: $\sof{\Omega}{\n{\sigma}}$ for every $\n{\sigma}$
  \begin{tabbing}
    Prove $\sof[k]{\app{\omega}{(\thunk{\omega})}}{\n{\sigma}}$ for every $k$ by induction on $k$ \\
    Reason for $k \geq 3$: \\
    \qquad \= $\sof[k]{\app{\omega}{(\thunk{\omega})}}{\n{\sigma}}$ \\
    if \> $\sof[k-1]{\app{(\lam{x}{\app{(\force{x})}{x}})}{(\thunk{\omega})}}{\n{\sigma}}$ \` By definition \\
    if \> $\sof[k-2]{\app{(\force{(\thunk{\omega})})}{(\thunk{\omega})}}{\n{\sigma}}$ \` By definition \\
    if \> $\sof[k-3]{\app{\omega}{(\thunk{\omega})}}{\n{\sigma}}$ \` By definition \\[1em]
    Holds by ind.\ hyp.\ and then \\
    $\sof[k]{\app{\omega}{(\thunk{\omega})}}{\n{\sigma}}$ \` By downward closure \\
    $\sof[k+1]{\Omega}{\n{\sigma}}$ \` By definition \\
    $\sof{\Omega}{\n{\sigma}}$ \` By downward closure
  \end{tabbing}
\end{example}

\begin{example}[Empty Recursive Type]
  \label{ex:empty}
  Define:
  $\qquad
    t_0 = \one \tensor t_0
  $\\
  Claim: Does not exist a $v$ such that $\sof{v}{t_0}$.

  We prove something stronger: for all $k$ and $v$, it is not the
  case that $\sof[k]{v}{t_0}$.  The proof is by induction on $v$.
  \begin{tabbing}
    Assume $\sof[k]{v}{t_0}$ \\
    $\sof[k]{v}{\one \tensor t_0}$ \` By definition \\
    $v = \pair{v_1,v_2}$ with $\sof[k]{v_1}{\one}$ and $\sof[k]{v_2}{t_0}$ \` By definition \\
    Contradiction \` By ind.\ hyp.\ since $v_2 < v$
  \end{tabbing}

  Continuing the example: Assume $e$ has any type at all (that is $\sof{e}{\n{\rho}}$ for
  some $\n{\rho}$).  Then for all $\n{\sigma}$ we have $\sof{e}{t_0 \imp \n{\sigma}}$.

  \begin{tabbing}
    We prove $\sof[k]{e}{t_0 \imp \n{\sigma}}$ by induction on $k$ \\
    Because $\sof{e}{\n{\rho}}$ we know one of the following cases applies: \\[1ex]
    \textbf{Case:} $k = 0$.  Then $\sof[0]{e}{t_0 \imp \n{\sigma}}$ by definition \\[1ex]
    \textbf{Case:} $k > 0$ and $e \reduces e'$.
    Then $\sof[k-1]{e'}{t_0 \imp \n{\sigma}}$ by ind.\ hyp. \\[1ex]
    \textbf{Case:} $k > 0$ and $e$ is terminal. \\
    By definition, it remains to show that $\sof[k]{\app{e}{v}}{\n{\sigma}}$ for all $i < k$ and $v$
    with $\sof[i]{v}{t_0}$ \\
    But that's vacuously true by the first part of this example.
  \end{tabbing}
\end{example}

\section{Properties of Semantic Typing}

\begin{lemma}[Closure under Expansion~\footnote{Included as a useful property and sanity check.}]
  \label{lm:expansion}
  If $\sof{e'}{\n{\sigma}}$ and $e \reduces e'$ then $\sof{e}{\n{\sigma}}$.
\end{lemma}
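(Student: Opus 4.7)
The plan is to prove this by a simple case analysis on the step index, essentially unfolding the definition of semantic typing from Figure~\ref{fig:semantic-typing}. Recall that $\sof{e}{\n\sigma}$ is defined as $\sof[k]{e}{\n\sigma}$ for every $k$, so the goal reduces to establishing $\sof[k]{e}{\n\sigma}$ for an arbitrary but fixed $k$.

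The case $k = 0$ is immediate: the defining clause $\sof[0]{e}{\n\sigma}$ holds vacuously for every computation. For the successor case $k = k' + 1$, the definition requires us to produce one of two disjuncts. I would pick the first disjunct, which demands a reduct $e''$ with $e \reduces e''$ and $\sof[k']{e''}{\n\sigma}$. The hypothesis $e \reduces e'$ supplies such a reduct, namely $e' = e''$. The remaining obligation $\sof[k']{e'}{\n\sigma}$ then follows directly by instantiating the hypothesis $\sof{e'}{\n\sigma}$ at the index $k'$.

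Since we are able to exhibit the first disjunct directly, there is no need to worry about the terminal disjunct; in particular, we do not have to invoke Lemma~\ref{lm:terminal} to rule out that $e$ is terminal (though that would follow from $e \reduces e'$ anyway). There is no induction on $k$ or on $e$ needed — the proof is one unfolding of the definition. Consequently, there is no real obstacle: the lemma is essentially an immediate consequence of the shape of the clause defining $\sof[k+1]{e}{\n\sigma}$, which was designed precisely to record the cost of a single reduction step. The only small subtlety worth flagging is that the step index decreases by exactly one per reduction step, so the reduct is typed at $k'$ rather than $k'+1$; that is exactly what the universal quantification over all $k$ in $\sof{e'}{\n\sigma}$ gives us.
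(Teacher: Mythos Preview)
Your proof is correct and follows essentially the same approach as the paper: both unfold the definition of $\sof[k+1]{e}{\n\sigma}$ and use the hypothesis $e \reduces e'$ to satisfy the first disjunct. The only cosmetic difference is that the paper establishes $\sof[k+1]{e}{\n\sigma}$ for all $k$ and then invokes downward closure to cover index $0$, whereas you handle $k=0$ by an explicit case split; your version is marginally more direct in that it avoids appealing to Lemma~\ref{lm:downward-closure}.
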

\begin{proof} Direct, using the definitions.
  \begin{tabbing}
    $\sof{e'}{\n{\sigma}}$ \` Given \\
    $\sof[k]{e'}{\n{\sigma}}$ for all $k$ \` By definition \\
    $\sof[k+1]{e}{\n{\sigma}}$ for all $k$ \` By definition, since $e \reduces e'$ \\
    $\sof[i]{e}{\n{\sigma}}$ for all $i$ \` By downward closure \\
    $\sof{e}{\n{\sigma}}$ \` By definition
  \end{tabbing}
\end{proof}

\section{Emptiness}
\label{app:emptiness}

\begin{proof} (of Theorem~\ref{thm:empty})
  We interpret the judgment $t \emp$ semantically as $\sof[k]{v}{t} \vdash \cdot$ (which
  expresses $\nsof[k]{v}{t}$ in a sequent), where $t$ is given and $k$ and $v$ are
  parameters and therefore implicitly universally quantified.  The proof of this judgment is
  carried out in a circular metalogic.  We translate each inference rule for $t \emp$
  into a derivation for $\sof[k]{v}{t} \vdash \cdot$, where each unproven subgoal
  corresponds to a premise of the rule.  When the derivation of $t \emp$ is closed
  by a cycle, the corresponding derivation of $\sof[k]{v}{t} \vdash \cdot$ is
  closed by a corresponding cycle in the metalogic.

  During this compositional translation of $t \emp$ we need to ensure that the
  lexicographically ordered pair $(k,v)$ is smaller for each subgoal on the semantic side.
  This ensures that we can build a valid cycle in the metalogic whenever we have a cycle
  in the derivation of $t \emp$.  As we will see, $k$ never changes, and $v$ becomes
  smaller.  Recall that we write $v < v'$ when $v$ is a strict subterm of $v'$.

  This shows we prove a slightly stronger statement than simply that $v$ is empty, namely
  that $\nsof[k]{v}{t}$ for all $k$.  When showing a derivation we implicitly apply
  weakening when we do not use an assumption any longer, reading in proof construction
  order from the conclusion to the premises.
\begin{description}
\item[Case:]
  \begin{mathpar}
    \infer[\remp{\tensor}_i]
    {t \emp}
    {t = t_1 \tensor t_2 \in \Sigma & t_i \emp}
    \and
    \infer[]
    {\sof[k]{v}{t} \vdash \cdot}
    {\infer[]
      {v = \pair{v_1,v_2}, \sof[k]{v_1}{t_1}, \sof[k]{v_2}{t_2} \vdash \cdot}
      {\sof[k]{v_i}{t_i} \vdash \cdot}}
  \end{mathpar}
  Observe that $v_i < v = \pair{v_1,v_2}$ in the premise.
\item[Case:]
  \begin{mathpar}
    \infer[\remp{\oplus}]
    {t \emp}
    {t = \plus*[\ell \in L]{\ell : t_\ell} \in \Sigma &
      t_j \emp & (\forall j \in L)}
    \and
    \infer[]
    {\sof[k]{v}{t} \vdash \cdot}
    {\infer[]
      {\bigvee_{\ell \in L} (v = \inj{j}{v_j} \land \sof[k]{v_j}{t_j}) \vdash \cdot}
      {\sof[k]{v_j}{t_j} \vdash \cdot & (\forall j \in L)}}
  \end{mathpar}
  In each of the $|L|$ premises we have $v_j < v = \inj{j}{v_j}$ so the structure of $v$
  decreases.
\item[Case:] $t \emp$ is justified by a cycle.  Then $\sof[k]{v}{t}$ is justified by a
  corresponding cycle.
\end{description}
\end{proof}

\section{Fullness}
\label{app:fullness}

\begin{proof}
  (of Theorem~\ref{thm:full})

  There are three cases for why $\sof[k]{e}{r}$ could be true: (1) $k = 0$, (2)
$k > 0 \land e \reduces e' \land \sof[k-1]{e'}{r}$ and (3)
$k > 0 \land e \term \land \sof*[k]{e}{r}$.  Only in the last do we distinguish between
the rules for the $s \full$ judgment.

\begin{description}
\item[Case:] $k = 0$.  Then $\sof[0]{e}{s}$ is true by definition.
\item[Case:] $k > 0$ and $e \reduces e'$ with $\sof[k-1]{e'}{r}$.
  \begin{mathpar}
    \infer[]
    {\sof[k]{e}{r} \vdash \sof[k]{e}{s}}
    {\cdots
      & \infer[]
      {k > 0, e \reduces e', \sof[k-1]{e'}{r} \vdash \sof[k]{e}{s}}
      {\deduce
        {\sof[k-1]{e'}{r} \vdash \sof[k-1]{e'}{s}}
        {\cycle{\scriptstyle k-1/k, e'/e}}}
      & \cdots}
  \end{mathpar}
  So in this case we close the derivation with a local cycle, which corresponds to an
  appeal of the induction hypothesis on $k-1$, regardless of $s$.  We indicate
  here the substitution for the parameters that is applied as part of forming
  the cycle.
\item[Case:] $k > 0$ and $e \term$ with $\sof*[k]{e}{r}$.  We do not use the
  last assumption.  Now we distinguish cases on the rule use to derive $\sof[k]{e}{s}$.
  \begin{description}
  \item[Subcase:]
    \begin{mathpar}
      \infer[\rfull{\imp}]
      {s \full}
      {s = t_1 \imp s_2 & t_1 \emp}
      \and
      \infer[]
      {\sof[k]{e}{r} \vdash \sof[k]{e}{s}}
      {\cdots & \cdots
        & \infer[]
        {k > 0, e \term, \sof*[k]{e}{r} \vdash \sof[k]{e}{t_1 \imp s_2}}
        {\infer[]
          {k > 0 \vdash \forall i < k.\, \forall v.\, \sof[i]{v}{t_1} \mimp \sof[k]{\app{e}{v}}{s_2}}
          {\infer[]
            {k > 0, i < k, \sof[i]{v}{t_1} \vdash \sof[k]{\app{e}{v}}{s_2}}
            {\deduce{\sof[i]{v}{t_1} \vdash \cdot}{(t_1 \emp)}}}}}
    \end{mathpar}
    Here, we reduce the result to Theorem~\ref{thm:empty} (using weakening here not only in the
    antecedent but also in the succedent).
  \item[Subcase:] Similar to the previous case.
    \begin{mathpar}
      \infer[\rfull{\with}]{\with*{\,} \full}{}
      \and
      \infer[]
      {\sof[k]{e}{r} \vdash \sof[k]{e}{s}}
      {\cdots & \cdots
        & \infer[]
        {k > 0, e \term, \sof*[k]{e}{r} \vdash \sof[k]{e}{s}}
        {\infer[]
          {k > 0 \vdash \sof[k]{e}{\with*{\,}}}
          {\infer[]
            {k > 0 \vdash \forall j \in \emptyset.\, \sof[k]{\proj{j}{e}}{\with*{\,}}}
            {\deduce
              {j \in \emptyset \vdash \sof[k]{\proj{j}{e}}{\with*{\,}}}
              {(\mbox{antecedent contradictory})}}}}}
    \end{mathpar}
  \end{description}
\end{description}

\end{proof}

\section{Subtyping}
\label{app:subtyping}

\begin{proof} (of Theorem~\ref{thm:sd-subtyping})
  We proceed by a compositional translation of the circular derivation of subtyping into a
  circular derivation in the metalogic.  For each rule we construct a derived rule
  on the semantic side with corresponding premises and conclusion.

  When the subtyping proof is closed due to a cycle, we close the proof in the metalogic
  with a corresponding cycle.  In order for this cycle to be valid, it is critical that
  the judgments in the premises of the derived rule are \emph{strictly smaller} than the
  judgments in the conclusion.  Since our mixed logical relation is defined by nested
  induction, first on the step index $k$ and second on the structure of the value $v$,
  this is the lexicographic measure $(k,v)$ should strictly decrease.

  We provide some sample cases.  We freely apply weakening to simplify the judgment under
  consideration.
  \begin{description}
  \item[Case:]
    \begin{mathpar}
      \infer[\rsub{\tensor}]
      {t \leq u}
      {t = t_1 \tensor t_2 & u = u_1 \tensor u_2
        & t_1 \leq u_1 & t_2 \leq u_2}
      \and
      \infer[]
      {\sof[k]{v}{t} \vdash \sof[k]{v}{u}}
      {\infer[]
        {v = \pair{v_1,v_2}, \sof[k]{v_1}{t_1}, \sof[k]{v_2}{t_2} \vdash \sof[k]{\pair{v_1,v_2}}{u}}
        {\infer[]
          {\sof[k]{v_1}{t_1}, \sof[k]{v_2}{t_2} \vdash
            \sof[k]{v_1}{u_1} \land \sof[k]{v_2}{u_2}}
          {\deduce
            {\sof[k]{v_1}{t_1} \vdash \sof[k]{v_1}{u_1}}
            {}
            & \deduce
            {\sof[k]{v_2}{t_2} \vdash \sof[k]{v_2}{u_2}}
            {}}}}
    \end{mathpar}
    Observe that $v = \pair{v_1,v_2}$ so $(k,v_1) < (k,v)$ in the left branch
    and $(k,v_2) < (k,v)$ in the second branch.
  \item[Case:]
    \begin{mathpar}
      \infer[\rsub{\one}]
      {t \leq u}
      {t = \one & u = \one}
      \and
      \infer[]
      {\sof[k]{v}{t} \vdash \sof[k]{v}{u}}
      {\infer[]
        {\sof[k]{v}{\one} \vdash \sof[k]{v}{\one}}
        {}}
    \end{mathpar}
  \item[Case:]
    \begin{mathpar}
      \infer[\rsub{\oplus}]
      {t \leq u}
      {\begin{array}{rlc}
      t = &\plus*[\ell \in L]{\ell : t_\ell}&\\
         u = & \plus*[k \in K]{k : u_k}
        &\qquad \mbox{$\forall \ell \in L.\,$ $t_\ell \emp$ or $\ell \in K$ and $t_\ell \leq u_\ell$}
        \end{array}}
      \and
      \infer[]
      {\sof[k]{v}{t} \vdash \sof[k]{v}{u}}
      {\infer[\!(*)]
        {v = \inj{j}{v_j}, j \in L, \sof[k]{v_j}{t_j} \vdash \sof[k]{v}{u}}
        {\infer[]
          {t_j \emp, j \in L, \sof[k]{v_j}{t_j} \vdash \sof[k]{\inj{j}{v_j}}{u}}
          {t_j \emp \vdash \cdot}
          \,
          \infer[]
          {j \in K, j \in L, \sof[k]{v_j}{t_j} \vdash \sof[k]{\inj{j}{v_j}}{u}}
          {\deduce
            {\sof[k]{v_j}{t_j} \vdash \sof[k]{v_j}{u_j}}
            {}}}}
    \end{mathpar}
    At the inference $(*)$ we distinguish the two cases from the premise of $\rsub{\oplus}$
    for $\ell = j$: either $t_j \emp$ or $j \in K$.  Observe that $v_j < v = \inj{j}{v_j}$.
  \end{description}

  For computations, we separate out the cases the $k = 0$ and $k > 0$ with
  $e \reduces e'$ because the argument is essentially the same except in the case of
  $\rsub{\top}$.  When $k > 0$ and $e \term$ we distinguish cases based on the various
  rules.

  \begin{description}
  \item[Case:]
    \begin{mathpar}
      \infer[\rsub{\dn}]
      {t \leq u}
      {t = \dn s & u = \dn r & s \leq r}
      \and
      \infer[]
      {\sof[k]{v}{t} \vdash \sof[k]{v}{u}}
      {\infer[]
        {v = \thunk{e}, \sof[k]{e}{s} \vdash \sof[k]{v}{u}}
        {\deduce
          {\sof[k]{e}{s} \vdash \sof[k]{e}{r}}
          {}}}
    \end{mathpar}
  \item[Case:] $s \leq r$ and $\sof[k]{e}{s}$ for $k = 0$.
    Then, $\sof[0]{e}{r}$ directly by definition.
    \begin{mathpar}
      \infer[]
      {\sof[k]{e}{s} \vdash \sof[k]{e}{r}}
      {\infer[]
        {k = 0 \vdash \sof[k]{e}{r}}
        {}}
    \end{mathpar}
  \item[Case:] $k > 0$ and $e \reduces e'$.  Then we can close of the derivation with a
    (local) cycle, representing an immediate appeal to the induction hypothesis with
    $k-1 < k$.
    \begin{mathpar}
      \infer[]
      {\sof[k]{e}{s} \vdash \sof[k]{e}{r}}
      {\infer[]
        {k > 0, e \reduces e', \sof[k-1]{e'}{s} \vdash \sof[k]{e}{r}}
        {\infer[]
          {k > 0, e \reduces e', \sof[k-1]{e'}{s} \vdash k > 0 \land e \reduces e' \land \sof[k-1]{e'}{r}}
          {\deduce
            {\sof[k-1]{e'}{s} \vdash \sof[k-1]{e'}{r}}
            {\cycle{k-1/k,e'/e}}}}}
    \end{mathpar}
  \item[Case:] $k > 0$ and $e \term$.  Then we distinguish subcases based on the
    rule to conclude $\sof[k]{e}{s}$.
    \begin{description}
    \item[Subcase:]
      \begin{mathpar}
        \infer[\rsub{\imp}]
        {s \leq r}
        {s = t_1 \imp s_2 & r = u_1 \imp r_2
          & u_1 \leq t_1 & s_2 \leq r_2}
        \and
        \infer[]
        {\sof[k]{e}{s} \vdash \sof[k]{e}{r}}
        {\cdots  &
          \infer[]
          {\forall i < k.\, \forall v.\, \sof[i]{v}{t_1} \mimp \sof[k]{\app{e}{v}}{s_2}
            \vdash \sof[k]{e}{r}}
          {\infer[]
            {\forall i < k.\, \forall v.\, \sof[i]{v}{t_1} \mimp \sof[k]{\app{e}{v}}{s_2}
              \vdash
              \forall j < k.\, \forall w.\, \sof[j]{w}{u_1} \mimp \sof[k]{\app{e}{w}}{r_2}}
            {\infer[]
              {\forall i < k.\, \forall v.\, \sof[i]{v}{t_1} \mimp \sof[k]{\app{e}{v}}{s_2},
                j < k, \sof[j]{w}{u_1} \vdash \sof[k]{\app{e}{w}}{r_2}}
              {\infer[]
                {j < k, \sof[j]{w}{t_1} \mimp \sof[k]{\app{e}{w}}{s_2},
                  \sof[j]{w}{u_1} \vdash \sof[k]{\app{e}{w}}{r_2}}
                {\deduce{\sof[j]{w}{u_1} \vdash \sof[j]{w}{t_1}} {}
                  &
                  \infer[]
                  {\sof[k]{\app{e}{w}}{s_2} \vdash \sof[k]{\app{e}{w}}{r_2}}
                  {\infer[(*)]
                    {\app{e}{w} \reduces e', \sof[k-1]{e'}{s_2} \vdash \sof[k]{\app{e}{w}}{r_2}}
                    {\deduce{\sof[k-1]{e'}{s_2} \vdash \sof[k-1]{e'}{r_2}}{}}}}}}}}
      \end{mathpar}
      In the place marked $(*)$ we only have one possible case since $k > 0$ and $e \term$
      and therefore $\app{e}{w}$ is not terminal and must reduce since
      $\sof[k]{\app{e}{w}}{s_2}$.

      In the first open premise we have $(j,w) < (k,w)$ because $j < k$ (even if $w$ is
      arbitrary).  In the second open premise we have $k-1 < k$.
    \item[Subcase:] Recall that $k > 0$ and $e \term$.
      \begin{mathpar}
        \infer[\rsub{\with}]
        {s \leq r}
        {s = \with*[\ell \in L]{\ell : s_\ell}
          & r = \with*[j \in K]{j : r_j}
          & \mbox{$\forall j \in K.\, j \in L \land s_j \leq r_j$}}
        \and
        \infer[]
        {\sof[k]{e}{s} \vdash \sof[k]{e}{r}}
        {\infer[]
          {\sof*[k]{e}{\with*[\ell \in L]{\ell : s_\ell}} \vdash \sof[k]{e}{r}}
          {\infer[]
            {\forall \ell.\, \ell \in L \mimp \sof[k]{\proj{\ell}{e}}{s_\ell} \vdash \sof[k]{e}{r}}
            {\infer[(*)]
              {\forall \ell.\, \ell \in L \mimp \sof[k]{\proj{\ell}{e}}{s_\ell} \vdash \forall j.\, j \in K \mimp
                \sof[k]{\proj{j}{e}}{r}}
              {\infer[(**)]
                {\sof[k]{\proj{j}{e}}{s_j} \vdash \sof[k]{\proj{j}{e}}{r_j}}
                {\infer[]
                  {\proj{j}{e} \reduces e', \sof[k-1]{e'}{s_j} \vdash \sof[k]{\proj{j}{e}}{r_j}}
                  {\deduce{\sof[k-1]{e'}{s_j} \vdash \sof[k-1]{e'}{r_j}}{}}}}}}}
      \end{mathpar}
      At the inference $(*)$ we use that $j \in L$ by the premise of
      $\rsub{\with}$.  At the inference $(**)$ with use that $k > 0$ and
      $\proj{j}{e}$ is not terminal.
    \item[Subcase:] Recall that $k > 0$ and $e \term$.
      \begin{mathpar}
        \infer[\rsub{\up}]
        {s \leq r}
        {s = \up t & r = \up u
          & t \leq u}
        \and
        \infer[]
        {\sof[k]{e}{s} \vdash \sof[k]{e}{r}}
        {\infer[]
          {e = \return{v}, \sof[k-1]{v}{t} \vdash \sof[k]{e}{r}}
          {\sof[k-1]{v}{t} \vdash \sof[k-1]{v}{u}}}
      \end{mathpar}
      Observe that in the translation of $t \leq u$ we have $k-1 < k$.
    \item[Subcase:]  Recall that $k > 0$ and $e \term$.
      \begin{mathpar}
        \infer[\n{\rsub{\bot}}]{s \leq r}{s = \up t & t \emp & r = \n{\sigma}}
        \and
        \infer[]
        {\sof[k]{e}{s} \vdash \sof[k]{v}{r}}
        {\infer[]
          {\sof*[k]{e}{\up t} \vdash \sof[k]{e}{r}}
          {\infer[]
            {e = \return{v}, \sof[k-1]{v}{t} \vdash \sof[k]{e}{r}}
            {\deduce{\sof[k-1]{v}{t} \vdash \cdot}{(t \emp)}}}}
    \end{mathpar}
    \end{description}
  \end{description}
  The last two cases follow immediately from the properties of the emptiness and fullness
  judgments.
  \begin{description}
  \item[Case:]
    \begin{mathpar}
      \infer[\p{\rsub{\bot}}]{t \leq u}{t \emp & u = \p{\tau}}
      \and
      \infer[]
      {\sof[k]{v}{t} \vdash \sof[k]{v}{u}}
      {\deduce[]{\sof[k]{v}{t} \vdash \cdot}{(t \emp)}}
    \end{mathpar}
  \item[Case:]
    \begin{mathpar}
      \infer[\rsub{\top}]{s \leq r}{s = \n{\sigma} & r \full}
      \and
      \deduce[]
      {\sof[k]{e}{t} \vdash \sof[k]{e}{u}}
      {(r \full)}
    \end{mathpar}
    In this case, we can appeal to the lemma for fullness because we have the assumption
    that $\sof[k]{v}{t}$.
  \end{description}
\end{proof}

\section{Reflexivity and Transitivity of Syntactic Subtyping}
\label{app:refl-trans}

\begin{theorem}[Reflexivity and Transitivity]
  \label{thm:transitivity}
  \mbox{}
  \begin{enumerate}
  \item $t \leq t$ and $s \leq s$ for all types names $s$ and $t$ in signature $\Sigma$
  \item $t_1 \leq t_2$ and $t_2 \leq t_3$ implies $t_1 \leq t_3$
  \item $s_1 \leq s_2$ and $s_2 \leq s_3$ implies $s_1 \leq s_3$
  \end{enumerate}
\end{theorem}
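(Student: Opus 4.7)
My plan is to prove both parts by direct construction of circular derivations in the subtyping system of Figure~\ref{fig:subtyping}, leveraging finiteness of $\Sigma$ to close cycles whenever a subtyping goal between type names recurs above itself.

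For reflexivity, I would construct $t \leq t$ (and $s \leq s$) by case analysis on the normal-form definition in $\Sigma$. If $t = t_1 \tensor t_2$, apply $\rsub{\tensor}$ with premises $t_1 \leq t_1$ and $t_2 \leq t_2$; if $t = \plus*[\ell \in L]{\ell : t_\ell}$, apply $\rsub{\oplus}$, discharging the per-label disjunction by its right-hand conjunct with $\ell \in L$ and $t_\ell \leq t_\ell$ for each $\ell$; the remaining positive shapes $\one$ and $\dn s$ are handled by $\rsub{\one}$ and $\rsub{\dn}$, and the negative shapes by $\rsub{\imp}$, $\rsub{\up}$, $\rsub{\with}$ analogously, with the contravariant premise of $\rsub{\imp}$ discharged by the same recursive construction. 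Since the generated subgoals are themselves reflexivity statements between type names and $\Sigma$ has only finitely many names, each branch either terminates at $\rsub{\one}$ or closes by a cycle at a recurring goal.

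For transitivity, given circular derivations $\mathcal{D}_1 : t_1 \leq t_2$ and $\mathcal{D}_2 : t_2 \leq t_3$, I would build a derivation of $t_1 \leq t_3$ by joint case analysis on the rules at the roots of $\mathcal{D}_1$ and $\mathcal{D}_2$, tracking a set of pending triples $(t_1, t_2, t_3)$ to close cycles. When both derivations end with matching structural rules ($\rsub{\tensor}/\rsub{\tensor}$, $\rsub{\dn}/\rsub{\dn}$, $\rsub{\imp}/\rsub{\imp}$, and so on), apply the same rule at the root and recurse on the component subderivations, flipping direction in the contravariant argument position of $\rsub{\imp}$. When $\mathcal{D}_1$ ends in $\p{\rsub{\bot}}$ or $\n{\rsub{\bot}}$, reconclude by the same rule since the emptiness of $t_1$ (or of the payload of $t_1 = \up t$) is independent of $\mathcal{D}_2$; symmetrically, when $\mathcal{D}_2$ ends in $\rsub{\top}$, reconclude by $\rsub{\top}$. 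Finiteness of $\Sigma$ again ensures that every infinite path encounters a repeated triple at which a cycle closes.

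The main obstacle lies in the asymmetric cases where $\mathcal{D}_1$ ends in $\rsub{\top}$ (so $t_2$ is full) or $\mathcal{D}_2$ ends in $\p{\rsub{\bot}}$ or $\n{\rsub{\bot}}$ (so the relevant part of $t_2$ is empty) while the other derivation uses a structural rule, together with the combinatorics of the $\rsub{\oplus}$ per-label disjunction. These demand two auxiliary monotonicity lemmas, each proved by its own compositional translation of subtyping derivations in the same circular style: that $u \leq t$ with $t \emp$ implies $u \emp$, and dually that $u \leq t$ with $u \full$ implies $t \full$. With these in hand, the $\rsub{\oplus}$ case proceeds label by label: if $\mathcal{D}_1$ reports $t_{1,\ell} \emp$ it transfers directly to the composite; if it reports $\ell \in J$ with $t_{1,\ell} \leq t_{2,\ell}$, I case on $\mathcal{D}_2$'s disjunct for $\ell$, either invoking the emptiness lemma to recover $t_{1,\ell} \emp$ or recursing to produce $\ell \in K$ with $t_{1,\ell} \leq t_{3,\ell}$. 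Once these lemmas are established, every case either reduces to strictly smaller subgoals or closes by a cycle, yielding a valid circular derivation in \textbf{CLKID$^\omega$}.
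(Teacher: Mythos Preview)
Your proposal is correct and follows essentially the same approach as the paper: build circular derivations by structural case analysis, closing cycles by finiteness of $\Sigma$, and isolate the key auxiliary lemma that $u \leq t$ with $t \emp$ implies $u \emp$. The one minor divergence is that you also state a dual fullness lemma ($u \leq t$ with $u \full$ implies $t \full$), whereas the paper mentions only the emptiness lemma; this is harmless since your fullness lemma is an immediate corollary of the emptiness lemma (fullness is defined directly from emptiness of argument types), so the two presentations are equivalent.
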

\begin{proof}
  All rules except $\p{\rsub{\bot}}$, $\n{\rsub{\bot}}$, and $\rsub{\top}$
  simply compare components, thus are directly amenable to reflexivity.
  For rule $\p{\rsub{\bot}}$, $t \emp$ implies $t \leq t$.
  For rule $\n{\rsub{\bot}}$, $t \emp$ again implies $\up t \leq \up t$.
  Finally, for $\rsub{\top}$, $s \full$ implies $s \leq s$.

  Proving transitivity requires an additional lemma: If $t_1 \leq t_2$
  and $t_2 \emp$, then $t_1 \emp$. This lemma follows by applying inversion
  to the syntactic subtyping judgment.
  This lemma can then be utilized to take circular proofs of $t_1 \leq t_2$
  and $t_2 \leq t_3$ to assemble a circular proof of $t_1 \leq t_3$.
  A similar proof technique holds for (3).
\end{proof}

\section{Declarative Typing Judgments}
\label{app:syntactic-typing}

While semantic typing worked with closed values and computations only, the syntactic rules
require consideration of free variables.  In a polarized presentation they always stand
for values and therefore have positive type.  We collect them in a \emph{context} $\Gamma$
and, as usual, presuppose that all variables declared in a context are distinct.
\begin{align*}
  \ctx &\Coloneqq \ctxe \mid \ctx , x{:}\p{\tau}
\end{align*}
There are several official judgments for the syntactic validity of signatures, contexts,
types, and the typing of values and computations.  In order to avoid excessive bureaucracy
we use some presuppositions and some implicit checking or renaming to maintain these.  The
complete list of judgments can be found in Figure~\ref{fig:typing-judgments}.

\begin{figure}
\[
  \begin{tabular}{lll}
    Valid Signatures & $\vdash_\Sigma \Sigma' \vsig$ \\
    Valid Contexts & $\vdash_\Sigma \Gamma \pctx$ & presupposing $\vdash_\Sigma \Sigma \vsig$ \\
    Valid Pos.\ Types & $\vdash_\Sigma \p{\tau} \ptype$ & presupposing $\vdash_\Sigma \Sigma \vsig$ \\
    Valid Neg.\ Types & $\vdash_\Sigma \n{\sigma} \ntype$ & presupposing $\vdash_\Sigma \Sigma \vsig$ \\
    Value Typing & $\Gamma \vdash_\Sigma v : \p{\tau}$ & presupposing $\vdash_\Sigma \Gamma \pctx$
    and $\vdash_\Sigma \p{\tau} \ptype$ \\
    Computation Typing & $\Gamma \vdash_\Sigma e : \n{\sigma}$
                   & presupposing $\vdash_\Sigma \Gamma \pctx$ and $\vdash_\Sigma \n{\sigma} \ntype$ \\
    Positive Subtyping & $\vdash_\Sigma \p{\tau} \leq \p{\sigma}$
                   & presupposing $\vdash_\Sigma \p{\tau} \ptype$ and $\vdash_\Sigma \p{\sigma} \ptype$ \\
    Negative Subtyping & $\vdash_\Sigma \n{\tau} \leq \n{\sigma}$
                   & presupposing $\vdash_\Sigma \n{\tau} \ntype$ and $\vdash_\Sigma \n{\sigma} \ntype$
  \end{tabular}
\]
\caption{Syntactic Typing Judgments}
\label{fig:typing-judgments}
\end{figure}

The last two arise from $\p{\tau} \leq \p{\sigma}$ and $\n{\tau} \leq \n{\sigma}$ in that
they do not require the normal form of alternating names and structural types introduced
in Section~\ref{sec:empty-full}.  In particular, we have $\vdash_\Sigma \p{\tau} \leq t$
and $\vdash_\Sigma t \leq \p{\tau}$ if $t = \p{\tau} \in \Sigma$, and analogously for
negative types.  This captures the equirecursive nature of type definitions.  We omit its
straightforward rules, as well as the rules for valid types which only check that all its
type names are defined in the signature.  We write $\p{\tau} \structural$ and
$\n{\sigma} \structural$ if $\p{\tau}$ and $\n{\sigma}$ are not type names, which is needed
to guarantee that type definitions are contractive.

\begin{figure}
\begin{mathpar}
  \infer[\irule{\tensor}]{\Gamma \vdash \pair{v_1,v_2} : \p{\tau_1} \tensor \p{\tau_2}}
  {\Gamma \vdash v_1 : \p{\tau_1} & \Gamma \vdash v_2 : \p{\tau_2}}
  \and
  \infer[\erule{\tensor}]{\Gamma \vdash \letpair{x,y}{v}{e} : \n{\sigma}}
  {\Gamma \vdash v : \p{\tau_1} \tensor \p{\tau_2} &
    \Gamma , x{:}\p{\tau_1} , y{:}\p{\tau_2} \vdash e : \n{\sigma}}
  \\
  \infer[\jrule{VAR}]{\Gamma \vdash x : \p{\tau}}{x : \p{\tau} \in \Gamma}
  \and
  \infer[\irule{\one}]{\Gamma \vdash \pair{} : \one}{}
  \and
  \infer[\erule{\one}]{\Gamma \vdash \letpair{}{v}{e} : \n{\sigma}}
  {\Gamma \vdash v : \one & \Gamma \vdash e : \n{\sigma}}
  \\\
  \infer[\irule{\plus}]{\Gamma \vdash \inj{j}{v} : \plus*[\ell \in L]{\ell\colon \p{\tau_\ell}}}
  {\text{($j \in L$)} & \Gamma \vdash v : \p{\tau_j}}
  \quad
  \infer[\erule{\plus}]{\Gamma \vdash \case[\ell \in L]{v}{{\ell}{x_\ell} => e_\ell} : \n{\sigma}}
  {\Gamma \vdash v : \plus*[\ell \in L]{\ell\colon \p{\tau_\ell}} &
   \Gamma , x_\ell{:}\p{\tau_\ell} \vdash e_\ell : \n{\sigma}\; (\forall \ell \in L)}
  \\\
  \infer[\irule{\dn}]{\Gamma \vdash \thunk{e} : \dn \n{\sigma}}
  {\Gamma \vdash e : \n{\sigma}}
  \and
  \infer[\erule{\dn}]{\Gamma \vdash \force{v} : \n{\sigma}}
  {\Gamma \vdash v : \dn \n{\sigma}}
  \\
  \infer[\irule{\imp}]{\Gamma \vdash \lam{x}{e} : \p{\tau} \imp \n{\sigma}}
  {\Gamma , x{:}\p{\tau} \vdash e : \n{\sigma}}
  \and
  \infer[\erule{\imp}]{\Gamma \vdash \app{e}{v} : \n{\sigma}}
  {\Gamma \vdash e : \p{\tau} \imp \n{\sigma} & \Gamma \vdash v : \p{\tau}}
  \\
  \infer[\irule{\with}]{\Gamma \vdash \record[\ell \in L]{\ell = e_\ell} : \with*[\ell \in L]{\ell\colon \n{\sigma_\ell}}}
  {\Gamma \vdash e_\ell : \n{\sigma_\ell}\; (\forall \ell \in L)}
  \and
  \infer[\erule{\with}_k]{\Gamma \vdash \proj{j}{e} : \n{\sigma_j}}
  {\Gamma \vdash e : \with*[\ell \in L]{\ell\colon \n{\sigma_\ell}} &
   \text{($j \in L$)}}
  \\
  \infer[\irule{\up}]{\Gamma \vdash \return{v} : \up \p{\tau}}
  {\Gamma \vdash v : \p{\tau}}
  \and
  \infer[\erule{\up}]{\Gamma \vdash \letup{x}{e_1}{e_2} : \n{\sigma}}
  {\Gamma \vdash e_1 : \up \p{\tau} & \Gamma , x{:}\p{\tau} \vdash e_2 : \n{\sigma}}
  \\
  \infer[\jrule{NAME}]
  {\Gamma \vdash f : \n{\sigma}}
  {f : \n{\sigma} = e \in \Sigma}
  \quad
  \infer[\p{\jrule{SUB}}]
  {\Gamma \vdash v : \p{\sigma}}
  {\Gamma \vdash v : \p{\tau} & \p{\tau} \leq \p{\sigma}}
  \quad
  \infer[\n{\jrule{SUB}}]
  {\Gamma \vdash e : \n{\sigma}}
  {\Gamma \vdash e : \n{\tau} & \n{\tau} \leq \n{\sigma}}
  \\
  \infer[]
  {\vdash_\Sigma \ctxe* \vsig}
  {}
  \and
  \infer[]
  {\vdash_\Sigma (\Sigma', t = \p{\tau}) \vsig}
  {\vdash_\Sigma \Sigma' \vsig
    & \vdash_\Sigma \p{\tau} \ptype
    & \p{\tau} \structural}
  \and
  \infer[]
  {\vdash_\Sigma (\Sigma', s = \n{\sigma}) \vsig}
  {\vdash_\Sigma \Sigma' \vsig
    & \vdash_\Sigma \n{\sigma} \ntype
    & \n{\sigma} \structural}
  \and
  \infer[]
  {\vdash_\Sigma (\Sigma', f : \n{\sigma} = e) \vsig}
  {\vdash_\Sigma \Sigma'
    & \vdash_\Sigma \n{\sigma} \ntype
    & \cdot \vdash_\Sigma e : \n{\sigma}}
  \and
  \infer[]
  {\vdash_\Sigma \ctxe* \pctx}
  {}
  \and
  \infer[]
  {\vdash_\Sigma (\Gamma, x : \p{\tau})}
  {\vdash_\Sigma \Gamma \pctx
    & \vdash_\Sigma \p{\tau} \ptype}
\end{mathpar}
\caption{Declarative Syntactic Typing}
\label{fig:declarative-typing}
\end{figure}

\section{Soundness of Syntactic Typing}
\label{app:sd-typing}

We state and proof the rules for semantic typing from Section~\ref{sec:syntactic-typing} separately.

\begin{lemma}
  \label{lm:sd-app}
  \label{lm:sd-first}
  \[
    \infer-{\sof[k]{\app{e}{v}}{\n{\sigma}}}
    {\sof[k]{e}{\p{\tau} \imp \n{\sigma}}
      & \sof[k]{v}{\p{\tau}}}
  \]
\end{lemma}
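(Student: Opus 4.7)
The plan is to proceed by induction on $k$. The base case $k = 0$ is immediate, since $\sof[0]{\app{e}{v}}{\n{\sigma}}$ holds for every computation by the ``always'' clause of Figure~\ref{fig:semantic-typing}.

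For the step case at $k+1$, I would unfold $\sof[k+1]{e}{\p{\tau} \imp \n{\sigma}}$ into its two disjuncts. If $e \reduces e'$ with $\sof[k]{e'}{\p{\tau} \imp \n{\sigma}}$, then the congruence rule for application gives $\app{e}{v} \reduces \app{e'}{v}$. Downward closure (Lemma~\ref{lm:downward-closure}) weakens $\sof[k+1]{v}{\p{\tau}}$ to $\sof[k]{v}{\p{\tau}}$, so the induction hypothesis yields $\sof[k]{\app{e'}{v}}{\n{\sigma}}$, and since $\app{e}{v}$ is not terminal the reduction clause of the definition furnishes $\sof[k+1]{\app{e}{v}}{\n{\sigma}}$. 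If instead $e \term$ and $\sof*[k+1]{e}{\p{\tau} \imp \n{\sigma}}$, then by definition $\sof[k+1]{\app{e}{w}}{\n{\sigma}}$ for every $i \leq k$ and every $w$ with $\sof[i]{w}{\p{\tau}}$. Using downward closure to obtain $\sof[k]{v}{\p{\tau}}$ and specializing at $i = k$, $w = v$, the conclusion $\sof[k+1]{\app{e}{v}}{\n{\sigma}}$ follows directly.

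The only subtle point is the off-by-one in the terminal clause for function types: it quantifies over arguments typed at indices strictly less than $k+1$, so $\sof[k+1]{v}{\p{\tau}}$ cannot be used as-is and must first be relaxed via downward closure. Aside from this bookkeeping, no nested induction on the structure of $v$ or $e$ is required; induction on the step index alone suffices, and the step index in the conclusion remains $k$ in both subcases, which is what later lemmas such as the case for definitions $f : \n{\sigma} = e$ in the soundness proof of syntactic typing will rely on.
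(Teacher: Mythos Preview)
Your proposal is correct and follows essentially the same approach as the paper's proof: induction on $k$, with the base case immediate and the inductive step split into the reduction and terminal subcases, using downward closure on $v$ in each. The paper's write-up is slightly terser but the structure, the key use of Lemma~\ref{lm:downward-closure}, and the observation about the off-by-one in the terminal clause are all the same.
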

\begin{proof}
  The proof is by induction on $k$.
  \begin{description}
  \item[Case:] $k = 0$.
  \begin{tabbing}
    $\sof[0]{\app{e}{v}}{\n{\sigma}}$ \` By definition
  \end{tabbing}
  \item[Case:] $k > 0$ and $e \term$
  \begin{tabbing}
    $\sof[k]{e}{\p{\tau} \imp \n{\sigma}}$ \` First premise \\
    $\sof[i]{v}{\p{\tau}}$ for all $i < k$ \` From second premise by downward closure \\
    $\sof[k]{\app{e}{v}}{\n{\sigma}}$ \` By definition and second premise
  \end{tabbing}
  \item[Case:] $e \reduces e'$ and $k > 0$
  \begin{tabbing}
    $\sof[k-1]{e'}{\p{\tau} \imp \n{\sigma}}$ \` By definition \\
    $\sof[i]{v}{\p{\tau}}$ for all $i < k-1$ \` From second premise and downward closure \\
    $\sof[k-1]{\app{e'}{v}}{\p{\tau} \imp \n{\sigma}}$ \` By ind.\ hyp. \\
    $\sof[k]{\app{e}{v}}{\p{\tau}}$ \` Since $\app{e}{v} \reduces \app{e'}{v}$
  \end{tabbing}
  \end{description}
\end{proof}

\begin{lemma}
  \label{lm:sd-lam}
  \[
    \infer-{\sof[k]{\lam{x}{e}}{\p{\tau} \imp \n{\sigma}}}
    {x : \p{\tau} \models \sof[k]{e}{\n{\sigma}}}
  \]

  \begin{proof}
    \mbox{}
    \begin{tabbing}
      $\sof[k]{[v/x]e}{\n{\sigma}}$ for all $i \leq k$ and $v$ with $\sof[i]{v}{\p{\tau}}$ \` Premise \\
      $\sof[k+1]{\app{(\lam{x}{e})}{v}}{\n{\sigma}}$ \` By definition since $\app{(\lam{x}{e})}{v} \reduces
      [v/x]e$ \\
      $\sof[k+1]{\lam{x}{e}}{\n{\sigma}}$ \` By definition \\
      $\sof[k]{\lam{x}{e}}{\n{\sigma}}$ \` By downward closure
    \end{tabbing}
  \end{proof}
\end{lemma}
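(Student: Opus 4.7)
The plan is to unfold the semantic definition for a terminal computation at function type and close via the premise. Concretely, I would first establish $\sof[k+1]{\lam{x}{e}}{\p\tau \imp \n\sigma}$ and then descend to $\sof[k]{\lam{x}{e}}{\p\tau \imp \n\sigma}$ by the Downward Closure lemma (Lemma~\ref{lm:downward-closure}, part~\ref{pt:dc-comp}). Proving $\sof[k+1]$ first is preferable because the non-terminal clause of $\sof[k+1]{\cdot}{\n\sigma}$ fits exactly one reduction step of the beta-redex, which lets the premise land at index $k$ unchanged.

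First I would invoke the definition of $\sof[k+1]{\lam{x}{e}}{\p\tau \imp \n\sigma}$. Since $\lam{x}{e}$ is terminal, this obligation unfolds to $\sof*[k+1]{\lam{x}{e}}{\p\tau \imp \n\sigma}$, which in turn demands $\sof[k+1]{\app{(\lam{x}{e})}{v}}{\n\sigma}$ for every $i \leq k$ and every $v$ with $\sof[i]{v}{\p\tau}$. Fixing such $i$ and $v$, I would then use the one-step reduction $\app{(\lam{x}{e})}{v} \reduces [v/x]e$ together with the non-terminal clause of the definition (the first disjunct) to reduce the goal to $\sof[k]{[v/x]e}{\n\sigma}$.

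This is exactly what the premise $x{:}\p\tau \models \sof[k]{e}{\n\sigma}$ provides when instantiated with the singleton substitution $v/x$: by the definition of $\models$, it yields $\sof[k]{e[v/x]}{\n\sigma}$ for any semantically well-typed argument at the matching index. Having established $\sof[k+1]{\lam{x}{e}}{\p\tau \imp \n\sigma}$, a single application of Downward Closure lowers the index to $k$, producing the desired $\sof[k]{\lam{x}{e}}{\p\tau \imp \n\sigma}$.

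The main obstacle, if any, is step-index bookkeeping rather than anything conceptual: I need to verify that the beta-reduction consumes exactly one index (so the premise's $k$ matches), and that the quantification over $i \leq k$ in the function-type clause is properly absorbed when using the premise. No induction on $k$ is required, which is why this lemma slots cleanly into the circular soundness proof without disturbing the ambient step-index structure.
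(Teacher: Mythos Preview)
Your proposal is correct and follows exactly the paper's approach: establish the result at index $k+1$ by unfolding the terminal clause for $\imp$, use the $\beta$-step to reduce the obligation to $\sof[k]{[v/x]e}{\n\sigma}$, discharge it via the premise, and descend to $k$ by downward closure. The paper's proof is line-for-line the same, including the step-index bookkeeping you flag (the absorption of the $i \leq k$ quantification is simply asserted to follow from the premise there as well).
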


\begin{lemma}
  \[
    \infer-{\sof[k]{\pair{v_1,v_2}}{\p{\tau_1} \tensor \p{\tau_2}}}
    {\sof[k]{v_1}{\p{\tau_1}} & \sof[k]{v_2}{\p{\tau_2}}}
  \]
  \begin{proof}
  \mbox{}
    \begin{tabbing}
      $\sof[k]{v_1}{\p{\tau_1}}$ and $\sof[k]{v_2}{\p{\tau_2}}$ \` Premises \\
      $\sof[k]{\pair{v_1,v_2}}{\p{\tau_1} \tensor \p{\tau_2}}$ \` By definition
    \end{tabbing}
  \end{proof}
\end{lemma}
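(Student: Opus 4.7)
The plan is to prove this by a direct definitional unfolding, with no induction on the step index required. The clause for positive tensor in Figure~\ref{fig:semantic-typing} defines $\sof[k]{v}{\p{\tau_1} \tensor \p{\tau_2}}$ to hold exactly when $v = \pair{v_1', v_2'}$ for some $v_1', v_2'$ with $\sof[k]{v_1'}{\p{\tau_1}}$ and $\sof[k]{v_2'}{\p{\tau_2}}$. Since the conclusion we must derive already presents $v$ in the form $\pair{v_1, v_2}$, I would instantiate the existentials in the defining clause with the given $v_1, v_2$ and discharge the remaining conjuncts by the two premises. The whole argument is essentially one line.

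In contrast to the $\lambda$-abstraction case of Lemma~\ref{lm:sd-lam}, no reduction step has to be absorbed and no appeal to Lemma~\ref{lm:downward-closure} is needed: there is no mismatch between the index on the computation-typing judgment at $k+1$ and the index on the typing of its subterms. Here the step index $k$ is the same on the premises and the conclusion, which is precisely the invariant that Theorem~\ref{thm:type-sound} exploits when it composes these admissible rules into a circular metalogic proof keeping $k$ fixed while descending into subterms.

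I do not anticipate any real obstacle. The only minor caveat worth noting is that the rule targets the structural type $\p{\tau_1} \tensor \p{\tau_2}$ directly, not a type name $\p{t}$; had the target been $\p{t}$ with $\p{t} = \p{\tau_1} \tensor \p{\tau_2} \in \Sigma$, contractivity of signatures would guarantee that one more unfolding of the definition reduces to the present case, so the same one-line argument would still apply. Accordingly, I would present the proof in the same minimal style as the preceding value-typing lemmas, without lexicographic bookkeeping or cycle construction.
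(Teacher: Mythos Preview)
Your proposal is correct and matches the paper's proof exactly: the paper also proves this in one step ``by definition,'' citing the premises and then invoking the tensor clause of semantic typing. Your additional remarks about not needing induction or downward closure and about the uniformity of the step index are accurate context, but the core argument is identical.
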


\begin{lemma}
  \[
    \infer-{\sof[k]{\letpair{x,y}{v}{e}}{\n{\sigma}}}
    {\sof[k]{v}{\p{\tau_1} \tensor \p{\tau_2}}
      & x : \p{\tau_1}, y : \p{\tau_2} \models \sof[k]{e}{\n{\sigma}}}
  \]
\end{lemma}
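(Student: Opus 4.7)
The plan is to follow essentially the same pattern as the proof of Lemma~\ref{lm:sd-lam} for $\lam{x}{e}$: invert the first premise to expose the structure of $v$, apply the second premise to the resulting substitution, and then exploit the fact that $\letpair$ takes a single reduction step in order to climb to index $k+1$ before descending back to $k$ via downward closure.

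Concretely, I would first invoke the defining clause of $\sof[k]{\cdot}{\p{\tau_1} \tensor \p{\tau_2}}$ on the first premise: this forces $v = \pair{v_1, v_2}$ for some $v_1$, $v_2$ with $\sof[k]{v_1}{\p{\tau_1}}$ and $\sof[k]{v_2}{\p{\tau_2}}$. Together these yield a valid substitution $\theta = (v_1/x, v_2/y)$ with $\sof[k]{\theta}{x{:}\p{\tau_1}, y{:}\p{\tau_2}}$, and the second premise then immediately gives $\sof[k]{[v_1/x, v_2/y]e}{\n{\sigma}}$.

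The operational semantics of Section~\ref{sec:syntax} supplies the single step $\letpair{x,y}{\pair{v_1,v_2}}{e} \reduces [v_1/x][v_2/y]e$, so the defining clause of $\sof[k+1]{\cdot}{\n{\sigma}}$ on non-terminal computations lets me conclude $\sof[k+1]{\letpair{x,y}{v}{e}}{\n{\sigma}}$. Applying downward closure (Lemma~\ref{lm:downward-closure}) finally yields $\sof[k]{\letpair{x,y}{v}{e}}{\n{\sigma}}$, as required. I expect no genuine obstacle here: the reasoning is entirely structural, mirrors the pattern already established for $\lambda$-abstraction, and the only bookkeeping is the standard climb-to-$k{+}1$-and-descend-back trick common to these admissible-rule lemmas; the degenerate case $k = 0$ is immediate from the $\sof[0]{e}{\n{\sigma}}$ clause, which always holds.
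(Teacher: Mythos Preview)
Your proposal is correct and follows essentially the same approach as the paper: invert the tensor premise to obtain $v = \pair{v_1,v_2}$, instantiate the second premise with the resulting substitution at index $k$, use the single reduction step to reach $k{+}1$, and descend via downward closure. The paper's proof additionally invokes downward closure on the first premise to obtain the pair decomposition ``for all $i \leq k$'', but this extra generality is not actually used, so your slightly leaner version is equivalent.
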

\begin{proof}
  \mbox{}
  \begin{tabbing}
    $\sof[i]{v}{\p{\tau_1} \tensor \p{\tau_2}}$ for all $i \leq k$ \` From first premise by downward closure \\
    $v = \pair{v_1,v_2}, \sof[i]{v_1}{\p{\tau_1}}, \sof[i]{v_2}{\p{\tau_2}}$ \` By definition \\
    $\sof[k]{[v_1/x][v_2/y]e}{\n{\sigma}}$ \` By second premise \\
    $\sof[k+1]{\letpair{x,y}{\pair{v_1,v_2}}{e}}{\n{\sigma}}$ \` By definition \\
    $\sof[k]{\letpair{x,y}{v}{e}}$ \` Since $v = \pair{v_1,v_2}$ and downward closure
  \end{tabbing}
\end{proof}

\begin{lemma}
  \[
    \infer-{\sof[k]{\return{v}}{\up \p{\tau}}}
    {\sof[k]{v}{\p{\tau}}}
  \]
\end{lemma}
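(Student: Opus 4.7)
The plan is to perform a simple case analysis on $k$, using downward closure in the one nontrivial case; no induction on $k$ is required. First, if $k = 0$, the conclusion $\sof[0]{\return{v}}{\up \p{\tau}}$ is immediate from the base clause of semantic typing for computations, which makes every computation well-typed at index $0$, and the premise is not needed at all.

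Next, if $k = k' + 1$, I would exploit the fact that $\return{v} \term$, so no reduction rule applies. By the definition of $\sof[k'+1]{-}{\up \p{\tau}}$, it then suffices to establish the terminal judgment $\sof*[k'+1]{\return{v}}{\up \p{\tau}}$, and unfolding its defining clause reduces the obligation to $\sof[k']{v}{\p{\tau}}$. The premise $\sof[k'+1]{v}{\p{\tau}}$ yields this by one application of Lemma~\ref{lm:downward-closure}(\ref{pt:dc-val}).

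There is no real obstacle here. The only subtle point worth naming is the ``cost'' of stripping $\return$: because the terminal clause for $\up \p{\tau}$ decreases the step index by one, the premise at level $k$ must be used at level $k - 1$, which is precisely where downward closure appears. This matches the design decision discussed just after Figure~\ref{fig:semantic-typing}, that observing a returned value is itself a form of observation and therefore consumes one unit of step index.
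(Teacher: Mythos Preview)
Your proof is correct and uses essentially the same idea as the paper: the definition plus one application of downward closure. The only cosmetic difference is that the paper avoids the case split on $k$ by first establishing $\sof[k+1]{\return{v}}{\up \p{\tau}}$ directly from the premise and then applying downward closure to the computation judgment, whereas you split on $k$ and apply downward closure to the value judgment instead.
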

\begin{proof}
  \mbox{}
  \begin{tabbing}
    $\sof[k]{v}{\p{\tau}}$ \` Premise \\
    $\sof[k+1]{\return{v}}{\up \p{\tau}}$ \` By definition \\
    $\sof[k]{\return{v}}{\up \p{\tau}}$ \` By downward closure \\
  \end{tabbing}
\end{proof}

\begin{lemma}
  \[
    \infer-{\sof[k]{\letup{x}{e_1}{e_2}}{\n{\sigma}}}
    {\sof[k]{e_1}{\up \p{\tau}}
      & x : \p{\tau} \models \sof[k]{e_2}{\n{\sigma}}}
  \]
\end{lemma}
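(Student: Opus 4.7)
The plan is to proceed by induction on $k$, following the template of Lemma~\ref{lm:sd-app}. The base case $k = 0$ is immediate, since $\sof[0]{e}{\n{\sigma}}$ always holds by definition.

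For the inductive step at $k+1$, I unfold the first premise $\sof[k+1]{e_1}{\up\p{\tau}}$ into its two defining alternatives: either (a) $e_1 \reduces e_1'$ with $\sof[k]{e_1'}{\up\p{\tau}}$, or (b) $e_1 \term$ and $\sof*[k+1]{e_1}{\up\p{\tau}}$, which forces $e_1 = \return{v}$ for some $v$ with $\sof[k]{v}{\p{\tau}}$. In case (a), the induction hypothesis applied to $e_1'$, combined with downward closure of the semantic context judgment on the second premise, yields $\sof[k]{\letup{x}{e_1'}{e_2}}{\n{\sigma}}$; since $\letup{x}{e_1}{e_2} \reduces \letup{x}{e_1'}{e_2}$ under the dynamics, the definition of $\sof$ then delivers $\sof[k+1]{\letup{x}{e_1}{e_2}}{\n{\sigma}}$. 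In case (b), instantiating the second premise at $v$ yields $\sof[k+1]{[v/x]e_2}{\n{\sigma}}$, and hence $\sof[k]{[v/x]e_2}{\n{\sigma}}$ by Lemma~\ref{lm:downward-closure}. Because $\letup{x}{\return{v}}{e_2} \reduces [v/x]e_2$, the definition again gives $\sof[k+1]{\letup{x}{e_1}{e_2}}{\n{\sigma}}$.

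The main obstacle is step-index bookkeeping. The returned value $v$ in the terminal subcase carries its semantic type at the sub-index $k$ rather than $k+1$, so the hypothesis $x:\p{\tau} \models \sof[k+1]{e_2}{\n{\sigma}}$ must be applicable when the substituted value is typed at a strictly lower index. This works because the semantic context judgment is downward-closed in its step index (as a direct consequence of Lemma~\ref{lm:downward-closure} on values), mirroring the treatment already implicit in the lambda introduction and $\letpair$ lemmas. Beyond downward closure and the operational rule $\letup{x}{\return{v}}{e_2} \reduces [v/x]e_2$, no further machinery is required.
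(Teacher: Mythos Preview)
Your proof is correct and follows the same approach as the paper: induction on $k$, with a case split on whether $e_1$ reduces or is terminal, handling the reduction case by the induction hypothesis plus downward closure on the second premise, and the terminal case by substitution and a single reduction step. One minor wording slip: in case~(b) you cannot literally ``instantiate the second premise at $v$'' to obtain $\sof[k+1]{[v/x]e_2}{\n{\sigma}}$, since $v$ is only typed at index $k$; the correct order (which your third paragraph spells out, and which the paper phrases as ``from second premise and downward closure'') is to first lower the open judgment to index $k$ and then instantiate, directly yielding $\sof[k]{[v/x]e_2}{\n{\sigma}}$.
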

\begin{proof}
  By induction on $k$.
  \begin{description}
  \item[Case:] $k = 0$.
    \begin{tabbing}
      $\sof[0]{\letup{x}{e_1}{e_2}}{\n{\sigma}}$ \` By definitions
    \end{tabbing}
  \item[Case:] $k > 0$.  Now we distinguish subcases on $\sof[k]{e_1}{\up \p{\tau}}$.
    \begin{description}
      \item[Subcase:] $e_1 \reduces e_1'$ and $\sof[k-1]{e_1'}{\up \p{\tau}}$.
        \begin{tabbing}
          $\sof[i]{e_1'}{\up \p{\tau}}$ for all $i \leq k-1$ \` By downward closure \\
          $x : \p{\tau} \models \sof[k-1]{e_2}{\n{\sigma}}$ \` From second premise \\
          $\letup{x}{e_1}{e_2} \reduces \letup{x}{e_1'}{e_2}$ \` By rule \\
          $\sof[k-1]{\letup{x}{e_1'}{e_2}}{\up \p{\tau}}$ \` By ind.\ hyp \\
          $\sof[k]{\letup{x}{e_1}{e_2}}{\up \p{\tau}}$ \` By definition
        \end{tabbing}
      \item[Subcase:] $e_1 \term$ and $e_1 = \return{v_1}$ with
        $\sof[k-1]{v_1}{\p{\tau}}$
        \begin{tabbing}
          $\sof[i]{v_1}{\p{\tau}}$ for all $i \leq k-1$ \` By downward closure \\
          $\sof[k-1]{[v_1/x]e_2}{\n{\sigma}}$ \` From second premise and downward closure \\
          $\sof[k]{\letup{x}{\return{v_1}}{e_2}}{\n{\sigma}}$ \` By definition \\
          $\sof[k]{\letup{x}{e_1}{e_2}}{\n{\sigma}}$ \` Since $e_1 = \return{v_1}$
    \end{tabbing}
  \end{description}
\end{description}
\end{proof}

\begin{lemma}
  \[
    \infer-{\sof[k]{\thunk{e}}{\dn \n{\sigma}}}
    {\sof[k]{e}{\n{\sigma}}}
  \]
\end{lemma}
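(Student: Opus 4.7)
The statement follows directly from the definition of semantic typing for the shift $\dn \n{\sigma}$. Recall from Figure~\ref{fig:semantic-typing} that
\[
  \sof[k]{v}{\dn \n{\sigma}} \defd \text{$v = \thunk{e'}$ and $\sof[k]{e'}{\n{\sigma}}$ for some $e'$.}
\]
So my plan is simply to unfold this definition at the conclusion, take $e' \defd e$ as the witness for the existential, and discharge the two resulting obligations: the syntactic equation $\thunk{e} = \thunk{e}$, which is reflexivity, and $\sof[k]{e}{\n{\sigma}}$, which is precisely the premise of the rule.

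Unlike the preceding lemma for $\letup{x}{e_1}{e_2}$, no induction on $k$ is needed here: the $\dn$ clause of the definition is purely inductive on the value structure (going from $\thunk{e}$ to $e$) and does not decrease, strip, or quantify over the step index, so the same index $k$ on the premise suffices for the conclusion. Nor do we need to case-split on whether $e$ is terminal or reduces, since the definition of $\sof[k]{v}{\dn \n{\sigma}}$ refers directly to the computation judgment $\sof[k]{e}{\n{\sigma}}$.

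There is no real obstacle: the lemma is essentially a definitional repackaging, capturing on the semantic side the fact that the value constructor $\thunk{\cdot}$ is exactly the introduction form for the positive shift $\dn \n{\sigma}$. In the write-up this will be a two-line proof: cite the premise $\sof[k]{e}{\n{\sigma}}$, and conclude $\sof[k]{\thunk{e}}{\dn \n{\sigma}}$ by the definition of semantic typing at $\dn \n{\sigma}$.
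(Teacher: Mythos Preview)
Your proposal is correct and matches the paper's own proof exactly: the paper simply cites the premise $\sof[k]{e}{\n{\sigma}}$ and concludes $\sof[k]{\thunk{e}}{\dn \n{\sigma}}$ by definition, in two lines. Your additional remarks about why no induction on $k$ or case analysis on $e$ is needed are accurate and a helpful sanity check, even if the paper leaves them implicit.
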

\begin{proof}
  \mbox{}
  \begin{tabbing}
    $\sof[k]{e}{\n{\sigma}}$ \` Premise \\
    $\sof[k]{\thunk{e}}{\n{\sigma}}$ \` By definition
  \end{tabbing}
\end{proof}

\begin{lemma}
  \[
    \infer-[\p{\jrule{SUB}}]
    {\sof[k]{v}{\p{\sigma}}}
    {\sof[k]{v}{\p{\tau}} & \p{\tau} \leq \p{\sigma}}
  \]
  \begin{proof}
  \mbox{}
    \begin{tabbing}
      $\sof[k]{v}{\p{\tau}}$ \` First premise \\
      $\sof[k]{v}{\p{\sigma}}$ \` By second premise and Theorem~\ref{thm:sd-subtyping}
    \end{tabbing}
  \end{proof}
\end{lemma}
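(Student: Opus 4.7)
The plan is to apply Theorem~\ref{thm:sd-subtyping} (Soundness of Subtyping) directly; the lemma is essentially an immediate corollary. Theorem~\ref{thm:sd-subtyping}(1) establishes that $\p{\tau} \leq \p{\sigma}$ implies the semantic entailment $\sof[k]{v}{\p{\tau}} \vdash \sof[k]{v}{\p{\sigma}}$ at the \emph{same} step index $k$ and for all $v$. The first premise supplies the antecedent $\sof[k]{v}{\p{\tau}}$; combining this with the instance of Theorem~\ref{thm:sd-subtyping}(1) furnished by the second premise $\p{\tau} \leq \p{\sigma}$ yields the conclusion $\sof[k]{v}{\p{\sigma}}$ in a single step.

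No step-index manipulation is required, and in particular no appeal to Lemma~\ref{lm:downward-closure} is needed. This is by design: Theorem~\ref{thm:sd-subtyping} was deliberately strengthened from the plain set-theoretic inclusion $\p{\tau} \subseteq \p{\sigma}$ of Definition~\ref{def:sem-subtyping} to the step-index-preserving entailment $\sof[k]{v}{\p{\tau}} \vdash \sof[k]{v}{\p{\sigma}}$ precisely so that a subsumption rule of this shape becomes a trivial consequence. Without that strengthening, matching step indices on the two sides would require extra care (an induction on $k$, or a cycle in the metalogic), and the present lemma would not be so immediate.

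The one small subtlety worth flagging is that the declarative and bidirectional subtyping judgments apply to arbitrary positive types, whereas the circular rules of Figure~\ref{fig:subtyping} are stated on the normal form alternating structural types and type names. But the equirecursive interpretation identifies each type name with its definition in the signature, and the extension of the subtyping judgment across that equivalence is a purely bureaucratic point rather than a genuine obstacle. The real work was done in proving Theorem~\ref{thm:sd-subtyping}; here there is no hard step at all.
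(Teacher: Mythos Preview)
Your proposal is correct and mirrors the paper's own proof: both simply invoke Theorem~\ref{thm:sd-subtyping}(1), which already delivers the step-index-preserving entailment $\sof[k]{v}{\p{\tau}} \vdash \sof[k]{v}{\p{\sigma}}$, and combine it with the first premise. Your additional remarks about why no index manipulation is needed and about the normal-form technicality are accurate elaborations, but the core argument is identical.
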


\begin{lemma}
  \label{lm:sd-last}
  \[
    \infer-{\sof[k]{\force{v}}{\n{\sigma}}}
    {\sof[k]{v}{\dn \n{\sigma}}}
  \]
\end{lemma}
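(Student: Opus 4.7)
The plan is to unfold the definition of $\sof[k]{v}{\dn \n{\sigma}}$ on the premise and then take a single reduction step under $\mathsf{force}$. By definition of semantic typing at $\dn$, the hypothesis $\sof[k]{v}{\dn \n{\sigma}}$ forces $v = \thunk{e}$ for some $e$ with $\sof[k]{e}{\n{\sigma}}$. The goal then becomes $\sof[k]{\force*{\thunk{e}}}{\n{\sigma}}$.

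I would split on whether $k = 0$ or $k > 0$. When $k = 0$, the conclusion $\sof[0]{\force*{\thunk{e}}}{\n{\sigma}}$ holds trivially from the clause that defines step-$0$ typing for any computation. When $k > 0$, write $k = k' + 1$; since $\force*{\thunk{e}} \reduces e$ by the dynamics rule for $\mathsf{force}$, and since $\sof[k]{e}{\n{\sigma}}$ implies $\sof[k']{e}{\n{\sigma}}$ by Lemma~\ref{lm:downward-closure}(\ref{pt:dc-comp}), the definition of $\sof[k'+1]{-}{\n{\sigma}}$ in the non-terminal branch immediately yields $\sof[k'+1]{\force*{\thunk{e}}}{\n{\sigma}}$, which is what we want.

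I do not expect any obstacle here: unlike Lemma~\ref{lm:sd-app} or the $\erule{\up}$ lemma, no nested induction on $k$ is needed because $\force*{\thunk{e}}$ is never terminal and always reduces in exactly one step to its witness, so a single appeal to downward closure suffices. In particular, substituting back $v = \thunk{e}$ at the end gives $\sof[k]{\force{v}}{\n{\sigma}}$ as required.
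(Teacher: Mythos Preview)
Your proposal is correct and follows the same opening step as the paper: unfold the premise $\sof[k]{v}{\dn \n{\sigma}}$ to obtain $v = \thunk{e}$ with $\sof[k]{e}{\n{\sigma}}$. The paper's proof actually stops there, leaving the remaining inference implicit; your case split on $k$ together with the reduction $\force*{\thunk{e}} \reduces e$ and downward closure is exactly the detail needed to complete the argument, so you are simply making explicit what the paper elides.
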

\begin{proof}
  \mbox{}
  \begin{tabbing}
    $\sof[k]{v}{\dn \n{\sigma}}$ \` Given \\
    $v = \thunk{e}$ and $\sof[k]{e}{\n{\sigma}}$ \` By definition
  \end{tabbing}
\end{proof}

\begin{proof}
  (of Theorem~\ref{thm:type-sound})

  \begin{description}
  \item[Case:]
    \begin{mathpar}
      \infer[\jrule{VAR}]{\Gamma \vdash x : \p{\tau}}{x{:}\p{\tau} \in \Gamma}
    \end{mathpar}
    \begin{tabbing}
      $\sof[k]{\theta}{\Gamma}$ \` Given \\
      $(v/x) \in \theta$ with $\sof[k]{v}{\p{\tau}}$ \` By definition \\
      $\sof[k]{x[\theta]}{\p{\tau}}$
    \end{tabbing}
  \item[Case:]
    \begin{mathpar}
      \infer[\jrule{NAME}]
      {\Gamma \vdash f : \n{\sigma}}
      {f : \n{\sigma} = e \in \Sigma}
      \and
      \infer[]
      {\sof[k]{f}{\n{\sigma}}}
      {\infer[]{k = 0 \vdash \sof[k]{f}{\n{\sigma}}}{}
        &
        \infer[]
        {k > 0, f \reduces e \vdash \sof[k]{e}{\n{\sigma}}}
        {\sof[k-1]{e}{\n{\sigma}}}}
    \end{mathpar}
    If $f$ has not yet been translated, we deduce $f[\theta] = \sof[k]{f}{\n{\sigma}}$
    from $f \reduces e$ and $\sof[k-1]{e}{\n{\sigma}}$ if $k > 0$.  In this case it is
    important that $k > k-1$.

    If $f$ has already been translated (that is, we are in the premise of the translation
    of this rule application), then it will be at a judgment $\sof[k']{f}{\n{\sigma}}$ for
    some $k > k'$ and we can form a valid cycle.

    This translation results in a finite circular proof for two reasons:
    \begin{enumerate}
    \item There are only finitely many definitions $f : \n{\sigma} = e \in \Sigma$.
    \item The type for $f$ is fixed to be $\n{\sigma}$, so when $f$ is encountered
      in the derivation of $\sof[k-1]{e}{\n{\sigma}}$ we can always form a valid cycle.
    \end{enumerate}
  \item[Case:]
    \begin{mathpar}
      \infer[\erule{\imp}]
      {\ctx \vdash \app{e}{v} : \n{\sigma}}
      {\ctx \vdash e : \p{\tau} \imp \n{\sigma} & \ctx \vdash v : \p{\tau}}
      \and
      \infer=[]
      {\sof[k]{(\app{e}{v})[\theta]}{\n{\sigma}}}
      {\infer[Lemma~\ref{lm:sd-app}]
        {\sof[k]{\app{(e[\theta])}{(v[\theta])}}{\n{\sigma}}}
        {\sof[k]{e[\theta]}{\p{\tau} \imp \n{\sigma}}
          & \sof[k]{v[\theta]}{\p{\tau}}}}
    \end{mathpar}
    Note the the step index $k$ remains the same in all premises.
  \item[Case:]
    \begin{mathpar}
      \infer[\irule{\imp}]
      {\ctx \vdash \lam{x}{e} : \p{\tau} \imp \n{\sigma}}
      {\ctx , x{:}\p{\tau} \vdash e : \n{\sigma}}
      \and
      \infer=[]
      {\sof[k]{(\lam{x}{e})[\theta]}{\p{\tau} \imp \n{\sigma}}}
      {\infer[Lemma~\ref{lm:sd-lam}]
        {\sof[k]{\lam{x}{e[\theta,x/x]}}{\p{\tau} \imp \n{\sigma}}}
        {\infer=[]
          {x : \p{\tau} \models \sof[k]{e[\theta,x/x]}{\n{\sigma}}}
          {\sof[k]{e[\theta,v/x]}{\n{\sigma}}\; (\forall \sof[k]{v}{\p{\tau}})}}}
    \end{mathpar}
    Note that the step index $k$ remains the same.
  \end{description}

\end{proof}

\section{Soundness and Completeness of Bidirectional Typechecking}
\label{app:bidir}

Of note, $\lvert v'\rvert=v$ and $\lvert e'\rvert=e$ used in the theorems below
refer to the same value and/or expression, but with the possibility of extra
annotations that will also be erased from the term at runtime.

\begin{theorem}[Soundness of Bidirectional Typechecking]\label{thm:bidir-sound}
  \mbox{}
  \begin{enumerate}
    \item If $\Gamma \vdash v \checks \p{\tau}$ or $\Gamma \vdash v \synths \p{\tau}$
          then there exists an $v'$ such that $\Gamma \vdash v' : \p{\tau}$
          and $\lvert v'\rvert=v$
    \item If $\Gamma \vdash e \checks \n{\sigma}$ or $\Gamma \vdash e \synths \n{\sigma}$
          then there exists an $e'$ such that $\Gamma \vdash e' : \n{\sigma}$
          and $\lvert e'\rvert=e$
\end{enumerate}
\end{theorem}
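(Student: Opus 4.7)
The plan is to proceed by a straightforward mutual induction on the structure of the four bidirectional derivations ($\Gamma \vdash v \checks \p{\tau}$, $\Gamma \vdash v \synths \p{\tau}$, $\Gamma \vdash e \checks \n{\sigma}$, $\Gamma \vdash e \synths \n{\sigma}$), producing in each case a corresponding declarative derivation of $\Gamma \vdash v' : \p{\tau}$ or $\Gamma \vdash e' : \n{\sigma}$ together with a witness that $\lvert v'\rvert = v$ (respectively $\lvert e'\rvert = e$). The erasure $\lvert \cdot \rvert$ simply removes the annotation brackets introduced by $\p{\jrule{ANNO}}$ and $\n{\jrule{ANNO}}$; it commutes with all the other term constructors, so $v'$ and $e'$ can be built compositionally from the witnesses produced by the induction hypotheses applied to the premises.

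For every bidirectional rule other than $\p{\jrule{SUB}}$, $\n{\jrule{SUB}}$, $\p{\jrule{ANNO}}$, and $\n{\jrule{ANNO}}$, the corresponding declarative rule from Figure~\ref{fig:declarative-typing} has exactly the same premises with $\checks$ and $\synths$ replaced by $:$. So in each such case I apply the induction hypothesis to every premise to obtain annotated terms $v_i'$ or $e_i'$ with the right types and correct erasures, then assemble the conclusion term out of those witnesses and apply the matching declarative rule; the erasure obligation holds because erasure is a congruence. The rules $\p{\jrule{SUB}}$ and $\n{\jrule{SUB}}$ map directly to their declarative namesakes, reusing the premise $\p{\tau} \leq \p{\sigma}$ or $\n{\tau} \leq \n{\sigma}$ verbatim.

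The only genuinely new cases are the annotation rules. For $\p{\jrule{ANNO}}$, the derivation ends with $\Gamma \vdash (v : \p{\tau}) \synths \p{\tau}$ from premise $\Gamma \vdash v \checks \p{\tau}$. The induction hypothesis gives some $v'$ with $\Gamma \vdash v' : \p{\tau}$ and $\lvert v'\rvert = v$. Since $\lvert v'\rvert = v = \lvert (v : \p{\tau})\rvert$, this same $v'$ discharges both obligations; we do not need a declarative analogue of the annotation rule because annotations are erased in the source language for declarative typing. The $\n{\jrule{ANNO}}$ case is entirely symmetric.

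I expect no real obstacle here: the two sets of rules were designed to correspond, and the only subtlety worth flagging explicitly is that the $\vdash$ judgment is defined on annotation-free terms, which is why the conclusion must quantify existentially over $v'$ and $e'$ rather than equating them with $v$ and $e$. Termination of the induction is immediate since each appeal is to a strictly smaller bidirectional derivation, and the result then follows by combining the four mutually recursive cases.
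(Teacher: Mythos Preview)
Your proposal is correct and follows exactly the approach the paper takes: a straightforward induction on the structure of the bidirectional typing derivation. The paper's own proof is the single sentence ``By straightforward induction on the structure of the typing derivation,'' and your elaboration of the cases (congruence for the structural rules, direct reuse of the subtyping premises for $\p{\jrule{SUB}}$/$\n{\jrule{SUB}}$, and passing through the inductive witness for $\p{\jrule{ANNO}}$/$\n{\jrule{ANNO}}$) is precisely what that sentence unpacks to.
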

\begin{proof}
  By straightforward induction on the structure of the typing derivation.
\end{proof}

We can also show that our bidirectional system is complete, as annotations can
always be added to make values and/or computations well-typed.

\begin{theorem}[Completeness of Bidirectional Typechecking]\label{thm:bidir-complete}
  \mbox{}
  \begin{enumerate}
    \item If $\Gamma \vdash v : \p{\tau}$ then there exists $v'$ and $v''$ s.t.
          $\Gamma \vdash v' \checks \p{\tau}$ and
          $\Gamma \vdash v'' \synths \p{\tau}$ where $\lvert v'\rvert=\lvert v''\rvert=v$
    \item If $\Gamma \vdash e : \n{\sigma}$ then there exists $e'$ and $e''$ s.t.
          $\Gamma \vdash e' \checks \n{\sigma}$ and
          $\Gamma \vdash e'' \synths \n{\sigma}$ where $\lvert e'\rvert=\vert e''\rvert=e$
  \end{enumerate}
\end{theorem}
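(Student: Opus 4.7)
The plan is to prove both parts of the theorem simultaneously by structural induction on the declarative typing derivations $\Gamma \vdash v : \p{\tau}$ and $\Gamma \vdash e : \n{\sigma}$. For each typing rule, I will exhibit both a checking and a synthesizing bidirectional derivation, along with annotated terms $v'$ (or $e'$) and $v''$ (or $e''$) whose annotation erasure yields back the original term. Reflexivity of syntactic subtyping (Theorem~\ref{thm:transitivity}) will be used freely.

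The rules split naturally into three groups. For \emph{elimination rules and the hypothesis rules} ($\jrule{VAR}$, $\jrule{NAME}$, $\erule{\imp}$, $\erule{\with}_k$, $\erule{\dn}$), the bidirectional system synthesizes directly, so the synthesizing witness $v''$ (or $e''$) is immediate from the induction hypothesis applied to subderivations in synthesizing mode where needed. The checking witness is then obtained by applying $\p{\jrule{SUB}}$ or $\n{\jrule{SUB}}$ with reflexive subtyping $\p{\tau} \leq \p{\tau}$ or $\n{\sigma} \leq \n{\sigma}$. For \emph{introduction rules} ($\irule{\tensor}$, $\irule{\one}$, $\irule{\plus}$, $\irule{\dn}$, $\irule{\imp}$, $\irule{\with}$, $\irule{\up}$) and \emph{let/match eliminators that check} ($\erule{\tensor}$, $\erule{\one}$, $\erule{\plus}$, $\erule{\up}$), the bidirectional rule is a checking rule, so the checking witness is built by applying the induction hypothesis in the appropriate mode on each subderivation (checking for principal subterms, synthesizing for the scrutinee of a let/match). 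The synthesizing witness is then obtained by wrapping the checking derivation with $\p{\jrule{ANNO}}$ or $\n{\jrule{ANNO}}$, yielding $(v' : \p{\tau}) \synths \p{\tau}$ or $(e' : \n{\sigma}) \synths \n{\sigma}$, and erasure is preserved because annotations are stripped by $|\cdot|$.

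Finally, for the \emph{subsumption rules} $\p{\jrule{SUB}}$ and $\n{\jrule{SUB}}$, where we have $\Gamma \vdash v : \p{\sigma}$ from $\Gamma \vdash v : \p{\tau}$ and $\p{\tau} \leq \p{\sigma}$, the induction hypothesis gives a synthesizing derivation $\Gamma \vdash v'' \synths \p{\tau}$; we then apply bidirectional $\p{\jrule{SUB}}$ to obtain $\Gamma \vdash v'' \checks \p{\sigma}$ (this is our $v'$), and wrap with $\p{\jrule{ANNO}}$ to obtain a synthesizing derivation for $\p{\sigma}$. In every case the constructed terms $v', v''$ differ from $v$ only by additional type annotations, so $|v'| = |v''| = v$ holds by inspection.

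The main (minor) obstacle is keeping the erasure invariant consistent across the cases where we need to apply the induction hypothesis in different modes on different subderivations, since each invocation produces its own annotated witness; care must be taken to always use the IH-provided witnesses uniformly (e.g., using the checking witnesses as subterms inside an introduction rule, and the synthesizing witnesses where the bidirectional rule demands synthesis). There is no deep mathematical content here beyond reflexivity of subtyping and the availability of the two annotation rules, which is precisely why $\p{\jrule{ANNO}}$ and $\n{\jrule{ANNO}}$ were added to the bidirectional system.
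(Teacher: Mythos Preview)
Your proposal is correct and follows exactly the approach the paper indicates: a straightforward induction on the declarative typing derivation, using the $\p{\jrule{ANNO}}$ and $\n{\jrule{ANNO}}$ rules to pass from checking to synthesizing and the subsumption rules with reflexivity of subtyping to pass from synthesizing to checking. The paper states this in a single sentence; your write-up is simply a careful unpacking of that sentence into the natural case split.
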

\begin{proof}
  By straightforward induction on the structure of the typing
  derivation and using the rules $\p{\jrule{ANNO}}$
  and $\n{\jrule{ANNO}}$ where needed.
\end{proof}

\section{Interpretation of Isorecursive Types}
\label{app:isorecursive}

As we discussed in Section~\ref{sec:isorecursive}, we can also directly interpret
isorecurisve types\textemdash\!
types that are isomorphic, embodied by \emph{fold} and \emph{unfold} operators, but
not equal to their expansions\textemdash in order to obtain a formulation for
isorecursive semantic typing, and therefore semantic subtyping, within our
equirecursive setting. Previous work has studied the relation between these two
formulations from a syntactic perspective~\cite{Abadi96lics}, via type assignment with
positive recursive typing~\cite{Urzyczyn95mfcs}, and, more recently, in relation to
semantic expressiveness~\cite{Patrignani21popl}. For our needs, we demonstrate a
semantic translation from the iso- to equi-recursive settings that showcases no
significant differences between these formulations.

\subsubsection*{Syntax}
While our focus is on an isorecursive semantic interpretation, we need to
facilitate some additional syntax for values and computations
to establish our operational semantics, introducing the \emph{fold} constructor
and \emph{unfold} destructor (for computations only), typical of isorecursive
formulations.
\begin{align*}
  \p\tau &\Coloneqq \ldots \mid \p{\alpha} \mid \mu\p{\alpha}.\,\p{\tau} \\
  v &\Coloneqq \ldots \mid \foldmu{v} \mid \foldcase{x}{v}{e} \\
  \n\sigma &\Coloneqq \ldots \mid \n{\alpha} \mid \nu\n{\alpha}.\,\n{\sigma} \\
  e &\Coloneqq \ldots \mid \foldnu{e} \mid \unfold{e}
\end{align*}

\subsubsection*{Dynamics}

In the isorecursive interpretation, two reduction rules are added
for the judgment $e \reduces e'$. As before, values do not reduce.
\begin{mathpar}
  \infer{\foldcase{x}{(\foldmu{v})}{e} \reduces \subst{v/x}{e}}{}
  \and
  \infer{\unfoldfold{e} \reduces e}{}
\end{mathpar}

For this interpretation, we also expand our set of \emph{terminal computations}
to include one additional computation, following Lemma~\ref{lm:terminal}.
\begin{mathpar}
  \infer{(\foldnu{e}) \term}{\mathstrut}
\end{mathpar}

\subsubsection*{Semantic Typing}

We extend our semantic typing definitions from Section~\ref{sec:semantic-typing}
to incorporate the isorecursive $\mathsf{fold}$, introduced for values, and
$\mathsf{unfold}$, as the elimination form of $\mathsf{fold}$ for computations.
\begin{align*}
  \sof[k]{v}{\mu\p{\alpha}.\,\p{\tau}} &\defd \mbox{$v = \foldmu{v'}$ and
                                          $\sof[k]{v'}
                                          {\subst{\mu \p{\alpha}.\, \p{\tau} / \p{\alpha}}{\p{\tau}}}$} \\
  \sof*[k+1]{e}{\nu\n{\alpha}.\,\n{\sigma}} &\defd \sof[k+1]{\unfold{e}}
                                            {\subst{\nu \n{\alpha}.\, \n{\sigma} / \n{\alpha}}{\n{\sigma}}}
\end{align*}

With the addition of this setting, we can model recursive types with explicit
constructors $\mu \p{\alpha}.\, \p{\tau}$ and $\nu \n{\alpha}.\, \n{\sigma}$.
As before, we observe computations steps according to our reduction rules.

\subsection{Recursive Types as Type Definitions}
\label{app:mu-nu-mapping}

We define a generalized translation for mapping
all recursive variables, $\alpha$, and all arbitrary recursive types,
$\mu\alpha.\ \tau$ (with the possibility of nested recursion),
to \emph{fresh} equirecursive type names $\p{t}$ and $\n{s}$.
We encode this translation and mapping over a series of steps:

\begin{enumerate}
  \item Define a translation function $\llbracket \cdot \rrbracket$,
        distinguishing recursive types and type
        variables from all other types. Each recursive type variable,
        positive and negative, is translated into a fresh type name:
        \begin{align*}
          &\qquad\llbracket\p{\alpha}\rrbracket = \p{t}\enspace\mbox{for $\p{t}$ fresh and for all $\p{\alpha}$}\\
          &\qquad\llbracket\n{\alpha}\rrbracket = \n{s}\enspace\mbox{for $\n{s}$ fresh and for all $\n{\alpha}$}
        \end{align*}
		The translation maps each $\mu$- or $\nu$-type to the corresponding type name
		and is the identity function for all other types.
		\begin{equation*}
          \llbracket\mu \p{\alpha}.\, \p{\tau}\rrbracket
          = \p{t}
          \qquad\qquad
          \llbracket\nu \n{\alpha}.\, \n{\sigma}\rrbracket
          = \n{s}
        \end{equation*}

  \item Define type names through \emph{possibly non-contractive} type
        definitions:
        \begin{enumerate}
        	\item each $\mu$-type $\mu \p{\alpha}.\, \p{\tau}$ establishes
        	the definition for the corresponding type name $\p{t}$ through the
        	definition $\p{t} = \llbracket\p{\tau}\rrbracket$;
        	\item each $\nu$-type $\nu \n{\alpha}.\, \n{\sigma}$ prescribes
        	the definition for $\n{s}$ through the
        	definition $\n{s} = \llbracket\n{\sigma}\rrbracket$.
        \end{enumerate}

  \item Collect the new type definitions in a global type signature:
        \begin{align*}
          &\Sigma = [\ldots, \p{t} = \llbracket\p{\tau}\rrbracket, \ldots,
            \n{s} = \llbracket\n{\sigma}\rrbracket, \ldots]
        \end{align*}
\end{enumerate}

\subsection{Iso- to Equi-recursive Translation}
\label{sec:iso-to-equi}

In the spirit of Ligatti et al.'s~\cite{Ligatti17toplas}'s conjecture that equirecursive subtypes
could automatically be translated into isorecursive subtypes by inserting any
``missing $\mu's$'' (in a call-by-value language), our
translation works in the other direction, from an iso- to
equi-recursive setting, by inserting unary variant records for
$\mu$ positive types and unary lazy records for
$\nu$ negative types with corresponding \emph{fold} labels.

We start with a set of translations, $\llbracket \cdot \rrbracket$, involving
$\mathsf{fold_{\mu}}$ and $\mathsf{fold_{\nu}}$ introduction and elimination
forms for values and computations:

\begin{definition}[Iso- to equi-recursive translations for values and expressions]
  \label{def:iso-equi}
  	\begin{align*}
      \llbracket \foldmu{v} \rrbracket
      &= \inj{\mathsf{fold_{\mu}}}{\llbracket v\rrbracket} \\
      \llbracket \foldcase{x}{v}{e} \rrbracket
      &= \case[\ell \in L]{\llbracket v\rrbracket}
        {{\mathsf{fold_{\mu}}}{x} => \llbracket e_{\ell}\rrbracket}\\\smallskip
      \llbracket \foldnu{e} \rrbracket
      &= \{\mathsf{fold_{\nu}} = \llbracket e\rrbracket \}\\
      \llbracket \unfold{e} \rrbracket
      &= \proj{\foldnu}{\llbracket e\rrbracket}
    \end{align*}

  This is extended compositionally to all other constructs.
\end{definition}

Going further, for this specific interpretation, we extend the generalized
translation function $\llbracket \cdot \rrbracket$ in Section~\ref{app:mu-nu-mapping}
with an additional transformation: for every positive type definition
encountered, a unary variant record is inserted into the global
signature $\Sigma_{i2e}$. Similarly, for every negative type definition, a unary lazy
record is inserted into the signature as well:
\begin{align*}
  &\Sigma_{i2e} = [\ldots, \p{t} = \plus*[]{\foldmu \colon \llbracket\p{\tau}\rrbracket}, \ldots,
  \n{s} = \with*[]{\foldnu \colon \llbracket\n{\sigma}\rrbracket}, \ldots]
\end{align*}

\begin{definition}[Iso- to equi-recursive translation for isorecursive types]
  \label{def:iso-trans}
  The translation of a (positive or negative) isorecursive type $\tau$, is the
  equirecursive type $\llbracket\tau\rrbracket$ defined over the extended global signature
  $\Sigma_{i2e}$.
\end{definition}

Now, we can show translations for some examples.

\begin{example}[Translation for recursive positive types]
  \label{ex:iso-even}
  \begin{equation*}
  	\llbracket
    {\mu\p{\ms{\alpha}}.\, \plusn{\mb{z} : \one, \mb{s} : \p{\mb{\alpha}}}}
    \rrbracket
    =
    \p{\ms{nat}}
    \qquad
    \llbracket
    \mu\p{\ms{\beta}}.\, \plusn{\mb{z} : \one, \mb{s} : \plusn{\mb{s} : \p{\ms{\beta}}}}
    \rrbracket
    =
    \p{\ms{even}}
  \end{equation*}
  \begin{align*}
    \text{where }
    \Sigma_{i2e} = \{
    \p{\ms{nat}} &= \plusn{\mb{fold_{\mu}} : \plusn{\mb{z}: \one, \mb{s} : \ms{nat}}} ,\\
    \p{\ms{even}} &= \plusn{\mb{fold_{\mu}} : \plusn{\mb{z} : \one, \mb{s} :\plusn{\mb{s} : \ms{even}}}}
    \}
  \end{align*}
  This example demonstrates an interesting subtyping relation where
  $\ms{even} \nleq \ms{nat}$ in the isorecursive setting, yet $\ms{even} \leq \ms{nat}$ in
  the equirecursive one.  It also shows how
  our equirecursive formulation supports richer subtyping properties because of variant
  records\textemdash it is difficult to see a simple compositional translation into binary
  sums that would preserve subtyping.
\end{example}

\begin{example}[Translation for a recursive negative type]
  \begin{equation*}
    \llbracket
    {\nu\n{\ms{\alpha}}.\, \withn{\mb{hd} : \up \ms{std}, \mb{tl} : \n{\ms{\alpha}}}}
    \rrbracket
    = \n{\ms{stream}}
  \end{equation*}
  \begin{align*}
    \text{where }
     \n{\ms{stream}} = \withn{\mb{fold_{\nu}} : \withn{\mb{hd} : \up \ms{std}, \mb{tl} : \ms{stream}}}
  \end{align*}
\end{example}

Next, we show that isorecursive values and expressions are well-typed
semantically if and only if they are well-typed in our equirecursive
semantic interpretation with translated types and terms.

\begin{theorem}[Semantic Type Simulation]
  \mbox{}
  \begin{enumerate}
    \item
          $\sof[k]{v}{t}$ iff \ $ \sof[k]{\llbracket v \rrbracket}{\p{\tau}}$
          for $t=\p{\tau} \in \Sigma_{i2e}$
    \item
          $\sof[k]{e}{s}$ iff \ $\sof[k]{\llbracket e \rrbracket}{\n{\sigma}}$
          for $s=\n{\sigma} \in \Sigma_{i2e}$
  \end{enumerate}
\end{theorem}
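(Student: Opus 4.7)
The plan is to prove both biconditionals by nested induction, first on the step index $k$ and then on the structure of $v$ or $e$, mirroring the definition of semantic typing. Before tackling the induction, I would establish three routine preservation lemmas about the translation $\llbracket\cdot\rrbracket$: (i) it commutes with substitution, i.e.\ $\llbracket\subst{v/x}{e}\rrbracket = \subst{\llbracket v\rrbracket/x}{\llbracket e\rrbracket}$; (ii) it simulates reduction step-for-step, so $e \reduces e'$ iff $\llbracket e\rrbracket \reduces \llbracket e'\rrbracket$ (the two extra isorecursive rules $\foldcase{x}{(\foldmu{v})}{e} \reduces \subst{v/x}{e}$ and $\unfoldfold{e} \reduces e$ are mirrored by a single $\erule{\plus}$ or $\erule{\with}$ step, respectively, on the right-hand side); and (iii) it preserves terminality, so $e \term$ iff $\llbracket e\rrbracket \term$ (in particular, $\foldnu{e}$ is terminal on the left and its image $\record{\foldnu = \llbracket e \rrbracket}$ is terminal on the right).

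With these in hand, the main induction proceeds case by case on the outermost connective of the type. Most cases (products, sums with multiple labels, shifts, function and lazy-record computations) are entirely compositional, and the inductive hypothesis applies directly at smaller values or smaller step indices. The interesting cases are the isorecursive ones. For $\mu\p\alpha.\,\p\tau$, we have $\sof[k]{v}{\mu\p\alpha.\,\p\tau}$ iff $v = \foldmu{v'}$ and $\sof[k]{v'}{\subst{\mu\p\alpha.\,\p\tau/\p\alpha}{\p\tau}}$; on the equirecursive side, $\llbracket v\rrbracket = \inj{\foldmu}{\llbracket v'\rrbracket}$ and $\p t = \plus*{\foldmu \colon \llbracket\p\tau\rrbracket} \in \Sigma_{i2e}$, so by unfolding the definition, $\sof[k]{\llbracket v\rrbracket}{\p t}$ reduces to $\sof[k]{\llbracket v'\rrbracket}{\llbracket\p\tau\rrbracket}$. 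Applying the inductive hypothesis at the strictly smaller value $v'$ closes the case, provided we know that $\llbracket\subst{\mu\p\alpha.\,\p\tau/\p\alpha}{\p\tau}\rrbracket$ and $\llbracket\p\tau\rrbracket$ denote the same semantic content. The $\nu$-case is dual, using terminality of $\foldnu{e}$ and the rule $\sof*[k+1]{e}{\nu\n\alpha.\,\n\sigma}$ iff $\sof[k+1]{\unfold{e}}{\subst{\nu\n\alpha.\,\n\sigma/\n\alpha}{\n\sigma}}$, which corresponds on the right to projecting the single field $\foldnu$ and then stepping.

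The main obstacle, as suggested above, is the interaction between the type translation and isorecursive unfolding: we need a substitution lemma for types of the form $\llbracket\subst{\mu\p\alpha.\,\p\tau/\p\alpha}{\p\tau}\rrbracket \equiv \subst{\p t/\p\alpha}{\llbracket\p\tau\rrbracket}$ where the right-hand side is the body attached to $\p t$ in $\Sigma_{i2e}$, together with its negative counterpart. Because semantic typing treats a type name $\p t$ equirecursively (unfolding by definition as $\sof[k]{v}{\p t} \defd \sof[k]{v}{\p\tau}$ for $\p t = \p\tau \in \Sigma$), this lemma turns the otherwise problematic unfolding into a transparent definitional equality, and the induction on $k$ (which strictly decreases in the $\nu$-case via observation of terminal computations of shift type) ensures well-foundedness.

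Finally, for the computation part, I would separately handle $k = 0$ (trivial in both directions), $k > 0$ with $e \reduces e'$ (appealing to the reduction-simulation lemma and the inductive hypothesis at $k-1$), and $k > 0$ with $e \term$ (appealing to terminality preservation and then dispatching on the negative type constructor, with the $\nu$-case reduced to the unfolded negative body as above). Because the translation is a bijection on the relevant structures and every clause of semantic typing is preserved strictly in both directions by the three preliminary lemmas, both implications of the iff follow uniformly from the same case analysis.
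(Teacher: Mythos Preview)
Your proposal is correct and is essentially a detailed unpacking of what the paper leaves implicit: the paper's own proof is a single sentence (``Directly, using our translation definitions in Definitions~\ref{def:iso-equi} and~\ref{def:iso-trans}; the rest follows from Section~\ref{sec:semantic-typing}''), so there is not much to compare against beyond noting that you have filled in what ``directly'' actually entails.

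That said, the extra work you identify is real and not merely bureaucratic. The paper does not state the substitution, step-for-step reduction simulation, or terminality-preservation lemmas for $\llbracket\cdot\rrbracket$, nor the type-level substitution compatibility $\llbracket\subst{\mu\p\alpha.\,\p\tau/\p\alpha}{\p\tau}\rrbracket$ versus the body recorded for $\p t$ in $\Sigma_{i2e}$; you are right that these are precisely the hinges on which the $\mu$/$\nu$ cases turn, and that the equirecursive unfolding $\sof[k]{v}{\p t} \defd \sof[k]{v}{\p\tau}$ is what makes the latter go through without further descent. Your case split for computations ($k=0$; $k>0$ with a step; $k>0$ terminal) also matches the shape of the semantic-typing definition and is the natural way to organize the argument. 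In short: same approach, but you have written the proof the paper only gestured at.
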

\begin{proof}
  Directly, using our translation definitions in~\ref{def:iso-equi}
  and~\ref{def:iso-trans}; the rest follows from Section~\ref{sec:semantic-typing}.
\end{proof}

\subsubsection*{A Note on Contractiveness}
Given our translation in definition~\ref{def:iso-trans}, we can have the following
two translations, for example:
\begin{align*}
  \llbracket
  \mu\p{\ms{\alpha}}.\,\p{\ms{\alpha}}
  \rrbracket
  &= \p{\ms{t}}
  \qquad \text{where }
  \p{\ms{t}}=\plusn{\mb{fold_{\mu}} : \p{\ms{t}}}
  \\
  \llbracket
  \nu\n{\ms{\alpha}}.\,\n{\ms{\alpha}}
  \rrbracket
  &= \n{\ms{s}}
  \qquad \text{where }
  \n{\ms{s}}=\withn{\mb{fold_{\nu}} : \n{\ms{s}}}
\end{align*}

While these isorecursive types on the left may seem to break our contractive
restriction on first glance, $\mu\p{\ms{\alpha}}.\,\p{\ms{\alpha}}$ and
$\nu\n{\ms{\alpha}}.\,\n{\ms{\alpha}}$
are not restricted in the isorecursive setting~\cite{Zhou20oopsla}. Instead,
these translations demonstrate that our restriction on being contractive in the
equirecursive formulation is preserved for any given isorecursive type by
the insertion of unary variants.

\section{Call-by-Name}
\label{app:cbn}

\begin{align*}
  \tau, \sigma &\Coloneqq \tau \imp \sigma \mid \tau_1 \tensor \tau_2 \mid \one \mid \plus*[\ell \in L]{\ell\colon \tau_\ell} \mid \with*[\ell \in L]{\ell\colon \tau_\ell} \\
  e &\Coloneqq x \begin{array}[t]{@{{} \mid {}}l@{}}
                   \lam{x}{e} \mid \app{e_1}{e_2} \\
                   \pair{e_1,e_2} \mid \letpair{x,y}{e_1}{e_2} \\
                   \pair{} \mid \letpair{}{e_1}{e_2} \\
                   \inj{j}{e} \mid \case[\ell \in L]{e_1}{{\ell}{x_\ell} => e_\ell} \\
                   \record[\ell \in L]{\ell = e_\ell} \mid \proj{j}{e}
                 \end{array}
\end{align*}

Syntactic typing for this call-by-name language is captured by the judgment $\ctx \vdash e : \tau$, and its rules are standard.
For function types, variant record types, and the corresponding terms, these rules are:
\begin{mathpar}
  \infer{\ctx \vdash x : \tau}
  {x{:}\tau \in \ctx}
  \and
  \infer{\ctx \vdash \lam{x}{e} : \tau \imp \sigma}
  {\ctx , x{:}\tau \vdash e : \sigma}
  \and
  \infer{\ctx \vdash \app{e_1}{e_2} : \sigma}
  {\ctx \vdash e_1 : \tau \imp \sigma & \ctx \vdash e_2 : \tau}
  \\
  \infer{\ctx \vdash \inj{j}{e} : \plus*[\ell \in L]{\ell\colon \tau_\ell}}
  {\text{($j \in L$)} & \ctx \vdash e : \tau_j}
  \and
  \infer{\ctx \vdash \case[\ell \in L]{e_1}{{\ell}{x_\ell} => e_\ell} : \sigma}
  {\ctx \vdash e_1 : \plus*[\ell \in L]{\ell\colon \tau_\ell} &
   \metaall{\ell \in L} \ctx , x_{\ell}{:}\tau_\ell \vdash e_\ell : \sigma}
\end{mathpar}

The operational semantics is what distinguishes the call-by-name language from a call-by-value language.
We give a small-step operational semantics using the judgments $e \reduces e'$ and $e \term$.
The rules involving functions and variant records are:
\begin{mathpar}
  \infer{\lam{x}{e} \term}{}
  \and
  \infer{\app{e_1}{e_2} \reduces \app{e'_1}{e_2}}
  {e_1 \reduces e'_1}
  \and
  \infer{\app{(\lam{x}{e'_1})}{e_2} \reduces \subst{e_2/x}{e'_1}}{}
  \\
  \infer{\inj{j}{e} \term}{}
  \and
  \infer{\case[\ell \in L]{e_1}{{\ell}{x_\ell} => e_\ell} \reduces \case[\ell \in L]{e'_1}{{\ell}{x_\ell} => e_\ell}}
  {e_1 \reduces e'_1}
  \and
  \infer{\case[\ell \in L]{(\inj{j}{e'_1})}{{\ell}{x_\ell} => e_\ell} \reduces \subst{e'_1/x_j}{e_j}}
  {\text{($j \in L$)}}
\end{mathpar}
This semantics is call-by-name because, for example, in a function application $\app{e_1}{e_2}$, the argument $e_2$ remains unevaluated when substituted into the body of an abstraction $\lam{x}{e'_1}$.
Similarly, in $\case[\ell \in L]{e_1}{{\ell}{x_\ell} => e_\ell}$, the term $e_1$ is evaluated to the form $\inj{j}{e'_1}$ for some $j \in L$.
Because $\inj{j}{e'_1}$ is terminal, $e'_1$ is not further evaluated; instead, $e'_1$ is substituted into $e_j$, the body of the $j$\textsuperscript{th} branch.

Because of the way that we handle equirecursive types and expressions, we translate signatures.
\begin{equation*}
  \begin{gathered}[t]
    \text{\emph{Signatures}} \\
    \begin{aligned}
      \cbn*{\sige} &= \sige* \\
      \cbn*{\sig , s = \sigma} &= \cbn{\sig} , \cbn{s} = \cbn{\sigma} \\
      \cbn*{\sig , f : \sigma = e} &= \cbn{\sig} , f : \cbn{\sigma} = \cbn{e}
    \end{aligned}
  \end{gathered}
\end{equation*}

Levy~\cite{Levy06hosc} proves that well-typed terms are well-typed after the translation to call-by-push-value is applied.
Our syntactic typing rules are the same as his, so the theorem carries over to our setting.
\begin{theorem}[\cite{Levy06hosc}]
 $\ctx \vdash e : \tau$ if and only if $\dn \cbn{\ctx} \vdash \cbn{e} : \cbn{\tau}$.
\end{theorem}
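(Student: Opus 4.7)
The plan is to prove both directions by structural induction on typing derivations, following Levy's original argument. Since our CBPV syntactic typing judgments agree with his, and the CBN typing rules presented above are standard, the argument transfers directly.

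For the forward direction ($\Rightarrow$), I would proceed by induction on the derivation of $\ctx \vdash e : \tau$. Each case is determined by the compositional translation $\cbn{\cdot}$ and closes by applying the corresponding CBPV introduction and/or elimination rule. For $x : \tau \in \ctx$, the translated context contains $x : \dn \cbn{\tau}$, so $\jrule{VAR}$ yields a value of positive type $\dn \cbn{\tau}$; whatever additional shifts are needed to reach $\cbn{\tau}$ (e.g., $\erule{\dn}$ when $\cbn{\tau}$ is negative, $\irule{\up}$ on top when $\cbn{\tau}$ has the form $\up(\cdots)$) are applied as the form of $\cbn{\tau}$ dictates. For $\lam{x}{e}$, extend the context with $x : \dn \cbn{\tau}$ and apply $\irule{\imp}$ to the IH. For $\app{e_1}{e_2}$, combine the IH on $e_1$ via $\erule{\imp}$ with $\irule{\dn}$ applied to the IH on $e_2$, producing the thunked argument $\thunk{\cbn{e_2}}$. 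The cases for pairs, unit, variant injection, case analysis, lazy records, and projection proceed similarly, inserting $\up$/$\dn$ shifts and the corresponding $\mathsf{return}$/$\mathsf{thunk}$ (and $\mathsf{let}$/$\mathsf{force}$) term constructors as indicated by the type translation.

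For the reverse direction ($\Leftarrow$), I would induct on the structure of $e$. Because $\cbn{\cdot}$ is compositional and syntax-directed, inversion on the CBPV derivation $\dn \cbn{\ctx} \vdash \cbn{e} : \cbn{\tau}$ is unambiguous: the outermost term constructor of $\cbn{e}$ determines which CBPV rules were used at the root, and the premises yield CBPV derivations for exactly the translations of the immediate subterms of $e$. Applying the induction hypothesis to these gives CBN typing derivations for the subterms, which I then reassemble with the matching CBN rule. I would also verify the small compatibility lemma that substitution commutes with translation, i.e.\ $\cbn{\subst{e_2/x}{e_1}} = \subst{\thunk{\cbn{e_2}}/x}{\cbn{e_1}}$, although this is only needed if one passes through a semantic equivalence; for the pure typing statement it is not strictly required.

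The main obstacle is the reverse direction, where one must confirm that no CBPV typing derivation of $\cbn{e}$ can exploit structure not present in $e$. This is where having a simple declarative typing system (without e.g.\ type-directed equalities beyond those baked into the rules) pays off: since the translation produces a fixed syntactic skeleton for each CBN constructor and the CBPV rules are syntax-directed on that skeleton, the inversion at each step pins down a unique CBN rule to apply. A secondary subtlety is bookkeeping the polarity shifts introduced at variable occurrences and at the seams between positive and negative fragments, but these match up exactly by the definition of $\cbn{\cdot}$ on types.
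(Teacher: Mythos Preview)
The paper does not actually prove this theorem: it simply cites Levy and remarks that ``our syntactic typing rules are the same, so the theorem carries over to our setting.'' Your sketch is therefore more detailed than what the paper provides, and the overall shape (structural induction on the CBN derivation for the forward direction, induction on $e$ with inversion on the CBPV derivation for the reverse) is exactly Levy's original argument.

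Two remarks. First, your treatment of variables is unnecessarily hedged. The term translation is uniform: $\cbn{x} = \force{x}$ regardless of the shape of $\tau$, so the derivation is always $\jrule{VAR}$ followed by $\erule{\dn}$. There is no case analysis on ``whatever additional shifts are needed.'' (The paper appears to contain a typo here, writing $\return{x}$; Levy's translation is $\force{x}$, which is what typechecks against $\cbn{\tau}$ when $x : \dn\cbn{\tau}$ is in the context.)

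Second, and more substantively, your reverse direction asserts that inversion on $\dn\cbn{\ctx} \vdash \cbn{e} : \cbn{\tau}$ is ``unambiguous'' because the CBPV rules are syntax-directed on the skeleton produced by the translation. That is true in Levy's original system, but the paper's declarative CBPV typing (Figure~\ref{fig:declarative-typing}) includes the subsumption rules $\p{\jrule{SUB}}$ and $\n{\jrule{SUB}}$, which are \emph{not} syntax-directed. So the last rule of a CBPV derivation need not be the one matching the head constructor of $\cbn{e}$. If the theorem is meant to hold for the paper's full system with subtyping on both sides, you must either normalize away subsumption (push it to the leaves and absorb it into variable lookups) or appeal to the separate soundness/completeness of CBN subtyping under the translation (Theorems~\ref{thm:cbn-soundness} and~\ref{thm:cbn-completeness}). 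The paper avoids this by invoking Levy's result for the subtyping-free fragment and handling the subtyping correspondence separately; your write-up should make the same separation explicit rather than claim unambiguous inversion.
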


Now, we prove that polarized subtyping on the image of Levy's call-by-name translation is sound.
We begin with an easy lemma.
\begin{lemma}\label{lem:dn-invert}
  If $\p{t} \leq \p{u}$ with $\p{t} = \dn \n{s}$, then $\p{u} = \dn \n{r}$ and
  $\n{s} \leq \n{r}$ for some $\n{r}$.
\end{lemma}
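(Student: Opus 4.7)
The plan is to prove this lemma by inversion on the last rule used in the circular derivation of $\p{t} \leq \p{u}$. Since $\p{t} = \dn \n{s}$ is a positive type name whose definition is a shifted negative type, I only need to rule out every positive subtyping rule except $\rsub{\dn}$ itself.

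First, I would inspect the positive rules in Figure~\ref{fig:subtyping}: the only rules that could conclude $\p{t} \leq \p{u}$ are $\rsub{\tensor}$, $\rsub{\one}$, $\rsub{\oplus}$, $\rsub{\dn}$, and $\p{\rsub{\bot}}$. Each of the first three rules requires that the definition of $\p{t}$ in $\Sigma$ be, respectively, a tensor product, the unit, or a variant record. Since type definitions in the normal form of $\Sigma$ are unique, each of these is incompatible with the hypothesis $\p{t} = \dn \n{s}$, so these rules are immediately ruled out.

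The only nontrivial case to dispatch is $\p{\rsub{\bot}}$, which would require $\p{t} \emp$. Here I would invoke a small side observation, read off directly from Figure~\ref{fig:empty}: the circular derivation rules for emptiness apply only when $t = \plus*[\ell \in L]{\ell : t_\ell}$ or $t = t_1 \tensor t_2$, with the note in that figure explicitly stating that there are no rules for $t = \one$ or $t = \dn s$. Hence no derivation of $(\dn \n{s}) \emp$ can be constructed, so $\p{\rsub{\bot}}$ cannot have been applied either.

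The only remaining possibility is $\rsub{\dn}$, which by its statement forces $\p{u} = \dn \n{r}$ for some $\n{r}$ with $\n{s} \leq \n{r}$, exactly as required. The sole conceptual point in the argument is the observation that $\dn$-shifted types are never empty; the rest is a routine inversion, so no induction on derivations (or on the step index) is needed.
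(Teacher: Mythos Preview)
Your proposal is correct and follows essentially the same approach as the paper: both argue by inversion on the subtyping rules, with the key observation being that $\p{t} \emp$ is not derivable when $\p{t} = \dn \n{s}$, so $\p{\rsub{\bot}}$ is excluded and only $\rsub{\dn}$ can apply. Your write-up simply makes explicit the case analysis that the paper summarizes in a single sentence.
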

\begin{proof}
  By a straightforward examination of the call-by-push-value syntactic subtyping rules,
  observing that $\p{t} \emp$ is not derivable when $\p{t} = \dn \n{s}$.
\end{proof}

The soundness theorem is then proved as follows.
\begin{proof}
  (of Theorem~\ref{thm:cbn-soundness})
  Part~\ref{item:cbn-full-sound} is easy to prove directly.
  By inversion on the body of $t$'s definition, there are several cases:
  \begin{itemize}
    \item If $t = t_1 \imp t_2$, then $\cbn{t} = \p{t_0} \imp \cbn{t_2}$, where $\p{t_0} = \dn \cbn{t_1}$
    is an auxiliary definition introduced for the normal form of type definitions.
    By inversion on $\cbn{t} \full$, we must have $\p{t_0} \emp$.
    Because $\p{t_0} = \dn \cbn{t_1}$, this case is contradictory: there is no emptiness rule for the $\dn$ shift.

  \item If $t = \with*[\ell \in L]{\ell\colon \tau_\ell}$, then $\cbn{t} = \with*[\ell \in L]{\ell\colon \cbn{\tau_\ell}}$.
    By inversion on $\cbn{t} \full$, we must have $L = \emptyset$.
    In this case, we indeed have $t \full$ by the call-by-name $\rfull{\with}$ rule.

  \item In all other cases, $\cbn{t} = \up \p{t_0}$ with $\p{t_0}$ introduced for the normal form of type definitions.
    However, there is no call-by-push-value fullness rule for $\up$ shifts, so these cases are contradictory as well.
  \end{itemize}

  Part~\ref{item:cbn-leq-sound} is proved by mapping a circular proof $\cbn{t} \leq \cbn{u}$ to a circular proof of $t \leq u$.
  We can also prove:
  \begin{equation}
    \label{eq:cbn-leq-pos-sound}\mbox{If $\cbn{t} = \up \p{t_0}$ and $\cbn{u} = \up \p{u_0}$ with $\p{t_0} \leq \p{u_0}$, then $t \leq u$}
  \end{equation}
  This is done by mapping a circular proof $\p{t_0} \leq \p{u_0}$ to a circular proof of $t \leq u$.
  This and Part~\ref{item:cbn-leq-sound} are proved simultaneously.
  \begin{itemize}
  \item Consider the case in which $\cbn{t} \leq \cbn{u}$ is derived by the $\rsub{\top}$ rule.
    \begin{equation*}
      \infer[\rsub{\top}]{\cbn{t} \leq \cbn{u}}
      {\cbn{t} = \n{\sigma} & \cbn{u} \full}
    \end{equation*}
    By part~\ref{item:cbn-full-sound}, $u \full$ in the call-by-name language.
    It follows from the $\rsub{\top}_{\jrule{N}}$ rule that $\cbn{t} \leq \cbn{u}$.

  \item
    Consider the case in which $\cbn{t} \leq \cbn{u}$ is derived by the $\rsub{\imp}$ rule.
    By inversion on $\cbn{t}$ and $\cbn{u}$, this can only happen if $t = t_1 \imp t_2$ and $u= u_1 \imp u_2$, with $\cbn{t} = \p{t_0} \imp \cbn{t_2}$ and $\cbn{u} = \p{u_0} \imp \cbn{u_2}$, where $\p{t_0} = \dn \cbn{t_1}$ and $\p{u_0} = \dn \cbn{u_1}$ are auxiliary definitions introduced for the normal form of type definitions.
    \begin{equation*}
      \infer[\rsub{\imp}]{\cbn{t} \leq \cbn{u}}
      {\cbn{t} = \p{t_0} \imp \cbn{t_2} &
       \cbn{u} = \p{u_0} \imp \cbn{u_2} &
       \p{u_0} \leq \p{t_0} &
       \cbn{t_2} \leq \cbn{u_2}}
    \end{equation*}
    By Lemma~\ref{lem:dn-invert}, $\cbn{u_1} \leq \cbn{t_1}$.
    By transforming according to part~\ref{item:cbn-leq-sound}, we have both $u_1 \leq t_1$ and $t_2 \leq u_2$.
    From these we can derive $t \leq u$ with the $\rsub{\imp}_{\jrule{N}}$ rule.

  \item
    Consider the case in which $\cbn{t} \leq \cbn{u}$ is derived by the $\rsub{\up}$ rule.
    \begin{equation*}
      \infer[\rsub{\up}]{\cbn{t} \leq \cbn{u}}
      {\cbn{t} = \up \p{t_0} &
       \cbn{u} = \up \p{u_0} &
       \p{u_0} \leq \p{t_0}}
    \end{equation*}
    By item~\ref{eq:cbn-leq-pos-sound}, $t \leq u$.

  \item
    Consider the case in which $\cbn{t} = \up \p{t_0}$ and $\cbn{u} = \up \p{u_0}$ with $\p{t_0} \leq \p{u_0}$ being derived by the $\rsub{\plus}$ rule.
    In this case, $t = \plus*[\ell \in L]{\ell\colon t_\ell}$ and $u = \plus*[j \in J]{j\colon u_j}$, with $\p{t_0} = \plus*[\ell \in L]{\ell\colon \p{t_\ell}}$ and $\p{t_\ell} = \dn \cbn{t_\ell}$ and $\p{u_0} = \plus*[j \in J]{j\colon \p{u_j}}$ and $\p{u_j} = \dn \cbn{u_j}$ are auxiliary definitions introduced for the normal of type definitions.
    \begin{mathpar}
      \infer[\rsub{\plus}]
      {\p{t_0} \leq \p{u_0}}
      {\p{u_0} = \plus*[j \in J]{j\colon \p{u_j}} &
       \begin{array}[b]{@{}c@{}}
         \p{t_0} = \plus*[\ell \in L]{\ell\colon \p{t_\ell}} \\
         \metaall{\ell \in L \setminus J} \p{t_\ell} \emp
       \end{array} &
       \metaall{\ell \in L \cap J} \p{t_\ell} \leq \p{u_\ell}}
    \end{mathpar}
    Observe that $\p{t_\ell} \emp$ is not derivable for any $\ell \in L \setminus J$ because $\p{t_\ell} = \dn \cbn{t_\ell}$.
    Therefore, $L \subseteq J$ must hold.
    By Lemma~\ref{lem:dn-invert}, $\cbn{t_\ell} \leq \cbn{u_\ell}$ for all $\ell \in L \cap J = L$.
    By transforming according to part~\ref{item:cbn-leq-complete}, we have $t_\ell \leq u_\ell$ for all $\ell \in L$.
    From these we can derive $t \leq u$ with the $\rsub{\plus}_{\jrule{N}}$ rule.

  \item
    Consider the case in which $\cbn{t} = \up \p{t_0}$ and $\cbn{u} = \up \p{u_0}$ with $\p{t_0} \leq \p{u_0}$ being derived by the $\p{\rsub{\bot}}$ rule.
    In this case, $\p{t_0} \emp$.
    There are three subcases.
    \begin{itemize}
    \item If $t = \plus*[\ell \in L]{\ell\colon t_\ell}$, then $\p{t_0} = \plus*[\ell \in L]{\ell\colon \p{t_\ell}}$, with $\p{t_\ell} = \dn \cbn{t_\ell}$ being auxiliary definitions introduced for the normal form of type definitions.
      None of $\p{t_\ell} \emp$ are derivable because there is no call-by-push-value emptiness rule for the $\dn$ shift.
      Therefore, $\p{t_0} \emp$ is derivable only if $L = \emptyset$.
      In this case, the call-by-name $\rsub{\bot}_{\jrule{N}}$ rule derives $t \leq u$.

    \item The subcase in which $t = t_1 \tensor t_2$ is similarly impossible.

    \item If $t = \one$, then $\p{t_0} = \one$.
      The judgment $\p{t_0} \emp$ is not derivable, as there is no call-by-push-value emptiness rule for $\one$.
    \end{itemize}
  \end{itemize}
  The remaining cases are handled similarly.
\end{proof}

Next, we prove that polarized subtyping on the image of Levy's call-by-name translation is complete.
\begin{proof} (of Theorem~\ref{thm:cbn-completeness})
  Part~\ref{item:cbn-full-complete} is easy to prove directly.
  There is exactly one case: $t \full$ because $t = \with*{\,}$.
  In this case, $\cbn{t} = \with*{\,}$ and so $\cbn{t} \full$.

  Part~\ref{item:cbn-leq-complete} is proved by mapping a circular proof $t \leq u$ to a circular proof of $\cbn{t} \leq \cbn{u}$.
  The image of each call-by-name subtyping rule is derivable with the call-by-push-value syntactic subtyping rules.
  For example, consider the following call-by-name subtyping rule for function types.
  \begin{equation*}
    \infer[\rsub{\imp}_{\jrule{N}}]{t \leq u}
    {t = t_1 \imp t_2 & u = u_1 \imp u_2 &
     u_1 \leq t_1 & t_2 \leq u_2}
  \end{equation*}
  The translation of $t$ is $\cbn{t} = \p{t_0} \imp \cbn{t_2}$, where $\p{t_0} = \dn \cbn{t_1}$ is an auxiliary definition introduced by internal renaming; the translation of $u$ is analogous.
  The call-by-value subtyping rule $\rsub{\imp}_{\jrule{N}}$ is then derivable as:
  \begin{equation*}
    \infer[\!\!\rsub{\imp}]{\cbn{t} \leq \cbn{u}}
    {\cbn{t} = \p{t_0} \!\!\imp \cbn{t_2} & \!\!\!\cbn{u} = \p{u_0} \!\!\imp \cbn{u_2} &
      \infer[\!\!\!\rsub{\dn}]{\p{u_0} \leq \p{t_0}}
      {\p{u_0} = \dn \cbn{u_1} \!\!\!& \p{t_0} = \dn \cbn{t_1} \!\!\!&
       \cbn{u_1} \leq \cbn{t_1}} &
       \!\!\!\!\!\!\cbn{t_2} \leq \cbn{u_2}}
  \end{equation*}

  As another example, consider the $\rsub{\bot}_{\jrule{N}}$ call-by-name subtyping rule.
  \begin{equation*}
    \infer[\rsub{\bot}_{\jrule{N}}]{t \leq u}
    {t = \plus*{\,} & u = \sigma}
  \end{equation*}
  The translation of $t$ is $\cbn{t} = \up \p{t_0}$, where $\p{t_0} = \plus*{\,}$ is an auxiliary definition introduced by internal renaming.
  The call-by-name subtyping rule $\rsub{\bot}_{\jrule{N}}$ is then derivable as:
  \begin{equation*}
    \infer[\n{\rsub{\bot}}]{\cbn{t} \leq \cbn{u}}
    {\cbn{t} = \up \p{t_0} &
      \infer[\remp{\plus}]{\p{t_0} \emp}
      {\p{t_0} = \plus*{\,}} &
       \cbn{u} = \n{\sigma}}
  \end{equation*}

  The other cases are handled similarly.
\end{proof}

\section{Call-by-Value}
\label{app:cbv}

\begin{align*}
  \tau, \sigma &\Coloneqq \tau \imp \sigma \mid \tau_1 \tensor \tau_2 \mid \one \mid \plus*[\ell \in L]{\ell\colon \tau_\ell} \mid \with*[\ell \in L]{\ell\colon \sigma_\ell} \\
  e &\Coloneqq x \begin{array}[t]{@{{} \mid {}}l@{}}
                   \lam{x}{e} \mid \app{e_1}{e_2} \\
                   \pair{e_1,e_2} \mid \letpair{x,y}{e_1}{e_2} \\
                   \pair{} \mid \letpair{}{e_1}{e_2} \\
                   \inj{j}{e} \mid \case[\ell \in L]{e_1}{{\ell}{x_\ell} => e_\ell} \\
                   \record[\ell \in L]{\ell = e_\ell} \mid \proj{j}{e}
                 \end{array}
\end{align*}

We again use a small-step operational semantics that relies on the judgments $e \reduces e'$ and $e \val$.
The rules involving functions are:
\begin{mathpar}
  \infer{\lam{x}{e} \val}{}
  \and
  \infer{\app{e_1}{e_2} \reduces \app{e_1}{e'_2}}
  {e_2 \reduces e'_2}
  \and
  \infer{\app{e_1}{e_2} \reduces \app{e'_1}{e_2}}
  {e_2 \val & e_1 \reduces e'_1}
  \and
  \infer{\app{(\lam{x}{e_1})}{e_2} \reduces \subst{e_2/x}{e_1}}
  {e_2 \val}
\end{mathpar}
This semantics is call-by-value because, for example, in a function application $\app{e_1}{e_2}$, the argument $e_2$ is evaluated to a value first: only values are ever substituted into the body of an abstraction $\lam{x}{e_1}$.

\begin{theorem}[\cite{Levy06hosc}]
 $\ctx \vdash e : \tau$ if and only if $\cbv{\ctx} \vdash \cbv{e} : \up \cbv{\tau}$.
\end{theorem}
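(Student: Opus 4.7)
The plan is to prove both directions by induction, following Levy's original template, which carries over because our syntactic typing rules coincide with his.

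For soundness ($\Rightarrow$), I would proceed by induction on the derivation $\ctx \vdash e : \tau$. Each CBV typing rule is discharged by assembling the translated judgment from the inductive hypotheses. For variables, $\return{x}$ is typed by $\jrule{VAR}$ followed by $\irule{\up}$. For abstractions, the IH gives $\cbv{\ctx}, x{:}\cbv{\tau} \vdash \cbv{e} : \up \cbv{\sigma}$; I then apply $\irule{\imp}$, $\irule{\dn}$, and $\irule{\up}$ in sequence to land at $\up \cbv*{\tau \imp \sigma} = \up \dn (\cbv{\tau} \imp \up \cbv{\sigma})$. For application, I use two instances of $\erule{\up}$ to bind $x$ and $f$ to values, then $\erule{\dn}$ (via $\force{f}$) and $\erule{\imp}$ to apply the function. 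The pair, unit, variant, and lazy-record cases are analogous, with each shift in the translation consumed by the corresponding $\up$/$\dn$ intro or elim rule.

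For completeness ($\Leftarrow$), I would proceed by structural induction on the CBV expression $e$. Because $\cbv{\cdot}$ is syntax-directed, each CBV form translates to a uniquely-shaped CBPV term, so inversion on the CBPV typing derivation of $\cbv{e}$ exposes subderivations that, after stripping the scaffolding introduced by the translation (the outer $\return{}$, the inner $\force{}$, the $\letup{}{}{}$ binders, etc.), give precisely the typings of $\cbv{e_i}$ for the subterms $e_i$. Applying the IH to each then reassembles a CBV derivation of $\ctx \vdash e : \tau$.

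The main obstacle is the interaction with the declarative subsumption rules $\p{\jrule{SUB}}$ and $\n{\jrule{SUB}}$: the CBPV derivation may insert arbitrary subtyping steps that have no CBV analogue. To handle this, I would prove an inversion lemma stating that if $\cbv{\ctx} \vdash \cbv{e} : \up \cbv{\tau}$ then there is a derivation in which subsumption is pushed to the leaves and, on translated types, any subtyping at a leaf is reflexivity modulo equirecursive unfolding (since the image of $\cbv{\cdot}$ is rigidly shaped, alternating $\up$/$\dn$ in a fixed pattern). A useful auxiliary fact here is that $\p{t} \leq \cbv{\tau}$ for $\p{t}$ in normal form forces $\p{t}$ to have the same outer constructor as $\cbv{\tau}$, so the CBV shape is recoverable. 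With this lemma in hand, the inductive step goes through and the theorem reduces to the statement already established by Levy~\cite{Levy06hosc}.
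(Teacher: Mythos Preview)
The paper does not give its own proof of this theorem: it is stated as a result of Levy~\cite{Levy06hosc}, with the one-line justification that ``our syntactic typing rules are the same as his, so his theorem carries over.'' There is no inductive argument in the paper to compare against; your proposal is substantially more detailed than anything the paper provides.

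Your sketch for the forward direction is fine and is essentially Levy's argument. For the backward direction, however, you have identified a real wrinkle that the paper glosses over: the paper's CBPV declarative system includes the subsumption rules $\p{\jrule{SUB}}$ and $\n{\jrule{SUB}}$, which Levy's original system does not, so the claim that ``our syntactic typing rules are the same as his'' is not literally true. Your proposed fix---normalizing subsumption to the leaves and arguing that on the image of $\cbv{\cdot}$ it collapses to reflexivity---does not obviously work as stated: CBPV subtyping on types of the form $\up\cbv{\tau}$ need not be trivial (e.g., width subtyping on a variant inside $\cbv{\tau}$ is a genuine non-reflexive step), so a CBPV derivation of $\cbv{\ctx} \vdash \cbv{e} : \up\cbv{\tau}$ could in principle use subtyping nontrivially. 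The honest reading of the paper is that this theorem is meant to be taken in Levy's original subtyping-free setting (or equivalently with only reflexive uses of subsumption), with the interaction between the translation and subtyping handled separately by Theorems~\ref{thm:cbv-soundness} and~\ref{thm:cbv-completeness}. If you want a self-contained statement in the paper's system, you would need either to restrict the CBPV side to subsumption-free derivations or to equip the CBV source language with its own subsumption rule using the CBV subtyping of Figure~\ref{fig:call-by-value}; the paper does neither explicitly.
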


We now prove that polarized subtyping on the image of Levy's call-by-value translation is sound with respect to Figure~\ref{fig:call-by-value}.
We begin with an easy lemma.

\begin{lemma}\label{lem:up-invert}
  If $\n{s} \leq \n{r}$ with $\n{s} = \up \p{t}$, then either $\p{t} \emp$ or $\n{r} = \up \p{t}$ and $\p{s} \leq \p{r}$ for some $\p{r}$.
\end{lemma}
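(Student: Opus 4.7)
My plan is a direct inversion on the derivation of $\n{s} \leq \n{r}$. Because $\n{s} = \up \p{t}$, the structural rules $\rsub{\imp}$ and $\rsub{\with}$ of Figure~\ref{fig:subtyping} are immediately excluded, since each requires the left-hand side to be a function or lazy-record type rather than an $\up$-shift. That leaves only three candidates for the final rule: $\rsub{\up}$, $\n{\rsub{\bot}}$, and $\rsub{\top}$; reading off the premises of each should produce exactly the stated disjunction (modulo a bound-variable relabeling in the excerpt).

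Two of the three cases are essentially immediate and mirror the proof of Lemma~\ref{lem:dn-invert}. If the last rule is $\rsub{\up}$, its premises directly supply some $\p{u}$ with $\n{r} = \up \p{u}$ and $\p{t} \leq \p{u}$, yielding the non-empty alternative. If the last rule is $\n{\rsub{\bot}}$, its side conditions force $\n{s} = \up \p{t}$ (which we already have) and $\p{t} \emp$, yielding the empty alternative. In both situations the rule's structure literally is the disjunct we want, so nothing computational is required.

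The main obstacle is the $\rsub{\top}$ case. Unlike $\dn$-inversion, where the absence of an emptiness rule for $\dn$ shifts rules out $\p{\rsub{\bot}}$ outright, here the right-hand side $\n{r}$ can genuinely be full without having the shape $\up \p{u}$ (for example, $\n{r} = \with*{\,}$, or $\n{r} = u_1 \imp r_2$ with $u_1 \emp$). I expect the intended resolution is that the conclusion tacitly admits a third disjunct $\n{r} \full$, which the enclosing call-by-value soundness argument dispatches in parallel with the $\rsub{\top}$ case for call-by-push-value itself, analogously to how the call-by-name proof handles $\rsub{\top}$ through part~\ref{item:cbn-full-sound} of Theorem~\ref{thm:cbn-soundness}. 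Once that case is admitted, the whole inversion collapses to a mechanical inspection of three rule shapes, with no induction or circular reasoning needed.
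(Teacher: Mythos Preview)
Your approach is exactly the paper's: the proof given there is the single sentence ``By a straightforward examination of the call-by-push-value subtyping rules,'' i.e.\ the very rule-by-rule inversion you outline.

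Your observation about $\rsub{\top}$ is well taken and is a genuine wrinkle the paper's one-line proof (and its typo-laden statement) glosses over. Your proposed fix of adding a third disjunct $\n{r}\full$ is sound, but there is a cheaper route available in the only place the lemma is invoked: in the proof of Theorem~\ref{thm:cbv-soundness} the right-hand side is always $\n{u_3} = \up \cbv{u_2}$, and since there is no fullness rule for $\up$, the case $\n{r}\full$ is vacuous there. So you can either strengthen the conclusion as you suggest, or strengthen the hypothesis to assume $\n{r} = \up \p{u}$ (which is how the lemma is actually used), after which $\rsub{\top}$ is ruled out and only the two disjuncts remain.
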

\begin{proof}
  By a straightforward examination of the call-by-push-value subtyping rules.
\end{proof}

Now we prove the main soundness theorem.

\begin{proof} (of Theorem~\ref{thm:cbv-soundness})
  Part~\ref{item:cbv-empty-sound} is easy to prove directly.
  By inversion on the body of $t$'s definition, there are several cases:
  \begin{itemize}
  \item If $t = t_1 \imp t_2$, then $\cbv{t} = \dn \n{t_0}$, where $\n{t_0} = \cbv{t_1} \imp \n{t_3}$ and $\n{t_3} = \up \cbv{t_2}$ are auxiliary definitions introduced for the normal form of type definitions.
    Because $\cbv{t} = \dn \n{t_0}$, this case is contradictory: there is no emptiness rule for the $\dn$ shift.
  \item The case for $t = \with*[\ell \in L]{\ell\colon t_\ell}$ is similar.

  \item In all other cases, $\cbv{t}$ proceeds homomorphically and the call-by-push-value emptiness rules have corresponding call-by-value rules.
  \end{itemize}

  Part~\ref{item:cbv-leq-sound} is proved by mapping a circular proof of $\cbv{t} \leq \cbv{u}$ to a circular proof of $t \leq u$. We can also similarly show:
  \begin{equation}
    \label{eq:cbv-leq-neg-sound}\mbox{If $\cbv{t} = \dn \n{t_0}$ and $\cbv{u} = \dn \n{u_0}$ with $\n{t_0} \leq \n{u_0}$, then $t \leq u$}.
  \end{equation}
  \begin{itemize}
  \item Consider the case in which $\cbv{t} \leq \cbv{u}$ is derived by the $\p{\rsub{\bot}}$ rule.
    \begin{equation*}
      \infer[\rsub{\top}]{\cbv{t} \leq \cbv{u}}
      {\cbv{t} \emp & \cbv{u} = \n{\sigma}}
    \end{equation*}
    By part~\ref{item:cbv-empty-sound}, $t \emp$ in the call-by-value language.
    It follows from the $\rsub{\bot}_{\jrule{V}}$ rule that $t \leq u$.

  \item
    Consider the case in which $\cbv{t} = \dn \n{t_0}$ and $\cbv{u} = \dn \n{u_0}$, with $\n{t_0} \leq \n{u_0}$ being derived by the $\rsub{\imp}$ rule.
    By inversion on $\cbv{t}$ and $\cbv{u}$, this can only happen if $t = t_1 \imp t_2$ and $u= u_1 \imp u_2$, where $\n{t_0} = \cbv{t_1} \imp \n{t_3}$ and $\n{t_3} = \up \cbv{t_2}$ and $\n{u_0} = \cbv{u_1} \imp \n{u_3}$ and $\n{u_3} = \up \cbv{u_2}$ are auxiliary definitions introduced for the normal form of type definitions.
    \begin{equation*}
      \infer[\rsub{\imp}]{\n{t_0} \leq \n{u_0}}
      {\n{t_0} = \cbv{t_1} \imp \n{t_3} &
       \n{u_0} = \cbv{u_1} \imp \n{u_3} &
       \cbv{u_1} \leq \cbv{t_1} &
       \n{t_3} \leq \n{u_3}}
    \end{equation*}
    By Lemma~\ref{lem:up-invert}, either $\cbv{t_2} \emp$ or $\cbv{t_2} \leq \cbv{u_2}$.
    In the former case, $t_2 \emp$, and we have $t_2 \leq u_2$ by the $\rsub{\bot}_{\jrule{V}}$ rule.
    In the latter case, by transforming according to part~\ref{item:cbv-leq-sound}, we have $t_2 \leq u_2$.
    In both cases, by transforming according to part~\ref{item:cbv-leq-sound}, we have $u_1 \leq t_1$.
    From these we can derive $t \leq u$ with the $\rsub{\imp}_{\jrule{V}}$ rule.

  \item
    Consider the case in which $\cbv{t} = \dn \n{t_0}$ and $\cbv{u} = \dn \n{u_0}$, with $\n{t_0} \leq \n{u_0}$ being derived by the $\rsub{\top}$ rule.
    By inversion on $\cbv{t}$ and $\cbv{u}$, this can only happen in cases where $t$ and $u$ are either function types or lazy record types.
    If both $t$ and $u$ are lazy record types, then $t \leq u$ is derivable by $\rsub{\with}_{\jrule{V}}$.
Otherwise, $t \leq u$ is derivable by one of the $\rsub{\top}^{\imp\imp}_{\jrule{V}}$, $\rsub{\top}^{\with\imp}_{\jrule{V}}$, or $\rsub{\top}^{\imp\with}_{\jrule{V}}$ rules.
  \item
    Consider the case in which $\cbv{t} \leq \cbv{u}$ is derived by the $\rsub{\dn}$ rule.
    \begin{equation*}
      \infer[\rsub{\dn}]{\cbv{t} \leq \cbv{u}}
      {\cbv{t} = \dn \n{t_0} &
       \cbv{u} = \dn \n{u_0} &
       \n{t_0} \leq \n{u_0}}
   \end{equation*}

    By item~\ref{eq:cbv-leq-neg-sound}, $t \leq u$.
  \end{itemize}
  The remaining cases are handled similarly.
\end{proof}

Next, we prove that polarized subtyping on the image of Levy's call-by-value translation is complete with respect to Figure~\ref{fig:call-by-value}.

\begin{proof} (of Theorem~\ref{thm:cbv-completeness})
  Part~\ref{item:cbv-empty} is proved by mapping a circular proof of $t \emp$ to a circular proof of $\cbv{t} \emp$.
  As an example, consider the case in which $t \emp$ is derived by the $\remp{\plus}_{\jrule{V}}$ rule:
  \begin{equation*}
    \infer[\remp{\plus}_{\jrule{V}}]{t \emp}
    {t = \plus*[\ell \in L]{\ell\colon t_\ell} &
    \metaall{\ell \in L} t_\ell \emp}
  \end{equation*}
  Transforming each circular proof of $t_\ell \emp$ according to part~\ref{item:cbv-empty}, we have $\cbv{t} \emp$ for each $\ell \in L$.
  Because  $\cbv{t} = \plus*[\ell \in L]{\ell\colon \cbv{t_\ell}}$, we can derive $\cbv{t} \emp$ with the $\remp{\plus}$ rule.
  The other cases are similar.

  Part~\ref{item:cbv-leq} is proved by mapping a circular proof $t \leq u$ to a circular proof of $\cbv{t} \leq \cbv{u}$.
  The image of each call-by-value subtyping rule is derivable with the call-by-push-value syntactic subtyping rules.
  For example, consider the following call-by-value subtyping rule for function types.
  \begin{equation*}
    \infer[\rsub{\imp}_{\jrule{V}}]{t \leq u}
    {t = t_1 \imp t_2 & u = u_1 \imp u_2 &
     u_1 \leq t_1 & t_2 \leq u_2}
  \end{equation*}
  The translation of $t$ is $\cbv{t} = \dn \n{t_0}$ where $\n{t_0} = \cbv{t_1} \imp \n{t_3}$ and $\n{t_3} = \up \cbv{t_2}$ are auxiliary definitions introduced for the normal form of type definitions; the translation of $u$ is analogous.
  The call-by-value subtyping rule $\rsub{\imp}_{\jrule{V}}$ is then derivable as:
  \begin{mathpar}
    \infer[\rsub{\up}]{\n{t_3} \leq \n{u_3}}
    {\n{t_3} = \up \cbv{t_2} \and \n{u_3} = \up \cbv{u_2} \and
     \cbv{t_2} \leq \cbv{u_2}}
    \and
    \infer[\rsub{\imp}]{\n{t_0} \leq \n{u_0}}
    {\n{t_0} = \cbv{t_1} \imp \n{t_3} & \n{u_0} = \cbv{u_1} \imp \n{u_3} \and
     \cbv{u_1} \leq \cbv{t_1} \and \infer[]{\n{t_3} \leq \n{u_3}}{}
    }
    \and
    \infer[\rsub{\dn}]{\cbv{t} \leq \cbv{u}}
    {\cbv{t} = \dn \n{t_0} \and \cbv{u} = \dn \n{u_0} \and
      \infer[]{\n{t_0} \leq \n{u_0}}{}}
  \end{mathpar}

  The other cases are handled similarly.
\end{proof}

\end{document}